	\RequirePackage{silence} 
    		\WarningFilter{scrbook}{Usage of package `titlesec'}%
    		\WarningFilter{titlesec}{Non standard sectioning command detected}%
	\documentclass[10pt,paper=b5,footinclude,headinclude]{scrbook} 
  \pdfoutput=1

\PassOptionsToPackage{utf8}{inputenc}
  \usepackage{inputenc}

\PassOptionsToPackage{T1}{fontenc} 
  \usepackage{fontenc}

\PassOptionsToPackage{
  drafting=false,    
  tocaligned=false, 
  dottedtoc=true,  
  eulerchapternumbers=true, 
  linedheaders=true,       
  floatperchapter=true,     
  eulermath=false,  
  beramono=true,    
  palatino=false,    
  style=arsclassica 
}{classicthesis}

\newcommand{\myTitle}{Current groups and the Hamiltonian anomaly}
\newcommand{\myName}{Ossi Niemimäki}

\newcommand{\myFaculty}{Faculty of Science}
\newcommand{\myDepartment}{Department of Mathematics and Statistics}
\newcommand{\myUni}{University of Helsinki}

\newcommand{\myDefence}{Doctoral dissertation to be presented for public examination with the permission of the Faculty of Science of the University of Helsinki, in Auditorium~107, Athena Building, on the 27th of May 2020 at 14 o'clock.}

\providecommand{\mLyX}{L\kern-.1667em\lower.25em\hbox{Y}\kern-.125emX\@}

\newcommand{\eg}{e.\,g.}


\usepackage[drafting=false]{classicthesis}


\usepackage[ttscale=0.8]{libertine}
	\useosf				


\usepackage{url}
\usepackage{hyperref}
\usepackage{xcolor}

\hypersetup{%
  colorlinks=true, linktocpage=true, pdfstartpage=3, pdfstartview=FitV,%
  breaklinks=true, pageanchor=true,%
  pdfpagemode=UseNone, %
  plainpages=false, bookmarksnumbered, bookmarksopen=true, bookmarksopenlevel=1,%
  hypertexnames=true, pdfhighlight=/O,
  urlcolor=CTurl, linkcolor=CTlink, citecolor=CTcitation, 
  pdftitle={\myTitle},%
  pdfauthor={\textcopyright\ \myName, \myUni, \myFaculty},%
  pdfsubject={},%
  pdfkeywords={},%
  pdfcreator={pdfLaTeX},%
  pdfproducer={LaTeX with hyperref and classicthesis}%
}

\makeatletter
\@ifpackageloaded{babel}%
  {%
    \addto\extrasamerican{%
    }%
    \addto\extrasngerman{%
    }%
      %
    }{\relax}
\makeatother
%

\usepackage[backend=biber,style=alphabetic,maxbibnames=4]{biblatex}
%
\addbibresource{teesipiplia.bib}



\usepackage{csquotes}
\usepackage[english]{babel}

\usepackage[b5paper,inner=12mm,outer=30mm,top=24mm,bottom=30mm,bindingoffset=8mm]{geometry}

\usepackage{marginnote}

\usepackage{ragged2e}


\usepackage{amsfonts}
\usepackage{amsmath}
\usepackage{amssymb}
\usepackage{amsthm}
\usepackage{mathtools}

\usepackage[super]{nth}

\usepackage{slashed}
\usepackage[normalem]{ulem}

\usepackage{enumitem}
\setlist[enumerate]{itemsep=0.25\baselineskip}

\usepackage{array}


\usepackage{mdframed}
\usepackage{thmtools}


\usepackage{etoolbox}
\makeatletter
\patchcmd{\ttlh@hang}{\parindent\z@}{\parindent\z@\leavevmode}{}{}
\patchcmd{\ttlh@hang}{\noindent}{}{}{}
\makeatother
\DeclareOldFontCommand{\bf}{\normalfont\bfseries}{\mathbf}

\usepackage{graphicx}
\usepackage{tikz}
  \usetikzlibrary{arrows,calc,decorations.text,positioning,shapes.multipart,tikzmark}
\usepackage{tikz-cd}

\let\oldFootnote\footnote
\newcommand\nextToken\relax
\renewcommand\footnote[1]{%
    \oldFootnote{#1}\futurelet\nextToken\isFootnote}
\newcommand\isFootnote{%
    \ifx\footnote\nextToken\textsuperscript{,}\fi}

\usepackage[makeindex]{imakeidx}
\makeindex

\newcommand{\idx}[1]{\index{#1|linkki}}
\newcommand{\idxsee}[2]{\index{#1|see {#2}}}
\newcommand{\idxalku}[1]{\index{#1|(linkki}}
\newcommand{\idxloppu}[1]{\index{#1|)linkki}}



\newcommand{\mots}[1]{\texorpdfstring{${#1}$}{}}

\newcommand{\R}{\mathbb{R}}
\newcommand{\C}{\mathbb{C}}

\newcommand{\N}{\mathbb{N}}

\newcommand{\Z}{\mathbb{Z}}
\newcommand{\dom}{\textrm{dom}} 
\newcommand{\Hs}{\mathcal{H}} 
\newcommand{\I}{\mathcal{I}} 
\newcommand{\J}{\mathcal{J}} 
\newcommand{\M}{\mathcal{M}} 
\newcommand{\Ca}{\mathcal{C}} 
\newcommand{\bnd}[1]{\mathcal{B}(#1)} 
\newcommand{\map}[1]{\ensuremath{\textrm{Map}(#1)}} 
\newcommand{\A}{\mathcal{A}} 
\newcommand{\U}{\mkern.5\thinmuskip\mathcal{U}} 
\newcommand{\V}{\mkern.5\thinmuskip\mathcal{V}} 
\newcommand{\Os}{\mkern.5\thinmuskip\mathcal{O}} 
\newcommand{\sheaf}[1]{\underline{#1}} 
\newcommand{\forms}{\ensuremath{\Lambda}}
\newcommand{\F}{\mathcal{F}} 
\newcommand{\G}{\mathcal{G}} 
\newcommand{\Lb}{\mathcal{L}} 
\newcommand{\spin}{\mathrm{Spin}} 
\newcommand{\strng}{\mathrm{String}} 
\newcommand{\spec}{\mathrm{spec}} 
\newcommand{\Top}[1]{\ensuremath{\wedge^{\mkern-2mu#1}\mkern1mu}} 

\newcommand{\Gbig}{\G_S} 
\newcommand{\Abig}{\A_S} 
\newcommand{\redK}{\tilde{K}} 
\newcommand{\coh}{\ensuremath{\mathrm{H}}} 
\newcommand{\coc}{\ensuremath{\mathrm{Z}}} 
\newcommand{\coch}{\ensuremath{\mathrm{C}}} 
\newcommand{\cobo}{\ensuremath{\mathrm{B}}} 
\newcommand{\ccoh}{\ensuremath{\breve{\coh}}} 
\newcommand{\aut}{\ensuremath{\mathrm{Aut}}} 
\newcommand{\extensions}{\ensuremath{\mathrm{Ext}}} 

\newcommand{\Det}{\ensuremath{\mathrm{Det}}} 

\newcommand{\exact}[3]{\ensuremath{0 \to #1 \to #2 \to #3 \to 0}}
\newcommand{\gext}[1]{\ensuremath{\widehat{#1}}} 

\newcommand{\Lie}{\mathrm{Lie}} 
\newcommand{\lie}[1]{\ensuremath{\mathfrak{#1}}} 
\newcommand{\sct}[1]{\ensuremath{\Gamma \left( #1 \right)}} 
\newcommand{\embedd}{\ensuremath{\hookrightarrow}} 

\newcommand{\paths}[1]{\ensuremath{\mathcal{P}(#1)}}
\newcommand{\loopg}[1]{\ensuremath{L{#1}}}
\newcommand{\bloopg}[1]{\ensuremath{\Omega{#1}}}
\newcommand{\flatloopg}[1]{\ensuremath{\Omega_{\flat}{#1}}}
\newcommand{\qloopg}[1]{\ensuremath{Q{#1}}}
\newcommand{\flatqloopg}[1]{\ensuremath{Q_{\flat}{#1}}}
\newcommand{\flatqbloopg}[1]{\ensuremath{Q_{\flat,e}{#1}}}

\newcommand{\bnloopg}[2]{\ensuremath{\Omega^{#1}{#2}}}
\newcommand{\bnballg}[2]{\ensuremath{B_{\flat}^{#1}{#2}}}
\newcommand{\threeloop}{\ensuremath{\bnloopg{3}{G}}}
\newcommand{\threeloopalg}{\ensuremath{\bnloopg{3}{\lie{g}}}}
\newcommand{\threeball}{\ensuremath{\bnballg{3}{G}}}
\newcommand{\threemaps}{\ensuremath{\map{\threeball,S^1}}}

\newcommand{\actg}[2]{\ensuremath{#1{\mkern-1mu/\mkern-3mu/}#2}}

\newcommand{\gerbe}[3]{%
  \begin{tikzcd}[ampersand replacement=\&]
      #1 \arrow{d} \& \\
      {#2}^{[2]} \arrow[r, shift left] \arrow[r, shift right] \& #2 \arrow{d} \\
      \& #3
  \end{tikzcd}%
  }

\DeclareMathOperator{\cker}{coker}	
\DeclareMathOperator{\bdr}{\partial\mkern-3mu}	
\DeclareMathOperator{\ind}{ind}   
\DeclareMathOperator{\tr}{tr} 
\DeclareMathOperator{\id}{id} 
\DeclareMathOperator{\ad}{ad}		
\DeclareMathOperator{\vol}{vol} 
\newcommand{\drc}{\slashed{D}} 
\DeclareMathOperator{\Hom}{Hom} 
\DeclareMathOperator{\End}{End}

\newcommand{\isom}{\ensuremath{\cong}} 
\newcommand{\chern}{\ensuremath{\mathrm{ch}}} 
\newcommand{\ahatgenus}{\ensuremath{\hat{A}}} 

\newcommand{\conn}{\ensuremath{\nabla}} 
\newcommand{\Auni}{\Gamma} 
\newcommand{\cob}{\ensuremath{\delta}} 
\newcommand{\pull}[1]{\ensuremath{\pi^*(#1)}} 
\newcommand{\ext}{\ensuremath{{\mkern.25mu}\mathrm{d}\mkern-.25mu}} 
\newcommand{\lder}{\ensuremath{\mathcal{L}}} 
\newcommand{\inv}[1]{\ensuremath{{#1}^{-1}}} 
\newcommand{\abs}[1]{\ensuremath{\lvert #1 \rvert}}	
\newcommand{\defeq}{\coloneqq}
\newcommand{\comm}[2]{\ensuremath{\left[ #1, #2 \right]}} 
\newcommand{\acomm}[2]{\ensuremath{\left\{ #1, #2 \right\}}} 

\newcommand{\norm}[1]{\ensuremath{\lVert #1 \rVert}} 
\newcommand{\ip}[1]{\ensuremath{\left\langle #1 \right\rangle}} 
\newcommand{\idoperator}{\ensuremath{\mathrm{I}}} 

\newcommand{\creat}{\ensuremath{\mathbf{a^{\dagger}}}} 
\newcommand{\annih}{\ensuremath{\mathbf{a}}} 


\newcommand{\term}[1]{{\bfseries{#1}}} 
\newcommand{\termd}[1]{{\bfseries{#1}}} 

\newcommand{\laina}[1]{``#1''}

\newcommand{\bautoref}[1]{\bfseries{\autoref{#1}}}


%
\theoremstyle{plain}
\newtheorem{lemma}[equation]{Lemma}
\newtheorem{proposition}[lemma]{Proposition}

\theoremstyle{definition}

\newtheorem{example}{Example}[chapter]
\newtheorem*{example*}{Example}
\newtheorem{assumption}{Assumption}

\theoremstyle{remark}
\newtheorem{remark}{Remark}[chapter]

\definecolor{thmbg}{rgb}{0.975,0.95,0.95}
\declaretheoremstyle[%
headfont=\normalfont\bfseries,%
notefont=\mdseries, notebraces={(}{)},%
bodyfont=\normalfont\itshape,%
postheadspace=0.5em,%
preheadhook={\begin{mdframed}[backgroundcolor=thmbg, %
  innertopmargin =0\baselineskip , %
  skipbelow=.5\baselineskip, skipabove=.5\baselineskip, %
  hidealllines=true,usetwoside=true,nobreak=true]},%
postfoothook=\end{mdframed}%
]{varjo}
\declaretheorem[sibling=lemma,style=varjo]{definition}
\declaretheorem[sibling=definition,style=varjo]{theorem}

\declaretheoremstyle[%
  spaceabove=\topsep,%
  spacebelow=\topsep
  headfont=\normalfont,%
  headpunct={},%
  headformat={\hspace*{-.275em}\NOTE\hspace{.025em}.\hspace{.5em}},%
  headindent={},%
  postheadspace=0em,%
  notefont=\bfseries,%
  notebraces={}{},
]{oletustoisto}
\declaretheorem[style=oletustoisto,title=""]{assumptioN}

\newcommand\mlnode[1]{\mbox{\begin{tabular}{@{}c@{}}#1\end{tabular}}}




\begin{document}

\pagenumbering{roman}

	\thispagestyle{empty}
\begin{titlepage}
    \pdfbookmark[1]{\myTitle}{titlepage}
    \newgeometry{hmargin=24mm}
    \begin{center}
        \large

        \hfill

        \vfill

        \begingroup
            \color{CTtitle}
            \spacedallcaps{\myTitle} \\ \bigskip
        \endgroup

        \spacedlowsmallcaps{\myName} \\

        \vfill

        \small

        \begin{minipage}{0.6\textwidth}
          \centering\textit{\myDefence}
        \end{minipage} \\ \bigskip

        \vfill


        \myDepartment \\
        \myFaculty \\
        \myUni \\ \smallskip 

        \textsc{Helsinki 2020}



    \end{center}
  	\restoregeometry
\end{titlepage}


\thispagestyle{empty}

{\setlength{\parindent}{0pt} 

  \phantom{these go down!} 
  \vfill

  \textsc{isbn} 978-951-51-5976-2 (paperback) \\
  \textsc{isbn} 978-951-51-5977-9 (\textsc{pdf}) \\ \smallskip

  \url{http://ethesis.helsinki.fi} \\ 
  \url{http://urn.fi/URN:ISBN:978-951-51-5977-9} \\  
  \smallskip

  Unigrafia Oy \\
  Helsinki 2020
}

\cleardoublepage


\thispagestyle{empty}
\phantomsection
\pdfbookmark[1]{Roll the dice}{Roll the Dice}

\vspace*{3cm}

\makebox[\textwidth][r]{
  \noindent
  \begin{minipage}{0.5\textwidth}
    {\itshape
    if you’re going to try,\\
    go all the way. \\

    there is no other feeling like \\
    it. \\
    you will be alone with the \\
    gods \\
    and the nights will flame with \\
    fire. \\ 
    } \medskip

    {\footnotesize%
    --- Charles Bukowski}
  \end{minipage}
}



\cleardoublepage


\pdfbookmark[1]{Abstract}{Abstract}
\begingroup
\let\clearpage\relax
\let\cleardoublepage\relax
\let\cleardoublepage\relax

\chapter*{Abstract}

Gauge symmetry invariance is an indispensable aspect of the field-theoretic models in classical and quantum physics. Geometrically this symmetry is often modelled with current groups and current algebras, which are used to capture both the idea of gauge invariance and the algebraic structure of gauge currents related to the symmetry.

The Hamiltonian anomaly is a well-known problem in the quantisation of massless fermion fields, originally manifesting as additional terms in current algebra commutators. The appearance of these anomalous terms is a signal of two things: that the gauge invariance of quantised Hamiltonian operators is broken, and that consequently it is not possible to coherently define a vacuum state over the physical configuration space of equivalent gauge connections.

In this thesis we explore the geometric and topological origins of the Hamiltonian anomaly, emphasising the usefulness of higher geometric structures in the sense of category theory. Given this context we also discuss higher versions of the gauge-theoretic current groups. These constructions are partially motivated by the $2$-group models of the abstract string group, and we extend some of these ideas to current groups on the three-sphere $S^3$.

The study of the Hamiltonian anomaly utilises a wide variety of tools from such fields as differential geometry, group cohomology, and operator K-theory.
We gather together many of these approaches and apply them in the standard case involving the time components of the gauge currents. We then proceed to extend the analysis to the general case with all space-time components. We show how the anomaly terms for these generalised current algebra commutators are derived from the same topological foundations; namely, from the Dixmier-Douady class of the anomalous bundle gerbe. As an example we then compute the full set of anomalous commutators for the three-sphere $S^3$ as the physical space.

\endgroup

\vfill

\cleardoublepage

\pdfbookmark[1]{\contentsname}{tableofcontents}
\tableofcontents
\cleardoublepage


\renewcommand{\prefacename}{Preface}
\phantomsection
\addcontentsline{toc}{chapter}{\tocEntry{Preface}}
\begingroup
\let\clearpage\relax
\let\cleardoublepage\relax
\let\cleardoublepage\relax

\chapter*{Preface}

Around 2012 CE I was struck by a peculiar malady of the head: to leave behind the ever-green pastures of electrical engineering and venture into the wilderness of pure mathematics. Figuring out the \emph{hows} proved rather tricky, but the \emph{whys} were clear. Why spend my time not engaging in the most beautiful forms of art known to the human being? Now, many twists and turns later that path culminates -- or at least comes to a little rest -- here, in this thesis.

To this end I am grateful for the funding provided by the Emil Aaltonen Foundation%
\footnote{Grant nos. 160183~N and 170185~N.}%
, the Niilo Helander Foundation%
\footnote{Grant no. 180084}%
, and the Jenny and Antti Wihuri Foundation%
\footnote{Grant no. 00180261.}%
, without which this work would never have seen the light of day.

I am deeply indebted to my advisors Antti Kupiainen and Jouko Mickelsson: Antti for all the support and help with any and all practical issues I have had, and Jouko for all the \sout{madness} \emph{ideas} and gentle guidance throughout the endeavour. I also wish to thank Michael Murray and Richard Szabo for their pertinent review of the thesis during the pre-examination.

The Department of Mathematics and Statistics at the University of Helsinki has been my second home for almost five years now, for which I am ever grateful. I wish to thank especially those of the teaching crew with whom I have had collaboration; the delightful pain and toil of mathematics is only surpassed when aiding others to attain and overcome it, too. I also have fond memories of the local hospitality during my visits to the Centre for Quantum Geometry of Moduli Spaces in Aarhus, and to the Erwin Schrödinger Institute%
  \footnote{A part of this thesis was conceived during the ESI workshop \emph{Bivariant K-theory in Geometry and Physics} in 2018.}
in Vienna.

Beyond direct influence on the thesis and the subject matter, I owe pints to Janne Salo and Arto Hiltunen for their companionship, and a whole shelf of jars filled with preserved goods to Preppers for never staying the madness.
And where would I be without the support of my parents, brothers and sisters?

No wanderer finds their direction without a sun to guide them. Above all I owe to Erica, without whom I would have been lost a long time ago.

\endgroup

\vspace*{4\baselineskip}

\hfill%
\begin{minipage}{0.223\textwidth}
  Ossi Niemimäki\\
  Helsinki, April 2020
\end{minipage}

\phantom{gnaa}

\vfill

\cleardoublepage

\pagenumbering{arabic}


\chapter{Introduction}



The term \emph{Hamiltonian} is used in this thesis in two distinguished but not entirely disconnected respects. For one, \term{Hamiltonian quantisation} describes the operator formalism as the geometric field theory in continuum is transformed -- quantised -- to an algebraic theory of quantum operators.  The second meaning refers to the time-evolution of the physical systems, which can be written explicitly as a \term{Hamiltonian operator} acting on the relevant entities. Here we treat time as a simple parameter space characterising these dynamics; while acknowledging that this in all likelihood is not a fundamental aspect of physics, it nevertheless is useful simplification from a variety of perspectives.%
  \footnote{See \cite{ROVELLI2011} for a take on how time could be an emergent property of a quantum system.}

The switch from field geometry to operator algebra is not the only change brought in by the quantisation: some effects are also seen in the symmetries of the system. The geometric field theory incorporates a simple but far-reaching idea of invariance, which means that some internal components of the theory can be transformed according to certain rules \emph{without affecting the observable outcome}. This is known as the gauge symmetry invariance, although on the outset it has little to do with our intuition of visual symmetry since these symmetric similarities lie in the underlying mathematical formulation rather than in geometric shapes as such. One of the problems -- or properties -- of quantisation is that it can break some of these symmetries as ones moves from the classical to the quantum. The parts of the theory reflecting this are called anomalous and the effect itself an anomaly.

Hamiltonian anomaly is then an anomalous quantisation effect that is seen on the level of quantum operator dynamics. There are several ways to portray this formally, and \emph{one of the aims of this thesis is to put together many of these perspectives}. To this end we also introduce a handful of fundamental mathematical ideas which form the basis for quantisation anomalies. A central argument is that the \emph{symmetry anomalies have a topological origin}, or that they can be best approached by using topological tools such as K-theory and group cohomology. Many of these methods are not particularly new; what could be consider more contemporary is the use of higher structures in the sense of category theory. There is an undergoing process in mathematical and theoretical physics in general to figure out how gauge field theories could look from such a perspective. To this we add a small contribution concerning mostly categorical groups, motivated by the fact that it is in the group theory where one still finds the most appropriate mathematical realisations for symmetry.

Indeed, underlying the gauge symmetry are what are called \term{current groups}. Formally these can be formed of smooth maps $\map{M,G}$ from a smooth manifold $M$ to a Lie group $G$, defining infinite-dimensional Lie groups.%
  \footnote{Subject to appropriate conditions: in general, the proper context for spaces of Lie group -valued maps is the category of infinite-dimensional Fréchet Lie groups, see for instance \cite{NEEB2001}.}
  \marginnote{The first physical current algebra was constructed by Murray Gell-Mann (1929--2019) to describe hadrons.}
Their algebraic counterpart $\map{M,\lie{g}}$ consisting of continuous maps to the Lie algebra $\lie{g}$ of the symmetry group $G$ form the setting to what is called \term{current algebra}.
Historically, current algebras started as a formal means to understand the basics of nuclear interaction, leading to the theory of chromodynamics.

Using the current algebra one studies the commutator relations of physical currents. Corresponding to a gauge symmetry one can define current (densities) $j^a(x)$ as elements in the current algebra and then impose equal-time commutator relations akin to%
  \footnote{Here the indices $a,b,c$ refer to the generators of the algebra $\lie{g}$, and the arguments $x,y$ are points on the manifold $M$ of dimension $2k+1$. The currents are expressed in a simplified form without space-time indices; we will discuss the more general case in the main text.}
\[
  \comm{j^a(x)}{j^b(y)}_{t=0} = i \lambda^{ab}_c j^c(x) \delta^{2k+1}(x-y) .
\]
It is in these commutators that one also finds the earliest manifestation of the Hamiltonian anomaly: additional terms may appear on the right side of the above equation~\cite{GI1955,SCHWINGER1959}. We show how the appearance of the anomaly in such commutators is derived from the topology of the Dirac Hamiltonians in the case of massless fermion fields. The topological argument also makes it clear how the commutator anomaly is a result of the impossibility of defining a proper vacuum state in the space of observables.

\section{Outline}

The bulk of this thesis is naturally a distillation of already-established research. In Chapter~\ref{chap:gauge} we briefly recap the central ideas in gauge theory and its quantisation, focusing on the fermionic field theory. This is done largely just to set down common notation and to define the geometric backdrop.

In Chapter~\ref{chap:higher} we discuss the use of higher structures in gauge theory. We focus mostly on constructing categorical groups as in Section~\ref{sec:cat_groups}, in which the treatment of $3$-loop group extensions is largely based on an earlier publication~\cite{MN2019}.%
  \footnote{The joint publication \cite{MN2019} contains significant contributions by the author.}
Here we explain in more detail an alternative approach which previously was mentioned only in passing (\cite[Rem.~3]{MN2019}). The broader aim beyond these examples is to envision the concept of current groups in this higher framework. We will also give a brief introduction to bundle gerbes, which have been successfully used as a geometric handle to the cohomology classification of anomalies. 

The topological nature of quantum Dirac operators is discussed in Chapter~\ref{chap:ktheory}, in which we explore the rudiments of K-theory and operator index theorems. This forms the operator-theoretical foundations for the further treatment of the Hamiltonian anomaly.

Finally in Chapter~\ref{chap:hanomaly}, we review various perspectives on the Hamiltonian anomaly and try to clarify the connections in-between. What can be considered as new is Section~\ref{sec:general_comm}, in which we link the bundle gerbe cohomology to general anomalous commutator terms and derive them explicitly in a simplified example over the spatial manifold $M=S^3$. In this we rely on the topological arguments presented earlier in Chapter~\ref{chap:ktheory}.

Stylistically we aim to strike a balance between the basic treatments of quantum physics and more recent mathematical research. For instance, in Chapter~\ref{chap:gauge} we introduce the groups of gauge connections and their transformations in the standard textbook fashion while noting that the proper mathematical entity involved is not a pair of groups but a groupoid, and later come back to it with more details in Chapter~\ref{chap:higher}. However, while giving space to introduce abstract mathematical topics, we also try to keep in mind the goal of concrete realisations useful to physics. For the brave intent on diving deep into the modern mathematical treatment of gauge theories we refer \cite{SCHREIBER2017}.

\chapter{Gauge theory and anomalies}\label{chap:gauge}

In this brief introductory chapter we go through some of the basic notions behind gauge field theory and its quantisation. We will pay special attention to the structure of \emph{fermionic field theories}. We will also briefly discuss the general idea of quantisation anomalies: these appear when the quantisation process breaks the invariance of quantum operators under the action of some physically relevant symmetry group.

Since this chapter is meant to be a quick introduction to the main ideas in gauge theory, for the most part we opt for a simplified approach in describing the mathematical structures. The subsequent chapters will lay out more details for some of the material, where relevant and needed.

Some general assumptions introduced in this chapter carry out through the text. They are the following:
\begin{assumptioN}[\bautoref{assu:m}]
	Let $M$ be the manifold description for the physical space. We assume that $M$ is a compact and connected spin manifold. The space-time structure can then be expressed as a globally hyperbolic product manifold $\M = M\times\R$.
\end{assumptioN}
\begin{assumptioN}[\bautoref{assu:g}]
	Let $G$ be the group of internal gauge symmetries. We assume that $G$ is a compact, connected, and semi-simple Lie group. We work mostly with matrix Lie groups; when in doubt, assume $G=SU(p)$ for a suitable $p\geq2$.
\end{assumptioN}
\begin{assumptioN}[\bautoref{assu:gg}]
	Let $\G$ be the set of vertical automorphisms on the space of gauge connections $\A$. We assume that as a group $\G$ is based: for a fixed $p\in M$, $g(p) = e$ for all $g \in \G$, where $e\in G$ is the identity element. 
\end{assumptioN}

\section{Classical gauge theory}

In this section we will discuss the basic geometric setting for fermionic gauge theory: to this end we introduce gauge connections, spinors and Dirac operators. Finally, we discuss the moduli space of gauge connections. Our take on gauge connections rests mostly on that of \cite{MS2000}, and the canonical source for the spin geometry and Dirac operators is \cite{LM1989}.

The described geometry is \term{classical}: that is, we define field entities and the relevant structure over smooth spaces without the notion of quantised energies. 
\idx{space-time}
Let $\M$ be a suitable finite-dimensional space-time manifold. In relativistic quantum field theories $\M$ is Min\-kowskian or, more generally, pseudo-Riemannian. However, for the most part we are content to work with Riemannian manifolds and regard time as a simple parameter: thus we will write $\M = M\times \R$, where $M$ accounts for the \emph{physical space} (possibly restricting to a time interval $I\subset \R$). In particular, we assume that $M$ is oriented, compact, connected, and smooth. We also assume a spin structure in order to construct the fermionic theory of spinor fields.

\begin{assumption}\label{assu:m}
	Let $M$ be the manifold description for the physical space. We assume that $M$ is a compact and connected spin manifold. The space-time structure can then be expressed as a globally hyperbolic product manifold $\M = M\times\R$.
\end{assumption}

\idx{connection!gauge}
\idxsee{gauge!connection}{connection}
\idxsee{gauge!potential}{connection}
The main actors of gauge field theory are \term{gauge potentials} or \term{connections} locally expressed as functions $A^{\mu}_a$, with $\mu$ the local space-time index and $a$ the internal symmetry index. This latter index runs through the dimensions of the Lie algebra of the \term{gauge or internal symmetry group} %
	\idx{gauge!symmetry group}
$G$ which represents the invariance of the theory -- the internal degrees of freedom of the connection. Let $\lie{g} \defeq \mathrm{Lie}(G)$ be the Lie algebra of $G$ with a matrix representation generated by anti-Hermitian matrices $\tau^a$ with relations%
	\marginnote{The \term{Einstein convention} of summing over repeated indices is used throughout the text.}
\[
	\comm{\tau^a}{\tau^b} = i\lambda_c^{ab}\tau^c \quad \text{and} \quad (\tau^a)^{\dagger} = -\tau^a .
\]
The numbers $\lambda_c^{ab}$ are the \term{structure constants} %
	\idx{structure constants}
of the algebra. 
In this framework one then writes the connections as %
	\idx{connection!$1$-form}
\[
	A \defeq A_a \tau^a = A_a^{\mu}(x)\ext x_{\mu} \tau^a ,
\]
with values in the Lie algebra $\lie{g}$, and $A_a \in \forms^1(M)$.%
	\footnote{Often for physical applications there is also a coupling constant $c$ in the expression $A = c A_a \tau^a $; this determines the strength of the field in relation to the other components of theory.}
Note that these expressions must be understood \emph{locally} unless one is working on a trivial principal $G$-bundle, as explained below.

\subsection{Gauge field geometry}
\idxalku{principal bundle}
\idxalku{connection}

Let $G$ be a compact, connected, semi-simple Lie group, and let $P \to M$ be a smooth principal $G$-bundle over a Riemannian spin manifold $M$. We are mostly interested in the special unitary group $SU(p)$ as the structure group $G$,
and in general matrix groups will be enough for the physical applications considered.

\begin{assumption}\label{assu:g} %
	\idx{gauge!symmetry group}
	Let $G$ be the group of internal gauge symmetries. We assume that $G$ is a compact, connected, and semi-simple Lie group. We work mostly with matrix Lie groups; when in doubt, assume $G=SU(p)$ for a suitable $p\geq2$.
\end{assumption}

Any principal bundle $P$ has associated vector bundles $E \to M$ defined by a representation $\rho: G \to \aut(V)$, where $V$ is the standard (vector space) fibre of the bundle $E\defeq P \times_G V$ and we have a left action of $G$ on $E$. %
	\idx{associated vector bundle}
In particular there is the \term{adjoint bundle} %
	\idx{adjoint bundle}
$\ad(P) \defeq P \times_G \lie{g}$ where the standard fibre is the Lie algebra $\lie{g}$ of $G$, and the representation is the adjoint representation $\ad:G \to \aut(\lie{g})$.
In the context of gauge theory we will call the associated bundle $E\to M$ the \term{gauge bundle}. %
	\idx{gauge bundle}

If we take the quotient of the tangent bundle $TP$ and the vertical tangent bundle $VP$ by the group $G$, we get the vector bundles $T_GP \defeq TP/G$ and $V_GP \defeq VP/G$ over the base $M$. Now, sections of the bundle $T_GP \to M$ are $G$-invariant vector fields on $P$, and the sections of $V_GP \to M$ are $G$-invariant \emph{vertical} vector fields on $P$. This latter bundle we call the \term{gauge algebra bundle} $V_GP \to M$. The name is a reference to the fact that its fibres are isomorphic to the \emph{right} Lie algebra $\lie{g}_r$ of $G$ -- the tangent vectors at the identity invariant under the right action of the group $G$. The action of the group $G$ on the standard fibres is provided by the adjoint representation.

A \term{connection} $\conn$ on the principal $G$-bundle $P$ is a splitting of the following short exact sequence:%
	\marginnote{The exact sequence is often called the \term{Atiyah sequence} after its introduction in \cite{ATIYAH1957}.}
\[
  \exact{V_GP}{T_GP}{TM} ,
\]
where $TM \isom (P\times_M TM)/G$ (the quotient of the pullback bundle of $TM$ onto $P$ by the action of $G$). Thus a connection can be represented as a $G$-equivariant $T_GP$-valued field on $P$. It gives a distribution of vertical tangent subspaces over the space $P$, so that for each point $p\in P$ there is a unique decomposition $TP = T_hP \oplus T_vP$. We usually assume that this distribution is \emph{smooth}.

For a given point $p \in P$, there is a linear projection $T_{G,p}P \to V_{G,p}P$,
and thus we get a \term{connection $1$-form} %
	\idx{connection!$1$-form}
as a linear map $\omega_{\conn}: T_{G,p}P \to \lie{g}$. From the connection $1$-form we can construct a $\lie{g}$-valued differential $1$-form $A$ on $M$. We can always pull back the form $\omega_{\conn}$ with respect to some \emph{local} section $\Gamma_i$ of the principal bundle $P$, but generally we need to impose a compatibility condition with respect to the transition functions of a covering $\{\U_i\}$ of the base manifold $M$. If we take two such local representatives
$A_i = \Gamma_i^*(\omega_{\conn})$ and
$A_j = \Gamma_j^*(\omega_{\conn})$ defined on the open sets $\U_i$ and $\U_j$ with a nontrivial intersection, they must fulfill the following condition:
\[
	A_j = \ad(\inv{g}_{ij}) A_i + g_{ij}^*(\theta) ,
\]
where $\ad$ is the adjoint representation of the group $G$ on the algebra $\lie{g}$, and $\theta$ is the Maurer-Cartan form on $G$; the transition functions are maps $g_{ij}: \U_i \cap \U_j \to G$. The compatibility condition also serves as the motivation for introducing the group of gauge transformations in Section~\ref{sec:moduli}.

Locally on $M$ we then write the connection $1$-form as %
	\idx{connection!local}
\[
	A = A^{\mu}_a(x) \ext x_{\mu} \, \tau^a ,
\]
where $\tau^a$ are the Lie algebra generators in the adjoint representation, and $A^{\mu}_a$ is a function on the base manifold $M$. Such connections always exist and they form an affine fibre bundle $\A \to M$~\cite[Sec.~2.9]{MORGAN1998}. The sections of this bundle are referred to as \term{gauge connections} %
	\idx{connection!gauge}
or \term{gauge potentials}.
	\marginnote{The use of letter $A$ and the term \emph{potential} harks back to the classical theory of electromagnetic potentials.}
If the principal bundle is \emph{trivial}, we can extend the local potential over the whole base $M$. Since a nontrivial principal bundle does not have global sections, in general the connection $1$-form on $P$ descends to a \emph{family} of local potentials on $M$ (subject to the compatibility condition).
Lastly, we note that a connection on the principal bundle $P$ induces a connection on the associated bundle $E$.

To a connection $1$-form $A$ we can associate a \term{curvature $2$-form $F$} %
	\idx{curvature!$2$-form}
by defining
\[
	F \defeq \ext A + \frac{1}{2}\comm{A}{A} .
\]
	\idx{adjoint bundle}
The curvature is a form in the adjoint bundle $\ad(P)$ associated to the adjoint representation of $G$ on $\lie{g}$: it is a \emph{horizontal form} in the sense that it vanishes identically if one of its arguments is a vertical vector.%
	\footnote{Sometimes terms \emph{tensorial} and \emph{pseudotensorial} are used to characterise their local transformations: a connection form is pseudotensorial while its curvature form is tensorial, since for the latter the local term $g_{ij}^*(\theta)$ in the compatibility condition vanishes.~\cite[Sec.~2.1]{AI1995}}
A connection is said to be \term{flat} if its curvature $2$-form is trivial.

The central idea behind connections is that now we can move between the fibres of the principal bundle over different points in a consistent fashion, and respect the action of the structure group $G$ while doing so. This property enables us, among other things,
to define a covariant differentation on the associated bundle, which then motivates the use of the symbol $\conn$ to denote a connection.

Furthermore, to this principal bundle we associate the \term{group of gauge transformations} %
	\idx{gauge transformation!group}
$\G$ as its vertical automorphisms, a Lie group in itself. Its significance lies in the fact that it also provides a transformation group on the space of connections.~\cite{MV1981} We will take a more careful look at these transformations later in Section~\ref{sec:moduli}.

\begin{remark} %
		\idx{associated vector bundle}
	In many applications we can assume that the associated bundle $E\to M$ is trivial (for example by demanding that $P$ is trivial), so that the group of gauge transformations is isomorphic to the smooth $G$-equivariant maps $\map{M,G}$. The associated bundle is then the product $E = M \times G$, where $G$ acts adjointly on itself.
\end{remark}

\begin{example}[Electromagnetism] %
		\idx{electromagnetism}
	Useful simplifications of the electromagnetic field theory can be formulated in terms of the gauge connection $A$. For instance, consider a contractible Riemannian base manifold $M$ with $\dim(M) = 3$. The mesoscopic (vacuum) equations for \emph{magnetostatics} are
	\[
		\ext B = 0 \quad \text{ and } \ext \star B = J ,
	\]
	where $\star$ is the metric-dependent Hodge operator on $M$, and the $2$-forms $J$ and $B$ are the electric current (density) and the magnetic flux (density), respectively. These condense into one equation by using the identity $\ext A = B$ (guaranteed by the trivial topology of $M$):
	\[
		\ext \star \ext A = J .
	\]

	Note that since $\ext(A + \alpha) = B$ for any closed $1$-form $\alpha$, we have gauge freedom in choosing the connection. This is what is manifested by the symmetry group $G$ in terms of giving gauge equivalent connections.\idx{gauge!invariance}

	More generally, the relativistic electromagnetism in a $4$-dimensional space-time can be formulated in terms of the \term{Faraday-Maxwell tensor} $F$, derived from the gauge connection $F = \ext A$ under the structure group $U(1)$.
\end{example}

\begin{remark} %
		\idx{connection!local}
	The importance of global connections on a principal $G$-bundle must be stressed. If the theory was formulated simply in terms of local potentials $A$, the classical theory could not be properly quantised. For this reason the quantum theory of electromagnetism has to be formulated in terms of a principal $U(1)$-bundle; similarly for more general $G$-invariant gauge theories. Moreover, this brings in a quantisation condition: in the case of $U(1)$-bundles the first Chern class, and for $SU(p)$-bundles the second Chern class, is always an integer. See \cite{SCHREIBER2016} for a general outlook on this aspect of prequantum theories.
\end{remark}

\idxloppu{principal bundle}
\idxloppu{connection}

\subsection{Spin structure and spinors}
\idx{spin structure}

In order to properly describe quantum particles, we need to introduce spinor fields. What we want is a spin bundle carrying an action of the spin group.

	\idx{spin group}
The spin group $\spin(n)$ is the double cover of the rotation group $SO(n)$, defined via short exact sequence
\[
	0 \to \Z_2 \to \spin(n) \to SO(n) \to 0 .
\]
Let $T_{SO(n)}(M)$ then be the oriented orthonormal frame bundle of the tangent space of $M$. By the exact sequence above, we lift this to a principal $\spin(n)$-bundle $P$. We then call the associated complex vector bundle $S\to M$ under a fixed spin representation of $\spin(n)$ the \term{spinor bundle on $M$}, %
	\idx{spinor!bundle}
and we call the sections of this bundle \term{spinor field}. %
	\idx{spinor!field}
	\index{spinor|seealso {fermion}}

\begin{remark}
	Earlier we assumed that the base manifold $M$ admits a spin structure, since there are topological obstructions for the existence of spin bundles. Namely, we must assume that the manifold $M$ is oriented and that its second Stiefel-Whitney class is trivial. See \cite[Ch.~II]{LM1989} for details.
\end{remark}

\subsection{Dirac operators}
\idxalku{Dirac operator}
\idxsee{operator!Dirac}{Dirac operator}

In relativistic quantum physics the dynamics of fermion fields $\psi$ are described by the \term{Dirac equation} %
	\idx{Dirac equation}\idx{fermion!field}
$\drc \psi = 0$, which takes the following local form on a Minkowskian space $M$ (using the \emph{natural units} with $\hslash = c = 1$):
\[
	(i \gamma^{\mu} \partial_{\mu} - m) \psi = 0 ,
\]
in which $m$ is a mass term and the coefficients $\gamma^{\mu}$ are the \term{gamma matrices} satisfying the relation $\acomm{\gamma^{\mu}}{\gamma^{\nu}} = 2g^{\mu\nu}I$ with respect to the metric $g$ on $M$. The Hamiltonian anomaly arises only when the mass term in the equation is zero: we call the related massless fields \term{Weyl fermions}. %
	\idx{fermion!Weyl}
In the following we consider mostly the massless case.

We can write this in more general terms with the connections of the bundle $S\otimes E \to M$: Let $\conn^E$ be a connection on $E$ and $\conn^S$ the Levi-Civita connection on $S$. We can then define the total connection $\conn^M$ on $S\otimes E$ so that for any section $\sigma \otimes e$ of the bundle $S\otimes E$, we write $\conn^M(\sigma \otimes e) \defeq \conn^S (\sigma)\otimes e + \sigma \otimes \conn^E(e)$. Then the massless Dirac equation has the following local expression:
\[
	i g^{\mu\nu} \gamma_{\mu} \conn^M_{\nu} \psi = 0 .
\]

For even-dimensional base manifolds one can introduce $\Z_2$-grading on the spinor bundle, and thus decompose it into \emph{left- and right-handed} (sometimes called \emph{positive and negative}) spinors. Let us then write $S = S^L \oplus S^R$. The Dirac operator $\drc$ now maps the left-handed spinors into right-handed spinors, and vice versa, so that the operator itself can be decomposed as
\[
	\drc = \begin{bmatrix} 0 & \drc^R \\ \drc^L & 0 \end{bmatrix} .
\]
We will come back to the properties of the Dirac operator later in Chapter~\ref{chap:ktheory}, where we discuss general operator algebras and the topology of families of Dirac-type operators.

\idxloppu{Dirac operator}

\subsection{Moduli space of connections}\label{sec:moduli}
\idxalku{connection!moduli space of}
\idxsee{moduli space}{connection}
\idx{connection!gauge}

A central concept in gauge theory is the invariance under gauge transformations. For the connections $A\in \A$ on the gauge bundle we can define the gauge transformations
\[
	A \mapsto A^g \defeq \inv{g}A g + \inv{g}\ext g ,
\]
where $g \in \G$ is a vertical automorphism on the bundle. (To justify this, see \cite{MS2000} for a standard treatment or \cite[Sec.~2.1]{SCHREIBER2016} for a cohomology argument). Note that locally this reverts back to the compatibility condition of different connection $1$-forms introduced earlier.

A \term{vertical automorphism} %
	\idx{vertical automorphism}
of a principal $G$-bundle $\pi:P \to M$ is a $G$-equivariant diffeomorphism $\varphi:P \to P$ covering the identity on the base, so that $\pi \circ \varphi = \pi$. The set of such automorphisms forms an infinite-dimensional Lie group $\G \defeq \aut_V(P)$ with respect to the usual composition of functions. It acts on the left on $P$ and commutes with the right action of $G$. 
Due to its importance in transforming gauge connections, the group $\G$ is called the \term{group of gauge transformations}.%
	\footnote{In literature, this group is often called the \emph{gauge group}, thus potentially mixing with the symmetry group $G$. We will avoid the use of this term altogether for the sake of precision.}
\idx{gauge transformation!group}

In general, the action of $\G$ on the space of connections $\A$ is not free. Therefore we will require that the group $\G$ is \term{based} so that we can define topology on the moduli space of connections. This means that for some fixed point $p \in M$, all transformations $g \in \G$ yield the identity of $G$, $g(p) = e$. Now the action is free and by the quotient manifold theorem we can define the orbit space $\A/\G$ as a manifold with topology compatible with these spaces.~\cite{SINGER1981} %
	\idx{gauge transformation!based}\index{orbit space}
\begin{assumption}\label{assu:gg}
	Let $\G$ be the set of vertical automorphisms on the space of gauge connections $\A$. We assume that as a group $\G$ is based: for a fixed $p\in M$, $g(p) = e$ for all $g \in \G$, where $e\in G$ is the identity element. 
\end{assumption}

\begin{definition} %
		\idx{gauge!orbit}
	The \termd{moduli space of gauge connections with respect to $G$} is the quotient $\A/\G$, where $\G$ is the group of based gauge transformations induced by the group $G$. The moduli space is also often called the \term{gauge orbit space}.
\end{definition}

The importance of the moduli space is that in a certain sense it gives the true configuration space for the physical theories: since the gauge connections are thought to be symmetric, all the relevant physics should manifest already with respect to the quotient space. However, while the topology of the space of connections $\A$ can be assumed to be trivial, there can be severe complications in the structure of the moduli space inherited from the group $\G$.
For this reason one often introduces a priori restrictions such as our demand of a fixed point on the base $M$.~\cite{ACM1989,MORGAN1998}

Under these conditions the space of connections admits the structure of a principal bundle:
\[
	\exact{\G}{\A}{\A/\G} .
\]
Now if this bundle was trivial, that is $\A = \G \times \A/\G$, the homotopy exact sequence of the fibration would induce an isomorphism
\[
	\pi_n(\A) \isom \pi_n(\G)\oplus \pi_n(\A/\G) .
\]
Since $\A$ is an affine space, this would mean that both the moduli space and $\G$ would be trivial, which in general does not hold.
From this we conclude that the topology of the moduli space depends on that of the space $\G$, and that the bundle in general cannot be trivial.%
	\marginnote{This is called the \term{Gribov ambiguity} after \cite{GRIBOV1978}.}
This has serious consequences for fixing a \emph{global} gauge; in other words, choosing a section of the bundle may not be possible.

\begin{remark} %
		\idx{groupoid!gauge}
		\idxsee{gauge!groupoid}{groupoid}
	A classical field theory without internal symmetries can be defined via the set of sections of a fibre bundle over the manifold $M$, for which the necessary properties can be deduced from the local description. In gauge field theories -- classical or quantum -- we need to discuss principal bundles over $M$ with a given symmetry structure group $G$. The gauge theory then hinges on the underlying \emph{groupoid} structure rather than on the simple set of sections. In this groupoid the objects are the connection $1$-forms, and the morphisms are the gauge transformations $\G$. We will come back to groupoids in Section~\ref{sec:cat_groups}, and explore the idea of a gauge groupoid more closely in Example~\ref{ex:gauge_groupoid}. The gist of this approach is that in the groupoid structure the details of the gauge transformations on gauge connections is not lost as easily as with the gauge orbit approach described above.~\cite{BSS2015}
\end{remark}

\idxloppu{connection!moduli space of}

\section{Gauge symmetries}
\idxalku{gauge!symmetry}

The group $G$ gives the space of gauge connections its underlying symmetry structure. It turns out that much of the essence of quantum field theories is encoded in this symmetry. The important question is whether observables are invariant under some proposed symmetry, that is, if they transform equivariantly under the suitable group action. From the field theory perspective one is mostly interested in having an invariant \emph{Lagrangian (density)} defining the action functional, which is then used to describe the dynamics of the physical system. On the other hand, an invariant Lagrangian leads to \emph{Noether's symmetry currents}, which can be analysed separately. Historically, these currents also provided the first mathematical description of symmetry anomalies in quantum field theories.

\subsection{Current groups and algebras}
\idxalku{current group}
\idxalku{current algebra}

We will now have the first look at one of the main actors in this thesis: current groups and algebras. Per an encyclopedia definition~\cite{GOLDIN2006}
\begin{quote}
	a \termd{current algebra} is an infinite-dimensional Lie algebra of current density operators augmented with specific commutation relations.
\end{quote}
The origin of the current algebras was in the algebraic formulation of physical currents describing hadrons and, after proven successful, the approach expanded to encompass various special cases. As with finite-dimensional Lie algebras, many (but not all) current algebras have their correspondence in Lie groups: these are then called \term{current groups}. We will tackle the group side first.

\idx{Lie group!of smooth maps}
Given a smooth manifold and a compact Lie group $G$, one can form a space of smooth maps $\map{M,G}$. Under suitable conditions, this space admits a smooth structure ({\eg~ if $M$ locally convex}) and a group structure ({\eg~ if $M$ compact}); see \cite{NEEB2006} for a review on how this leads to locally convex Lie groups. 
The group $\map{M,G}$ is then the Lie group of smooth maps
\[
	g: M \to G : p \mapsto g(p) ,
\]
for which the group structure follows from the pointwise multiplication at $p \in M$:
\begin{enumerate}
	\item $(gh)(p) = g(p)h(p)$,
	\item $\inv{g}(p) = \inv{(g(p))}$,
	\item $e(p) = e$.
\end{enumerate}

In the previous section we noted that the group of gauge transformation can be formulated as a current group if the gauge bundle is trivial. In general, the gauge transformations are always \emph{locally} expressible as current groups; that is, over open sets $\U \subset M$ we can define $\G_{|\U} \defeq \map{\U,G}$. For this reason we could also speak of the group of \term{local gauge transformations}. %
	\idx{gauge transformation!local}

The original idea behind current algebras was that one could describe the local symmetry conditions algebraically, even if the exact formulation of the field theory was not known. Locally, we can define \term{gauge currents} %
	\idx{gauge!current}
as continuous maps in $\map{M,\lie{g}_S}$ of the following form:
\[
	j_{\mu}^a(x) = \psi^{\dagger}(x)(\gamma_{\mu}\otimes\tau^a)\psi(x) ,
\]
where $\tau^a$ are the generators of the gauge algebra $\lie{g}$, and $\gamma_{\mu}$ are gamma matrices: together these generate the extended algebra $\lie{g}_S = \lie{spin}\otimes\lie{g}$. For matrix algebras this extension is straightforward, and it is easy to see how the Lie algebra structure is retained when the Lie bracket is the usual matrix commutator. More generally, a tensor product Lie algebra can be constructed following~\cite{ELLIS1991}.

The gauge currents can be derived from the invariant Lagrangian density (see \cite{JACKIW1972,JACKIW1985} for details) and, importantly, they correspond to a conserved symmetry. %
	\idx{theorem!Noether}
	\footnote{The background for this conservation is in the famous \term{Noether's theorem}: a continuous symmetry of the action corresponds to a current conservation~\cite{NOETHER1918}.}
The space $\map{M,\lie{g}_S}$ has the structure of an infinite-dimensional Lie algebra reminiscent to its group version but, in contrast to the finite-dimensional case, we do not always have such a direct correspondence between the group and the algebra.

In order to serve quantum physics, these spaces are further refined into current algebras by imposing equal-time commutation relations upon the currents. For example, an expected relation for the time-components $\mu = 0$ would be something akin to
\[
	\comm{j_0^a(x)}{j_0^b(y)} = i\lambda_c^{ab}j_0^c(x)\delta^{(2k+1)}(x-y) ,
\]
where $(2k+1)$ is the dimension of the physical space. However, it is well-known that often these commutators need to be amended with additional terms arising as an effect of the quantisation. This is also the origin for the study of Hamiltonian anomalies. %
	\idx{anomaly!commutator}

Important examples include loop groups and their algebras, which we discuss more closely in Chapter~\ref{chap:higher}. Cohomology theories will play a significant role in the application of current groups and algebras, and we will summarise some of the relevant results in Appendix~\ref{app:cohomology}. Along the way we will gain a decent assortment of tools for the analysis of anomalies, as we will see in Chapter~\ref{chap:hanomaly}.

\idxloppu{current group}
\idxloppu{current algebra}

\subsection{Yang-Mills theory}
\idxalku{Yang-Mills theory}

At its core, the Standard Model of particle physics is a \term{Yang-Mills theory}: a gauge theory based on non-Abelian symmetry groups. Let $\ip{\cdot,\cdot}$ be a symmetric bilinear form on the Lie algebra $\lie{g}$ of the gauge symmetry group $G$; assuming adjoint representation, we can write this form as a scalar multiple of the \term{trace operator} $\tr:\lie{g}\to \R$. As before, the gauge potential and its curvature take their values in the Lie algebra $\lie{g}$.
We take the adjoint representation as generated by elements $\tau^a$ subject to the normalisation
\[
	\tr (\tau^a \tau^b) = \frac{1}{2}\delta^{ab} .
\]

The Yang-Mills theory is then characterised by its Lagrangian%
\[
	\mathcal{L} = -\frac{1}{2}\tr (F \wedge \star F) ,
\]
where $F$ is the \term{field strength}, locally the curvature $2$-form derived from \emph{some} gauge potential $A$.
The action functional of the field can be given as
\[
	c_1 \int_M \tr ( F \wedge \star F) + c_2 \int_M \tr (F \wedge F) ,
\]
where $c_i$ are internal constants of the theory: the \emph{coupling constant} and the \emph{theta angle}, respectively. 
All gauge fields in the Standard Model can be modelled as Yang-Mills fields which describe the essential dynamics of the system. For the matter fields one introduces spinorial structure relative to the underlying Yang-Mills theory together with the associated vector bundles; see for instance \cite{JACKIW1985} for a standard treatment.

\idxsee{gauge!invariance}{gauge symmetry}
Regardless of the exact presentation of the field theory, the Lagrangian must be invariant under the local symmetry transformations given by the group $G$. Since the Lagrangian is composed from the curvature $2$-form $F$ rather than the potential $A$, any transformation of the potential which leaves its curvature intact can in principle be accepted. For instance, in electrodynamics the symmetry group $U(1)$ creates a \emph{phase transformation} when applied to the potential and the fermion fields; this, however, will not affect the field strength and hence the Lagrangian is invariant.

In the Standard Model, the symmetry group constituents are $U(1)$, $SU(2)$ and $SU(3)$, roughly corresponding to the three fundamental interactions, and in addition the fermions must obey the gravitational symmetries subject to the Lie group $\spin(1,3)$. It is an open question if this combination of symmetries could be found inside a larger group in such a way that the overall structure could be simplified.

\begin{remark} %
		\idx{mass gap}
	While the classical Yang-Mills theory is well-defined and the quantum version of the theory forms the basis of the Standard Model, there are still open questions regarding its quantisation. The so-called \emph{mass gap problem} is one of the seven Millennium Problems posed by the Clay Mathematics Institute, as it is not yet fully understood how all the observed phenomena relating to Yang-Mills fields coupled to fermion fields would emerge from the mathematical structure. This is especially true for the \emph{color confinement} in quantum chromodynamics describing the interactions within the atomic nuclei.~\cite{JW2006}
\end{remark}

\idxloppu{gauge!symmetry}
\idxloppu{Yang-Mills theory}

\section{Quantisation and anomalies}

In physics, the \term{classical observable} %
	\idx{observable}
is defined as a time-dependent smooth function on a given phase space -- the state of the system. The \term{quantum observable} on the other hand is defined as a Hermitian operator on the Hilbert space of the quantum states.%
\footnote{This leads to the distinguishing of the \emph{expectation values} and \emph{observed values} in a quantum system, and the related measurement problem; but we will not need to discuss this further in the given context.}
This idea of an observable is further complicated in quantum field theories, since they should adhere to the relativistic space-time with a causal structure. The proper (nonperturbative) algebraic formulation for these operators is provided by $C^*$-algebras introduced later in Section~\ref{sec:cstar}. %
	\idx{C@$C^*$-algebra}

This process of transforming geometry of fields into algebra of operators -- loosely speaking, since there is no formal procedure but rather guidelines to follow -- is what is called \term{canonical quantisation}, %
	\idx{quantisation!canonical}
since it follows in line with the earlier constructions of the simpler quantum mechanics of particle systems. There are several different attempts to describe quantisation mathematically, to a varying degree of specification; we will not try to summarise these ideas here.

\subsection{Fermionic quantisation}
\idxalku{quantisation!fermionic}

Consider a Dirac operator $D_A$ coupled to an external field via principal $G$-connection $A$.
For a given spinor field configuration $S\otimes E \to M$ we can define \term{one-particle Hilbert space $\Hs$} as the space of square-integrable sections of this product bundle: the space $\Hs$ inherits its inner product from the metrics on the bundles $S$ and $E$. We call \term{fermion fields} %
	\idx{fermion!field}
those sections which fulfill the Dirac equation.

The Hilbert space admits a \term{spectral decomposition} %
	\idx{spectral decomposition}
$\Hs = \Hs^{-}\oplus \Hs^+$. This is derived by fixing a \emph{non-eigenvalue} $\lambda$ of the Dirac operator $D_A$, and then defining $\Hs^{+}$ as the subspace of eigenvectors corresponding to eigenvalues larger than $\lambda$ and taking $\Hs^{-}$ as its orthogonal complement. 
Now, for each potential $A$ in the space of connections $\A$ we can form the corresponding \term{fermionic Fock space}\idx{fermionic Fock space} as the direct sum 
	\marginnote{We denote by $\N_0$ the set of non-negative integers.}%
 \[
 	\F_A(\Hs) = \F_A(\Hs^+) \otimes \F_A(\overline{\Hs^-})
		= \bigoplus_{p,q \in \N_0} \forms^p (\Hs^+) \otimes \forms^q(\overline{\Hs^-}), 
 \]
\emph{antisymmetric} with respect to the tensor product.%
	\footnote{Antisymmetry ensures the \emph{Pauli exclusion principle} of denying two fermions to be in the same quantum state.}

These Fock spaces should carry an irreducible representation of the algebra of \term{canonical anticommutation relations (CAR)}. %
	\idx{canonical anticommutation relations (CAR)}
That is, there are operators $\creat$ and $\annih$ on $\F_A(\Hs)$ such that for all vectors $u,v\in \Hs$ we have
\[
	\acomm{\creat(u)}{\annih(v)} = \ip{u,v}_{\Hs}\idoperator \quad \acomm{\creat(u)}{\creat(v)} = \acomm{\annih(u)}{\annih(v)} = 0 .
\]
This representation is fixed (made \term{quasi-free} %
	\idx{quasi-free representation})
by choosing a \term{vacuum vector} %
	\idx{vacuum vector}
$\psi_0 \in \F_A(\Hs)$ fulfilling the condition
\[
	\creat(u)\psi_0 = \annih(v)\psi_0 = 0 \quad \text{for all } u \in \Hs^- \text{ and for all } v \in \Hs^+
\]
with respect to the spectral decomposition.

These operators $\creat$ and $\annih$ are called \term{creation} and \term{annihilation} operators, respectively. The appearance of spectral subspaces relative to a chosen vacuum energy is a reminiscent of the Dirac sea of \laina{negative} and \laina{positive} energy states as originally envisioned in \cite{DIRAC1928}.
Note that it is exactly this introduction of the anticommutation over a Fock space which turns the fermion fields into operators.

A central question regarding the Hamiltonian anomaly is whether such an assignment of Fock spaces can be made continuously over the moduli space of connections, which is taken as the physical configuration space. That is, \emph{is there a mathematically sound definition for a Fock bundle?} The physical side of this hinges on the possibility of defining a consistent vacuum state. We return to this question in Chapter~\ref{chap:hanomaly}. %
	\idx{Fock bundle}

\begin{remark}
	The Hamiltonian anomaly is described as a quantisation problem with respect to an external classical field -- that is, the gauge field itself is not quantised. There are a couple of reasons for this. First of all, such a model might very well be a realistic simplification of a physical system in which one is interested in the particle dynamics with respect to some externally applied electromagnetic field. External fields are also often useful for computational reasons.~\cite[Ch.~16]{WEINBERG1996}. But most importantly, in the mathematical formulation the anomaly itself appears already with quantised Dirac fields.
	For a discussion on the mathematical difficulties in Dirac quantisation under an external field, see \cite{DM2016}.
\end{remark}

\idxloppu{quantisation!fermionic}

\subsection{Symmetry anomalies} %
\idx{anomaly}

The switch from fields to operators in quantisation may also have geometric consequences. In particular, the proposed gauge symmetry can be broken in the process. For a canonical example, we can consider the \term{chiral anomaly} %
	\idx{anomaly!chiral}
in quantum electrodynamics. The fermion fields in the massless Dirac equation have a chiral symmetry, which can be written as follows:
\[
	\psi(x) \mapsto e^{i\alpha\Gamma}\psi(x) ,
\]
where $\Gamma$ is the chirality operator which anticommutes with the gamma matrices $\gamma_{\mu}$, and $\alpha$ is some constant. For this symmetry we can describe the associated Noether current as %
	\idx{theorem!Noether}
\[
	j_{\mu}(x) = \psi^{\dagger}\Gamma\gamma_{\mu}\psi ,
\]
and, in the massless case, we would expect that this current has zero divergence:%
	\footnote{More generally, there would a non-zero term coming from the mass term in the Lagrangian -- nevertheless, the kinetic part is expected to vanish.}
\[
	\partial^{\mu} j_{\mu}(x) = 0.
\]

However, after the dust has settled from quantisation, we will find that the divergence yields an anomalous term proportional to $F \wedge F$, where $F$ is the Maxwell-Faraday tensor. Historically this was first derived from divergent $1$-loop Feynman diagrams \cite{ADLER1969,BJ1969}, and later on various perspectives have been proposed to explain the anomaly. Perhaps the most interesting mathematically is the functional integral approach in \cite{FUJIKAWA1979}, in which the anomaly is explained as a topological effect and computed using the Atiyah-Singer index theorem. %
	\idx{index theorem}

It can be also shown that the anomaly will appear regardless of the used renormalisation scheme. 
Moreover, the chiral anomaly has actual physical consequences, so it is indeed a crucial component of the physical theory rather than a problem to be solved.~\cite{JACKIW1972,JACKIW1985}

This is not the case with all symmetry anomalies. The chiral anomaly does not break the all-important gauge invariance; the term \emph{external symmetry} is sometimes used to indicate that the chiral symmetry is not a fundamental symmetry of the theory.%
	\footnote{For different flavours of anomalies, see \cite{BERTLMANN2000}.}
On the other hand, the gauge symmetry may be in peril as well: this is the case with the Hamiltonian anomaly, which comes about when quantising massless fermion fields under non-Abelian gauge symmetries. It turns out that the Hamiltonian anomaly destroys the gauge invariance and thus the theory cannot be properly quantised. %
	\idx{anomaly!Hamiltonian}
%

\chapter{Higher structures}\label{chap:higher}

One of the two main conceptual questions in this thesis is what happens to mathematical -- mostly geometric -- structures under the process of quantisation. The other question could be stated as follows: \emph{what are the most suitable structures to begin with?} In this chapter we explore a handful of ideas to this end. In particular, we introduce concepts that involve a \emph{higher} or \emph{categorical} perspective. There is a growing interest in mathematical physics to study such objects in hopes that they would help make the known phenomena conceptually simpler, and perhaps lead to new physics as well. Ultimately one hopes to get a better insight into the mysteries of quantisation, so that at the end of the day these two conceptual questions combine to just one. While even attempting to answer this Great Question properly is well beyond the scope of this thesis, we do try to provide a few glimpses to this direction.

Thus we do not strive to give an in-depth treatment of category theory or higher geometry. Rather, we are motivated by two examples with nontrivial connections to quantum anomalies: namely, categorical groups and bundle gerbes. The first of these gives a context to the earlier results published in \cite{MN2019}, and the second provides a neat geometric handle to the group cohomology aspect of the Hamiltonian anomaly. These two examples are not entirely unrelated, since in both a prominent feature is the Mickelsson-Faddeev extension of current groups on three-dimensional manifolds -- and indeed one of our main motivations in this chapter is to look for a categorical perspective to current groups.

\section{Categorification and higher objects} %
\idxalku{categorification}

The classical gauge field theory as introduced in the previous chapter is geometrically a theory of connections on principal bundles over topological spaces, describing the (gauge) dynamics of point particles. However, in theories such as string theory or loop quantum gravity one would need a dynamic description for objects that are not point particles but rather $1$-dimensional paths or loops in a space -- or, going further, $k$-dimensional surfaces (or branes, as they are often called). For the gauge theory formalism to extend to such applications it is necessary to extend the basic geometric concepts beyond the point-space foundations. This is one of the motivations for defining what could be called \emph{higher geometry} through the process of \emph{categorification}. Historically, the study of higher categorical objects has also been closely linked to topological quantum field theory~\cite{BAEZ1997}. More recently, higher symmetry groups have been a subject of growing interest not only among gauge theorists, but also in condensed matter physics. In particular, higher groups appear to be relevant in the study of topological phases of matter.~\cite{GKSW2015,SHARPE2015}

One can roughly divide categorification into two flavours: \term{vertical} and \term{horizontal categorification}.\footnote{Horizontal categorification could also be called \term{oidification} since it often leads to objects like groupoids, algebroids, ringoids, and such.} The following diagram illustrates the general idea: in vertical categorification, one extends objects and functions -- algebras, groups and the like -- to categories and functors of such, and further to higher categories; on the horizontal side, one can realise the same objects with a category of a specific type, and then study the general structure behind this instance.
\[
\begin{tikzcd}
  & \text{$n$-categories} & \\
  & \mlnode{categories\\{\itshape{\small morphisms, functors}}}  \arrow[u] & \\
  \mlnode{sets\\{\itshape{\small equations, functions}}} \arrow[rr] \arrow[ur] & & \text{single-object categories} \arrow[ul]
\end{tikzcd}
\]
Put together, one goes not only higher but also \emph{wider}. A good example is the categorification of groups. The corresponding diagram could look like this:
\[
\begin{tikzcd}
  \text{(categories of) $n$-groups} &  \text{(categories of) $n$-groupoids} \\
  \text{category of groups} \arrow[u] & \text{groupoids} \arrow[u] \\
  \text{group} \arrow[r] \arrow[u] & \text{single-object groupoid} \arrow[u]
\end{tikzcd}
\]
  \idx{groupoid}
  \idx{$2$-group}
While groups, groupoids, and their $n$-categorified ancestors are not entirely alien concepts to a working mathematician, this process is not always straightforward nor is it easy to see as to what exactly would the appropriate definitions for the extended entities be like. Already in the categorification process for Lie algebras one runs into rather complicated questions, see for instance \cite{BC2004} and \cite[Sec.~6.5.2]{SCHREIBER2017}.%
  \footnote{Besides specific examples, there seems to be little literature available on the categorification as such. Here we have mostly followed \cite{BD1998} and the more informal ideas found in \cite{NLAB:CAT}.}

Since the horizon of categorification is rather vast, we do not dwell on the general ideas too long. A bit more space will be dedicated to categorical groups from which one may gain a higher perspective to symmetry groups and related anomalies. 
However, before the discussion on the categorical groups, we argue that there is also some sense in speaking of higher objects and structures without explicitly diving into category theory as such.

\subsection{What is it like to be a higher object?} %
\idx{higher object}

Categorification described above is not necessarily the only way to catch a glimpse of topology or geometry that is in some sense higher.

One characteristics of the generalisation of the point spaces is that of a more complex homotopy on the base. From this perspective it makes sense to call higher objects those reflecting a higher homotopy type of the space. Recall that a space $X$ of a homotopy type $n$ is characterised by the groups $\pi_k(X)$ with $k\leq n$, in addition to the demand that all higher homotopy groups are trivial. For instance, the $2$-group encountered below is a group structure on a homotopy $1$-type objects -- thus the generalised group structure equivalences must preserve isomorphisms of the homotopy groups $\pi_0$ and $\pi_1$, which is not a property of the standard definition of groups. Similarly, the bundle gerbe is an instance of a Lie groupoid coming either from a principal $2$-bundle or a groupoid extension -- in the former case, this is a principal $U(1)$-bundle structure over a $2$-group (homotopy $1$-type). 

In Chapter~\ref{chap:gauge} we implicitly introduced a categorical object: the space of gauge connections and their transformations. This is an example of a groupoid -- and since the gauge transformations themselves are homotopies of other geometric objects, it is natural to consider them in the categorical framework.

We will later see in our examples that higher objects can be well defined without explicitly speaking about categories or such. This indeed is one of the aims: to \emph{concretely} build objects that fulfill the necessary properties without making the process itself unduly abstract. It is the application of category-theoretical concepts that drives us, not so much the machinery itself.

\idxloppu{categorification}
%

\section{Categorical groups and symmetries}\label{sec:cat_groups}
\idxalku{categorical groups}

We introduce two main components in this section: groupoids and $2$-groups as categorification of groups. Groupoids have already appeared in various flavours in mathematics -- also in gauge theory, as argued above -- while $2$-groups have gained more recent interest in mathematical physics. We discuss the necessity of the $2$-group structure when it comes to string theory, and in the next section construct two examples. We refer to \cite{SHARPE2015} for a more general exposition to categorical groups in quantum field theory.

\subsection{Categorical groups}

We begin the categorification of the group structure in the horizontal direction by introducing groupoids.
\begin{definition} %
    \idx{groupoid}
  A \termd{groupoid} is a small category in which all morphisms are isomorphisms.
\end{definition}
Recall that in a \term{small category} %
  \idx{category!small}
both the objects and the morphisms form proper sets. A single-object groupoid would then take us back to the usual definition of a group, with morphisms serving as the group elements. The definition naturally extends to include smooth structures, and so we get \term{Lie groupoids} %
  \idx{groupoid!Lie (smooth)}
(and Lie algebroids, see \cite{MACKENZIE2005}).

The canonical example of a groupoid is the fundamental groupoid of a topological space: the fundamental group $\pi_1(X,p)$ of a space $X$ based at $p\in X$ is the automorphism group of $p$ in the fundamental groupoid $\Pi_1(X)$ -- thus the fundamental groupoid can encode the information provided by fundamental groups without a singled-out base point. The objects of the fundamental groupoid are points in $X$, and its morphisms are homotopy classes of paths between two points. The groupoid axioms are fulfilled by defining composition as concatenation of paths; the rest follows from the properties of the homotopy equivalence.

We have already mentioned that the space of gauge connections forms a groupoid. What we mean is the following.
\begin{example}[Gauge groupoid]\label{ex:gauge_groupoid} %
    \idx{groupoid!gauge}
  Let $\A$ be the space of connections on a gauge bundle over a manifold $M$, and let $\G$ be the group of gauge transformations. The transformations $A \mapsto A^g$ given by a $g\in \G$ are morphisms in the Lie groupoid of local connections $A \in \A$. In other words, for an open set $\U \subset M$, the objects of the \term{gauge groupoid} are the smooth Lie algebra valued $1$-forms $A \in \map{\U,\lie{g}}$, and the morphisms are derived from the smooth $G$-valued functions $g \in \map{\U,G}$ by imposing the gauge principle %
    \idx{gauge transformation}
  \[
    A^g \defeq \ad_g(A) - g^*(\theta_G) ,
  \]
  where $\theta_G$ is the Maurer-Cartan form on $G$. Often in the context of matrix groups the pullback is written as $g^*(\theta_G) = \inv{g}\ext g \in \map{\U,\lie{g}}$.
\end{example}

Another good example, which we will also use later on, is that of \term{action groupoid}. %
  \idx{groupoid!action}
Let $G$ be a group acting on a set $X$. The objects in the action groupoid $\actg{X}{G}$ are the elements of $X$, and the morphisms between two given elements $a$ and $b$ in $X$ are the group elements $g\in G$ such that $g.a = b$; the composition of morphisms is given by the multiplication in $G$. The notation $\actg{X}{G}$ follows from the idea that the action groupoid is in some sense a \emph{weak quotient}: in the proper quotient $X/G$ the elements in the same orbit subset of $X$ are considered equal, whereas in the action groupoid they are considered merely isomorphic.

\idxalku{$2$-group}
In the vertical direction the group structure is generalised by introducing $2$-groups as categories with specified functors analogous to the group operations. One of the many ways to think about $2$-groups is to view them as a group object in the category of groupoids.%
  \footnote{See \cite{FORRESTERBARKER2002} for a review on various equivalent formulations of strict $2$-groups.}
\begin{definition}
  A \termd{strict $2$-group} is a monoidal category in which all morphisms are isomorphisms, and every object has an inverse.
\end{definition}
A \term{monoidal category} %
  \idx{category!monoidal}
is a category $\Ca$ with a \term{product functor} $\otimes:\Ca\times \Ca \to \Ca$ such that it contains
\begin{enumerate}
  \item a unit object $1 \in \Ca$,
  \item an associator isomorphism $\alpha: (x\otimes y) \otimes z \to x \otimes (y\otimes z)$,
  \item left/right unitor isomorphisms $l : 1 \otimes x \to x$ and $r: x \otimes 1 \to x$,
\end{enumerate}
together with the \term{triangle} and \term{pentagon identities} given by the following commuting diagrams:~\cite{BL2004}
\[
   \begin{tikzcd}[row sep=large]
     (x \otimes 1 ) \otimes y \arrow{dr}[swap]{r\otimes 1} \arrow{rr}{\alpha} & & x \otimes (1  \otimes y ) \arrow{dl}{1\otimes l} \\
        & x \otimes y &
   \end{tikzcd}
\]
and
\[
   \begin{tikzcd}[column sep=-2em, row sep=huge, inner sep=0pt]
     & & (w\otimes x)\otimes(y\otimes z) \arrow{drr}{\alpha} & & \\
     ((w \otimes x ) \otimes y ) \otimes z \arrow{urr}{\alpha} \arrow{dr}[swap]{\alpha \otimes 1} & & & &  w \otimes (x \otimes (y \otimes z)) \\
     &  (w \otimes (x \otimes y)) \otimes z \arrow{rr}{\alpha} & &  w \otimes ((x \otimes y) \otimes z) \arrow{ur}[swap]{1 \otimes \alpha}&
   \end{tikzcd}
\]

Hence a strict $2$-group is a groupoid equipped with a product respecting conditions similar to those of an ordinary group multiplication -- the product is associative, there is a unit object, and objects have inverses under the product. For a method to construct a $2$-group from ordinary groups we refer to the next section in which we discuss the equivalent formulation as crossed modules and give concrete examples of strict $2$-groups.

\idx{$2$-group!Lie (smooth)}
Again, introducing further smoothness conditions gives Lie $2$-groups: we then require that both the objects and the morphisms in the $2$-group carry the Lie group structure. There is a categorical counterpart to Lie algebras as well, and these are called Lie $2$-algebras in the case of Lie $2$-groups. In contrast to ordinary Lie groups and algebras, the relationship between the two is not that straightforward; in particular, given a Lie group $G$ there is a whole family of Lie $2$-algebras arising from the categorification of the Lie algebra $\lie{g}$ of $G$, and only one of them has a natural Lie $2$-group counterpart~\cite{BSCS2007}.
We will not discuss Lie $2$-algebras further, since we do not have any immediate use for them in what follows.

\begin{remark} %
    \idx{$2$-group!coherent}
    \idx{$2$-group!weak}
  For the strict $2$-group, the inverse operations hold as equalities of functors, and the group laws hold strictly as equations. Since the construction is categorical, the axioms can be defined also up to coherent isomorphisms. Thus the definition is \emph{strict} as opposed to \emph{weak} -- in the latter case, given an object $g$ there is a weak inverse $h$ such that $g\otimes h \isom 1 \isom h\otimes g$. Such categories are called weak $2$-groups. Another interesting category is formed by \emph{coherent $2$-groups}, which are weak $2$-groups in which all objects have a special weak inverse subject to certain coherence laws. These concepts are naturally more general, and may well turn out be the right choice of group categorification for physical applications as well.~\cite{BL2004}
\end{remark}

\begin{remark}
  These definitions generalise to $n$-groups and $n$-groupoids. If a $2$-group is a category object in the category of groups, then a $3$-group would be a $2$-category object in the category of $2$-groups, and so on. Analogously, an $n$-groupoid is an $n$-category with all morphisms equivalent. Note that this \emph{equivalence} can mean different things depending on the level of $n$ -- with $1$-groupoids as above it is simply an isomorphism.
\end{remark}

\idxloppu{$2$-group}

The usefulness of categorical groups may be illustrated by considering a basic concept from theoretical physics: the string group.

\subsection{String group and the need for categorical groups} %
  \idx{string group}
  \idx{spin group}
  \idx{spin structure}

Let $M$ be a manifold. The general structure group for the frame bundle of the manifold $M$ is $O(n)$, where $n = \dim M$. We can refine this structure as follows:
\begin{enumerate}
  \item If $M$ is orientable -- the first Stieffel-Whitney class vanishes -- the structure group is $SO(n)$;
  \item If the second Stieffel-Whitney class vanishes, $M$ has a spin structure and the structure group is $\spin(n)$;
  \item If the first fractional Pontryagin class vanishes,
   $M$ has a string structure and the structure group is $\strng(n)$.
\end{enumerate}

The spin group $\spin(n)$ is well-known as the double cover of the rotation group $SO(n)$, defined via the short exact sequence
\[
	0 \to \Z_2 \to \spin(n) \to SO(n) \to 0 .
\]
The string group comes by a similar construction. Consider the short exact sequence
\[
	0 \to K(\Z,2) \to \strng(n) \to \spin(n) \to 0 ,
\]
where $K(\Z,2)$ is an Eilenberg-MacLane space.\footnote{See \cite[Ch.~9]{HJJS2007} for details on the Eilenberg-Maclane spaces.}
The string group builds from the orthogonal group $O(n)$ via a sequence given by the relevant Whitehead tower:
\[
	\dots \to \strng(n) \to \spin(n) \to \spin(n) \to SO(n) \to O(n)  .
\]
If we take the orthogonal group and kill its zero homotopy, we get the special orthogonal group. Likewise, killing the first homotopy of the special orthogonal group we gain the spin group. If we extend this to the third homotopy group (the second being trivial already), the result is the string group.%
  \marginnote{Higher up in the tower is the \term{five-brane group} obtained by eliminating the seventh homotopy group; this structure is significant in the M-theory.}
Hence as the spin group is the $1$-connected cover of the special orthogonal group, so is the string group the $3$-connected cover of the spin group.~\cite{SP2011}

As the Whitehead tower illustrates, 
the string group as an abstract (homotopy) object cannot be realised as a finite-dimensional Lie group since by definition it should have a closed subgroup with non-trivial second homotopy, which by Cartan's theorem is not possible for finite-dimensional Lie groups. Various approaches to model the string group more concretely seem to orbit around the idea that whatever the realisation, it is a specific representation of a higher object. Hence a $2$-group object makes a good candidate for being a concrete model. An example of this will be seen in the next section.

\idx{loop space}
The string structure is closely connected to \emph{loop spaces}, which are topological spaces formed from maps $S^1 \to M$ (we will explore the related loop groups below). While loop spaces play a central role in the study of string theories, there are definite shortcomings in trying to describe (string) gauge theory with principal bundles over loop spaces -- for instance, coupling of strings with gauge fields cannot be fully realised without introducing extra structure. This, too, points towards higher geometry as a more appropriate context.~\cite{WALDORF2015}

\idxloppu{categorical groups}

\section[Strict $2$-groups and $n$-loop group extensions]{Strict \mots{{2}}-groups and \mots{{n}}-loop group extensions} %
\idx{$2$-group}

Having motivated the introduction of categorical groups, let us now concretely build two examples. The aim is to show how $2$-group objects emerge from extensions of loop groups and justify why this is a useful approach. This section is partly based on the article~\cite{MN2019}.

Before jumping into categorical groups proper, we discuss loop groups to some detail -- these groups will be the basis for the presented examples and play an important role in many gauge theory applications.

\subsection[Loop groups and $n$-loop groups]{Loop groups and \mots{\boldsymbol{n}}-loop groups} %
\idxalku{loop group}
\idxalku{current group}
\idx{loop group!$n$-loop}
\idx{loop group!based}
%

Let $G$ be a compact Lie group. The \term{free loop group $\loopg{G}$ of $G$} is the group of smooth maps
\[
	\gamma : S^1 \to G ,
\]
in which the group multiplication is defined pointwise. A \term{based loop group $\bloopg{G}$} is a loop group with a fixed base point: $\gamma(1) = e \in G$ for all $\gamma \in \bloopg{G}$. Geometrically the free loop group $\loopg{G}$ is a principal $\bloopg{G}$-bundle over the space $G$. An important property of the free loop group (and inherited by the based group $\bloopg{G}$) is the existence of nontrivial central extensions by the circle group $S^1 \isom U(1)$:
\[
  \exact{S^1}{\gext{\loopg{G}}}{\loopg{G}} .
\]
As a topological space the based loop group is dual to the suspension $\Sigma G$ of the space $G$, when assuming compact-open topology. The standard result in homotopy theory then states that in the connected components
\[
  \pi_{k+1}(G) = [S^{k+1},G] \isom [\Sigma S^k,G] \isom [S^k,\bloopg{G}] = \pi_k(\bloopg{G}) ,
\]
following from the sphere suspension map $S^{k+1} \isom \Sigma S^k$. Furthermore, if the group $G$ is simply connected, so is $\loopg{G}$.~\cite[Ch.~4]{PS1986}

The loop groups generalise to groups of smooth maps $S^n \to G$ of arbitrary dimensions $n\in \N$. We call the analogous groups formed this way the \term{free $n$-loop group $S^{n}{G}$ of $G$} and the \term{based $n$-loop group $\bnloopg{n}{G}$}. There is furthermore an analogous $n$-fold suspension map such that the following holds as a homotopy classification~\cite[Ch.~5]{MAY1972}:
\[
  [\Sigma^nS^{k},G] \isom [S^k,\bnloopg{n}{G}] .
\]
From this it follows that $\pi_{k+n}(G) \isom \pi_{k}(\bnloopg{n}{G})$ in the connected components.%

\begin{remark}
  As noted in Chapter~\ref{chap:gauge}, the mapping groups $\map{M,G}$ are meaningful as gauge transformation groups (under the assumption of a trivial gauge bundle) or current groups, 
  and one often considers spherical spaces $M \defeq S^n$ as the base manifold. The $1$-loop group in particular is important in the study of low-dimensional quantum field theories as well as in string theory, and there exists a sizeable body of mathematical theory for the $1$-loop group -- see \cite{PS1986} for the canonical reference. However, not much is known about the mapping groups for general $M$.
\end{remark}

\idxloppu{loop group}
\idxloppu{current group}

\subsection{Crossed modules}
\idxalku{crossed module}

The central point in this section is to construct practical examples of $2$-groups. We do this by considering objects called \emph{crossed modules}, which are algebraic counterparts to strict $2$-groups. The usefulness of this lies in the fact that crossed modules can be built rather concretely from suitable ordinary groups and homomorphisms between them.

We recall the basic definition:
\begin{definition}
  Let $G$ and $H$ be groups, and consider morphisms
  \[
  	\delta : H \to G \quad \textrm{and} \quad \alpha : G \to \aut(H) .
  \]
  We say that  $[\delta:H \to G]$ is a \termd{crossed module} if the following two diagrams commute.
  \[
  \begin{tikzcd}
  H \times H \arrow[rd,"\ad"] \arrow[r,"\delta\times\id"] & G \times H \arrow[d,"\alpha"]\\
  	& H
  \end{tikzcd}
  \qquad
  \begin{tikzcd}
  G \times H \arrow[d,"\id\times\delta"] \arrow[r,"\alpha"] & H \arrow[d,"\delta"] \\
  G \times G \arrow[r,"\ad"] & G
  \end{tikzcd}
  \]
  Equivalently, if we denote by $h^g$ the element-wise action of $G$ on $H$, the diagrams correspond to the equations
  \[
  	h^{\delta(h')} = \inv{h'} h h'
  \]
  and
  \[
  	\delta(h^g) = \inv{g} \delta(h) g
  \]
  for all $h,h' \in H$ and $g \in G$.
\end{definition}
The following theorem then gives a way to associate strict $2$-groups with crossed modules~\cite{BS1976}:
\begin{theorem}
  The categories of crossed modules and of strict $2$-groups are equivalent. This extends to the level of homotopy, that is, also the $2$-categories of crossed modules and of strict $2$-groups are equivalent.
\end{theorem}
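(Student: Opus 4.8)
\noindent
The plan is to exhibit an explicit pair of functors between the two categories, show that they are mutually quasi-inverse, and then promote this equivalence to the $2$-categorical level. The first step is to fix a convenient working definition of a strict $2$-group: rather than the monoidal-groupoid description given in the statement, I would use the equivalent formulation as an internal category in the category of groups (a groupoid object in groups), namely a group of objects $C_0$, a group of morphisms $C_1$, and source, target, unit and composition maps that are all group homomorphisms satisfying the usual category axioms. The equivalence of this formulation with the monoidal-groupoid one is standard $2$-group folklore (see the cited \cite{FORRESTERBARKER2002}), so I would invoke it and work internally.

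From a crossed module $[\delta:H\to G]$ with action $\alpha$ I would build a strict $2$-group by setting $C_0 \defeq G$ and taking $C_1$ to be the semidirect product $G\ltimes H$ (underlying set $G\times H$, with $G$ acting on $H$ through $\alpha$), with $s(g,h)\defeq g$, target $t(g,h)\defeq \delta(h)\,g$, unit $\id_g\defeq (g,e)$, and composition $(\delta(h)g,\,h')\circ(g,h)\defeq (g,\,h'h)$. A morphism $(g,h)$ is thus an arrow $g\to\delta(h)g$, so that a composable pair has target $\delta(h)g$ matching the next source and the composite lands on $\delta(h'h)g$ as required. The content to verify is that the composition map $C_1\times_{C_0}C_1\to C_1$ is a \emph{group} homomorphism; this is exactly where the first crossed-module relation, the Peiffer identity $h^{\delta(h')}=\inv{h'}hh'$, is used, while the equivariance relation $\delta(h^g)=\inv{g}\delta(h)g$ is what makes $t$ a homomorphism.

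Conversely, from a strict $2$-group I would recover a crossed module by setting $G\defeq C_0$, $H\defeq\ker s$, $\delta\defeq t|_H$, and defining the action of $g\in G$ on $h\in H$ by conjugation $\id_g\cdot h\cdot\inv{(\id_g)}$ inside the group $C_1$. Checking the crossed-module axioms is the mirror image of the previous step: the Peiffer identity falls out of the interchange law (the compatibility of composition with the group multiplication on $C_1$), and equivariance of $\delta$ from the fact that $t$ is a homomorphism. I would then verify that the two assignments are mutually inverse on objects and on morphisms, yielding the equivalence of $1$-categories (this is the Brown--Spencer correspondence, \cite{BS1976}). To upgrade to the $2$-categorical statement, I would extend both functors to the higher cells: a morphism of crossed modules corresponds to a strict monoidal functor of $2$-groups, and a homotopy between morphisms of crossed modules corresponds to a monoidal natural transformation. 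Matching these data and checking that the constructions respect horizontal and vertical composition of $2$-cells then promotes the equivalence of categories to an equivalence of $2$-categories.

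The main obstacle I expect is precisely the interchange law: the careful bookkeeping showing that the Peiffer identity is \emph{equivalent} to composition being a group homomorphism, i.e.\ to the middle-four interchange law holding in $C_1$. This is the genuine conceptual heart of the theorem; once this correspondence is pinned down, the remaining verifications — associativity, units, and the existence of the inverses demanded by the $2$-group axioms — reduce to routine group-theoretic identities.
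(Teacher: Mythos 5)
Your proposal is correct in substance, but note that the paper itself does not prove this theorem: it states it and defers entirely to Brown and Spencer \cite{BS1976}, adding only the remark that the $G$-groupoids of that reference (group objects in the category of groupoids) agree with the monoidal-groupoid definition used here. What you have written is a reconstruction of that cited proof --- the semidirect-product construction of $C_1$ over $C_0=G$, the inverse functor via $H\defeq\ker s$, $\delta\defeq t|_H$ and conjugation by identity arrows, and the identification of the Peiffer identity with the interchange law --- so your route coincides with the source the paper leans on, and it supplies the detail the paper omits. One point needs care before your sketch closes up: your formulas mix conventions. The paper's crossed-module axioms are written for a right action ($h^{\delta(h')}=\inv{h'}hh'$ and $\delta(h^g)=\inv{g}\delta(h)g$), and with the standard right-action semidirect product $(g_1,h_1)(g_2,h_2)=(g_1g_2,\,h_1^{g_2}h_2)$ your target $t(g,h)=\delta(h)g$ is \emph{not} a group homomorphism; the correct target in that convention is $t(g,h)=g\,\delta(h)$. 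Your formulas do work if $\alpha$ is instead taken as a left action ${}^{g}h$ and $C_1$ is given the multiplication $(g_1,h_1)(g_2,h_2)=(g_1g_2,\,h_1\,{}^{g_1}h_2)$: then $t$ is a homomorphism precisely by equivariance, and the interchange law reduces, after cancellation, to ${}^{\delta(h_1)}k=h_1k\inv{h_1}$, which is the Peiffer identity, exactly as you claim. Fix one convention, carry it through both directions of the equivalence and through the $2$-cells (morphisms of crossed modules versus strict functors, homotopies versus natural transformations), and your argument is the Brown--Spencer proof.
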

\begin{remark}
  In \cite{BS1976} strict $2$-groups are treated as $G$-groupoids, that is, group objects in the category of groupoids. This is equivalent to our earlier definition.
\end{remark}

Note that by the definition any central extension of groups
\[
  \exact{A}{H}{G}
\]
with action $\alpha:G \to \aut(H)$ gives a crossed module through the epimorphism $\delta:H \to G$. We will use this property below in the examples of constructing $2$-groups from loop groups.

Since we consider mostly Lie groups, it is also necessary to verify that crossed modules can retain the smooth structure.
\begin{definition} %
    \idx{crossed module!Lie (smooth)}
  If the groups $G$ and $H$ in a crossed module $[\delta:H\to G]$ are Lie groups and the action defined by the morphism $\alpha$ is smooth, the crossed module is called a \termd{Lie crossed module}, or a \termd{smooth crossed module}.
\end{definition}
In what we follows we work with \emph{smooth group} and \emph{continuous algebra} cohomology, see Appendix~\ref{app:cohomology}.

\idxloppu{crossed module}
%

\begin{example}[Loop group extension and quasi-periodic automorphisms~\cite{MRW2017}]

Let $G$ be simple, compact and simply connected. Let $\bloopg{G}$ be the group of based loops, and $\flatloopg{G}$ a \term{flattened subgroup} defined by imposing the condition of vanishing derivatives at the base point. There are well-known non-trivial central extensions~\cite[Ch.~4]{PS1986} %
  \idx{Kac-Moody!group}
  \marginnote{The group $\gext{\bloopg{G}}$ is often called the \term{affine Kac-Moody group}.}
\[
  \exact{S^1}{\gext{\bloopg{G}}}{\bloopg{G}}
\]
which naturally restrict to central extensions of the flattened subgroup $\flatloopg{G}$. This restriction can be done on the Lie algebra level as well, and the corresponding $2$-cocycle $\kappa_{\flat} : \flatloopg{\lie{g}} \times \flatloopg{\lie{g}} \to \R$ is simply the restriction of the Kac-Moody cocycle %
  \idx{Kac-Moody!cocycle}
\[
  \kappa(u,v) \defeq \frac{1}{2\pi}\int_0^{2\pi} \ip{u(t), v'(t)} \ext t .
\]
Here $\ip{\cdot,\cdot}$ is an invariant bilinear symmetric form on $\lie{g}$,%
  \footnote{A form $c$ on a Lie algebra is \emph{invariant} if it fulfills the condition $c(\comm{u}{v},w) = c(u,\comm{v}{w})$.}
which can be taken as a scalar multiple of the Killing form if $\lie{g}$ is simple. In a matrix algebra we can write $\ip{u,v} = k \tr (u v)$, and hence
\[
  \kappa(u,v) = k \int_{S^1} \tr u \ext v ,
\]
where a suitable normalisation by $k \in \C$ is assumed.

While the flattened loop group $\flatloopg{G}$ is a subgroup of the based loop group and the free loop group, all of them are \emph{subspaces} of a larger space of smooth maps, that of \term{quasi-periodic paths}:
\[
  \qloopg{G} \defeq \{ \gamma \in \map{\R,G} :~ \gamma(t+1) \cdot \inv{\gamma(t)} \text{ is constant} \}  .
\]
Note that $\qloopg{G}$ is not a group with respect to the point-wise multiplication, since the quasi-periodicity is not a property that is preserved under such an operation; moreover, there is no Lie group structure that would be compatible with the standard structure of the free loop group, and it cannot be realised as a Lie group extension of $G$ with the fibre $\loopg{G}$.

However, we can impose the same flattening condition to the quasi-periodic paths and thus gain an honest group:
\[
  \flatqloopg{G} \defeq \{ \gamma \in \qloopg{G} :~ \gamma^{(k)}(0) = \gamma^{(k)}(1) = 0 \quad \forall k \in \N \} .
\]
Since any quasi-periodic path is uniquely determined by considering the interval $t \in [0,1]$, this flattening at the end points is sufficient to ensure the group structure by point-wise multiplication. Then it holds that $\flatloopg{G} = \flatqloopg{G} \cap \bloopg{G}$. Finally, there is a subgroup of based paths:
\[
  \flatqbloopg{G} = \{ \gamma \in \flatqloopg{G} :~ \gamma(0) = e \} .
\]

We can now construct a crossed module based on the groups $\gext{\flatloopg{G}}$ and $\flatqbloopg{G}$. The group of quasi-periodic flattened paths $\flatqbloopg{G}$ acts on $\flatloopg{G}$ pointwise -- this is the right adjoint action $u.\gamma$ for $\gamma \in \flatqbloopg{G}$ and $u \in \flatloopg{G}$. The effect of this action on the Lie algebra cocycle can be written as
\[
  \kappa_{\flat}(u.\gamma,v.\gamma) = \kappa_{\flat}(u,v) + \int_{S^1} \tr \comm{u}{v} \ext \gamma \inv{\gamma} .
\]
Then by the lifting theorem~\cite[Theorem~V.9]{MN2003}
the right conjugation action of $\flatqbloopg{G}$ on $\flatloopg{G}$ lifts to a unique action on the extension $\gext{\flatloopg{G}}$ which is trivial on the centre $S^1$. Since there is a natural inclusion $\flatloopg{G} \hookrightarrow \flatqbloopg{G}$, we also gain a morphism $\delta: \gext{\flatloopg{G}} \to \flatqbloopg{G}$. This together with the action
$\alpha: \flatqbloopg{G} \to \aut(\gext{\flatloopg{G}})$ gives a smooth crossed module.
\end{example}

\begin{remark} %
  \idx{loop space}
Note that the inclusion $\flatloopg{G} \hookrightarrow \flatqbloopg{G}$ together with the corresponding conjugation action would already yield a smooth crossed module. The interest in the central extensions comes from the idea that the spin group given by the extension
\[
  \exact{\Z_2}{\spin(n)}{SO(n)}
\]
should have a string-theoretic analogue. A crucial component in this is the Kac-Moody extension for the free loop group: one begins with a principal $\loopg{G}$-bundle over the loop space $\map{S^1,M}$ of the manifold $M$ and wishes to lift this to a principal $\gext{\loopg{G}}$-bundle. The group $\gext{\loopg{G}}$ can be here interpreted as the group $\spin(n)$, and this lifting can be used as the definition of the string structure.~\cite{KILLINGBACK1987}
\end{remark}

\idx{string group}
Much of the discussion in \cite{MRW2017} is devoted to how this approach gives a coherent (as opposed to strict) $2$-group model for the string group. The value of this particular construction is that -- since it is built upon the smooth free loop group -- it naturally admits the action of the circle group $S^1 \isom SO(2)$. This is important for the definition of $S^1$-equivariant operators on the loop space (which itself is a difficult problem, see \cite{WITTEN1988} for a discussion on the Dirac operator). However, there is an obstruction to using the free loop group as the basis for a strict $2$-group model; one way out of this is to use the group of flattened paths as in the example above. This gives a strict $2$-group which has an underlying action groupoid \emph{equivalent} to the one constructed from the free loop group. This action groupoid is given by the weak quotient $\actg{\qloopg{G}}{\gext{\loopg{G}}}$.
From this equivalence it follows that there is an essentially unique $2$-group structure that fulfills the wanted properties of the string group.

\subsection[$3$-loop group extensions]{\mots{\boldsymbol{3}}-loop group extensions} %
\idxalku{loop group!$3$-loop}

The $2$-group construction from the previous example raises the question whether similar construction would work on general $n$-loop groups. A generalisation to the $3$-loop group $\threeloop$ can be built using a similar method as for the $1$-loop case. This subsection is mostly a summary of the article~\cite{MN2019}, although in places we have expanded the treatment (most importantly in Proposition~\ref{prop:3loop-homotopy}).

The central extension of the loop group
\[
	\exact{S^1}{\gext{\bloopg{G}}}{\bloopg{G}}
\]
can be initially replaced with an Abelian extension
\[
	\exact{\map{\threeloop, S^1}}{\gext{\threeloop}}{\threeloop} .
\]
On the level of Lie algebra this extension is characterised by the \term{Mickelsson-Faddeev cocycle}: %
  \idx{Mickelsson-Faddeev cocycle}
\[
	\theta(A;x,y) = c_2 \int_{S^3} \tr A ~[\ext x , \ext y ] ,
\]
where $c_2 \in \C$ is the normalisation coefficient. %
In the gauge-theoretic framework as in \cite{MICKELSSON1987}, the function $A$ is a gauge connection on the gauge bundle of the symmetry group $G$, a $\lie{g}$-valued $1$-form on the domain $S^3$. As a useful identification we can set $A \defeq \inv{f}\ext f$ for a given $f\in \threeloop$, that is, $A$ is the (pullback of the) left-invariant Maurer-Cartan form.
This is based on the homotopy equivalence of smooth paths in $S^2G$ and the moduli space of connections over $S^3$, see \cite{SINGER1981}; we use the same approach later in Section~\ref{sec:general_comm} when discussing the Hamiltonian anomaly.

Now the idea is essentially the same as in \cite{MRW2017}: a strict $2$-group can be obtained by forming a smooth crossed module of relevant Lie groups. However, the Mickelsson-Faddeev extension is not central and this complicates the process. Consequently the definition of the crossed module will be more involved, since it would naturally entail a central extension and will not work out of the box with an Abelian extension.

Let us assume initially that we use a similar flattened group of maps on the $3$-ball $B^3$ and define it such that all the radial derivatives vanish:
\[
  \threeball \defeq \{ \gamma \in \map{B^3,G} :~ \partial_r^{k} \gamma (\bdr B^3) = 0 \quad \forall k\in \N \} .
\]
There is a natural automorphic action $\threeball \to \aut(\threeloop)$, but there are two issues: first, the lift to the extension will not be possible, and second, there is no group morphism $\gext{\threeloop} \to \threeball$ that would fulfill the requirements of the crossed module.

Let us redefine the $3$-loop group as follows:
\[
  \threeloop \defeq \{ \gamma \in \threeball :~ \gamma \textrm{ extends smoothly to } S^3 \textrm{ and } \gamma(\bdr B^3) = e \} .
\]
This redefinition does not change the nature of the Abelian extension above, and the corresponding Lie algebra extension is characterised by the Mickelsson-Faddeev cocycle without modifications. Note that if we now wish to identify $A \defeq \inv{f}\ext f$, we must take $f\in \threeball$.

Following Proposition~\ref{proposition:aut_lift} in Appendix~\ref{app:cohomology}, in order to lift the action of $\threeball$ to the extension $\gext{\threeloop}$ we would want the $2$-cocycle of the extension to be cohomologically invariant under such actions. It is straightforward, if a bit tedious, to check that the \emph{Lie algebra} $2$-cocycle $\theta(A;x,y)$ is cohomologically invariant under the right adjoint action of $\threeball$ on $\threeloop$. Conjugating with an arbitrary element $g\in\threeball$ one gets
\[
  \theta^g(A;x,y) = \theta(A;x,y) - \delta(\lambda)(A;x,y)
\]
where $\lambda$ is the $1$-cochain
\[
  \lambda(A;z) = c_2 \int_{S^3} \tr \left( A \omega \comm{\omega}{z} + \comm{\omega}{A} \ext z + \omega^3 z \right) ,
\]
with $\omega \defeq \ext g \inv{g}$ the right-invariant Maurer-Cartan form.

The same cohomology invariance holds (smoothly) on the group level \emph{under the assumption that the first homotopy group $\pi_1(\threeloop) \isom\pi_4(G)$ is trivial} -- see Theorem~\ref{theorem:cohom_hom} in Appendix~\ref{app:cohomology}. In any case, the invariance holds in the connected components of the $3$-loop group $\threeloop$. We can state this preliminary result as follows:
\begin{lemma}
  The Lie algebra $2$-cocycle $\theta \in \coh^2\left(\threeloopalg,\map{\threeball,i\R}\right)$ given by
  \[
    \theta(A;x,y) = c_2 \int_{S^3} \tr A \comm{\ext x}{\ext y}
  \]
  is cohomologically invariant under the right adjoint action of $\threeball$. If the first homotopy group $\pi_1(\threeloop)$ is trivial, the cohomology invariance lifts to the group $\threeloop$ with respect to the automorphic action $\threeball \to \aut(\threeloop)$.
\end{lemma}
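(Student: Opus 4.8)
The plan is to treat the two assertions separately: first establishing cohomological invariance at the Lie algebra level by a direct computation, then transferring it to the group via the integration result of the appendix. For the Lie algebra statement, my strategy is to compute the conjugated cocycle $\theta^g$ explicitly and exhibit the difference $\theta^g - \theta$ as a coboundary. Writing the right adjoint action of $g \in \threeball$ as $x \mapsto \inv{g} x g$ on $x \in \threeloopalg$ (with $A = \inv{f}\ext f$ transforming by the induced gauge action $A \mapsto A^g$), I would substitute these into $\theta(A; x, y) = c_2 \int_{S^3} \tr A \comm{\ext x}{\ext y}$. The Leibniz rule for $\ext$ applied to $\inv{g}(\cdot)g$ produces terms involving the right-invariant Maurer-Cartan form $\omega = \ext g\inv{g}$; repeated use of the cyclicity of the trace, the structure equation $\ext\omega = \omega^2$, and Stokes' theorem on the closed manifold $S^3$ (so that every integral of an exact form drops out) should collect the extra contributions into precisely $-\delta(\lambda)$ with
\[
  \lambda(A; z) = c_2 \int_{S^3} \tr\left( A\omega\comm{\omega}{z} + \comm{\omega}{A}\ext z + \omega^3 z \right),
\]
as recorded before the statement. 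Since $\theta^g$ and $\theta$ then differ by the coboundary of the $1$-cochain $\lambda$, they represent the same class in $\coh^2(\threeloopalg, \map{\threeball, i\R})$, which is the asserted invariance.

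The main obstacle here will be purely organisational: keeping track of the many trace monomials generated by the expansion and verifying that the non-exact remainder assembles into $\delta\lambda$ rather than some other cochain. Two points require care. First, I must confirm that all boundary terms genuinely vanish — this relies on the flattening conditions defining $\threeball$ and $\threeloop$, which force $A$, $\omega$ and their derivatives to vanish near $\partial B^3$, so that the $B^3$-integrals descend to honest integrals over $S^3 \isom B^3/\partial B^3$ with no boundary contribution. Second, the cubic term $\omega^3 z$ must be handled consistently, using $\ext\omega = \omega^2$ when integrating by parts, so that the Chern--Simons-type pieces match up.

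For the second assertion I would invoke Theorem~\ref{theorem:cohom_hom} of Appendix~\ref{app:cohomology}, which lifts Lie algebra cohomological invariance to smooth group cohomological invariance provided the relevant homotopy obstruction vanishes. The group cocycle integrating $\theta$ defines the Abelian extension $\gext{\threeloop}$, and the automorphic action $\threeball \to \aut(\threeloop)$ differentiates to the adjoint action used above; so the content is that the Lie-algebra-level primitive $\lambda$ integrates to a well-defined, single-valued $1$-cochain on the group. The obstruction to this is controlled by $\pi_1(\threeloop)$, which by the suspension classification $\pi_{k+n}(G) \isom \pi_k(\bnloopg{n}{G})$ equals $\pi_4(G)$; when it is trivial the primitive exists globally and the invariance lifts, while in general one only recovers it on the identity component. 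This final step is less a computation than a verification that the hypotheses of the cited theorem are met, with the triviality of $\pi_1(\threeloop)$ playing exactly the role of removing the period/monodromy obstruction.
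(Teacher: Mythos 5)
Your proposal follows essentially the same route as the paper: the paper also establishes the Lie-algebra statement by the direct (admittedly tedious) conjugation computation, arriving at $\theta^g = \theta - \delta(\lambda)$ with exactly the $1$-cochain $\lambda$ you describe, and then obtains the group-level invariance by invoking Theorem~\ref{theorem:cohom_hom}, with $\pi_1(\threeloop)\isom\pi_4(G)$ playing precisely the role of the monodromy obstruction you identify (and invariance on connected components in general). Your auxiliary points --- the vanishing of boundary terms via the flattening conditions and the structure equation $\ext\omega = \omega^2$ for the right-invariant Maurer--Cartan form --- are consistent with the paper's setup, so there is no gap.
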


This action does not lift to the extension, however. There is a natural action on both the base and the fibre, but the map $\threeball \to \aut(\gext{\threeloop})$ obtained by combining these is not a group homomorphism. It is therefore not enough that the defining cocycle is smoothly cohomologically invariant; compare to Proposition~\ref{proposition:aut_lift} in Appendix~\ref{app:cohomology}. Hence we need to revise the group extension in order to get a crossed module out of it.

\begin{remark}
  The homotopy group $\pi_4(G)$ vanishes for the groups $G = SU(p)$ when $p > 2$. In the case of $SU(2)$ we have a non-trivial extension of $\threeloop$ by $\Z_2$ (the Mickelsson-Faddeev extension would be trivial), and the right adjoint action of $\threeball$ lifts to this extension without complications.
\end{remark}

\subsubsection{Revised extension and the crossed module}

Let us first tackle the problem of the proposed morphism $\delta: \gext{\threeloop} \to \threeball$. We can extend the acting group $\threeball$ as follows:
\[
  \exact{\threemaps/S^1}{\gext{\threeball}}{\threeball} .
\]
This is a topologically trivial Lie group extension, and on the level of Lie algebra it is essentially characterised by the same Mickelsson-Faddeev cocycle $\theta$, with the following modification. Since now the integration domain for the $2$-cocycle is $B^3$ instead of $S^3$, we need to take the fibre modulo $S^1$. Then the coboundary $\cob \theta$ of the cocycle has a constant boundary term:
\[
  c_2 \int_{\bdr B^3} \tr \left(x\comm{\ext y}{\ext z} - y \comm{\ext z}{\ext x} + z\comm{\ext x}{\ext y} \right) ,
\]
which vanishes by this construction. Thus $\theta(A;x,y)$ fulfills the cocycle condition in this case as well.

Furthermore, there is a central extension
\[
  \exact{S^1}{\gext{\threeloop}}{\gext{\threeloop}/S^1} ,
\]
where the group $\gext{\threeball}$ acts trivially on the centre $S^1$. Indeed, $\gext{\threeloop}$ is a normal subgroup of the group $\gext{\threeball}$: the group $\threeloop$ is a normal subgroup of $\threeball$, and the fibre $\map{\threeball,S^1}$ of the acting extension is mapped onto itself under the conjugation action.

With these extended groups we have a natural morphism $\delta: \gext{\threeloop} \to \gext{\threeball}$ given by the projection
\[
  \gext{\threeloop} \to \gext{\threeloop}/S^1
\]
and the inclusion
\[
  \gext{\threeloop}/S^1 \to \gext{\threeball} .
\]

Next we need the smooth action morphism: does the action of $\gext{\threeball}$ on the Lie algebra $\gext{\threeloopalg}$ lift to a smooth action on $\gext{\threeloop}$? The answer is yes, \emph{if the group $\gext{\threeloop}$ is simply connected}. Given any simply-connected Lie group $H$, there is a group isomorphism~\cite[Cor.~9.5.11, p.~341]{HN2012}
\[
  \aut(H) \isom \aut(\lie{h}) .
\]
Furthermore, the action of $\aut(H)$ on $H$ is smooth if the automorphism group is endowed with a compatible smooth structure and the action on the Lie algebra level is smooth.
These conditions are easily fulfilled in our example when we restrict to the connected component of the identity.
\begin{proposition}\label{prop:3loop-homotopy}
  Let $\pi_5(G) = \Z$. Then the identity component of the group $\gext{\threeloop}$ is simply connected for the basic extension.
\end{proposition}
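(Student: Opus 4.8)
The plan is to treat the basic extension homotopically as a principal $S^1$-fibration and run the long exact sequence of homotopy groups, thereby reducing the entire statement to a single assertion about the connecting homomorphism. Restricting throughout to the identity component, I would use the central extension
\[
  S^1 \longrightarrow \gext{\threeloop} \longrightarrow \gext{\threeloop}/S^1 ,
\]
noting that the base $\gext{\threeloop}/S^1$ carries the weak homotopy type of $\threeloop$: the Abelian factor $\map{\threeball,S^1}/S^1$ is weakly contractible, since its domain is connected with vanishing first cohomology, so $\map{\threeball,S^1}$ is weakly equivalent to its constant circle. Hence the relevant fibration is effectively $S^1 \to \gext{\threeloop} \to \threeloop$, whose homotopy sequence contains the segment
\[
  \pi_2(\threeloop) \xrightarrow{\ \partial\ } \pi_1(S^1) \longrightarrow \pi_1\bigl(\gext{\threeloop}\bigr) \longrightarrow \pi_1(\threeloop) \longrightarrow \pi_0(S^1) = 0 .
\]

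Next I would compute the base homotopy using the iterated suspension isomorphism $\pi_k(\bnloopg{3}{G}) \isom \pi_{k+3}(G)$ recorded earlier, giving $\pi_1(\threeloop) \isom \pi_4(G)$ and $\pi_2(\threeloop) \isom \pi_5(G)$. Under the standing assumption $\pi_4(G) = 0$ of this subsection — which holds precisely for the groups relevant here, $SU(p)$ with $p \geq 3$, the ones satisfying $\pi_5(G) = \Z$ — the identity component of the base is simply connected with $\pi_2(\threeloop) \isom \Z$. Feeding this in collapses the sequence to $\Z \xrightarrow{\partial} \Z \to \pi_1(\gext{\threeloop}) \to 0$, so that $\pi_1(\gext{\threeloop}) \isom \cker(\partial)$ and the whole proposition reduces to showing that $\partial$ is surjective, hence — being an endomorphism of $\Z$ — an isomorphism.

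The genuine content, and the main obstacle, is exactly this last claim: that $\partial$ sends a generator of $\pi_2(\threeloop)$ to a generator of $\pi_1(S^1)$. I would identify $\partial$ with the Chern-number pairing of the fibration, namely $\partial[f] = \langle f^* c_1, [S^2]\rangle$ for $f\colon S^2 \to \threeloop$; via the Hurewicz isomorphism $\pi_2(\threeloop) \isom \coh_2(\threeloop;\Z)$ (available since the base is simply connected, whence $\coh^2(\threeloop;\Z) \isom \Z$) this reads off the evaluation of the extension class on the generating homology cycle. Thus $\partial$ is an isomorphism exactly when the extension class generates $\coh^2(\threeloop;\Z)$, which is the defining property of the \emph{basic} extension; this is why the hypothesis on the level cannot be dropped, as a level-$k$ extension would yield $\pi_1 \isom \Z/k$. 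To verify that the basic Mickelsson--Faddeev class is a generator, I would transgress it back to $G$: the defining cocycle $\theta(A;x,y) = c_2\int_{S^3}\tr A\,\comm{\ext x}{\ext y}$ is the threefold transgression of a generator of $\coh^5(G;\Z) \isom \Z$, and under $\pi_2(\threeloop)\isom\pi_5(G)$ the generator of $\pi_5(G)$ pairs with that class to $\pm 1$ once $c_2$ is normalised to the basic level. This transgression-and-pairing computation is the $3$-loop analogue of the level-one simple-connectivity statement for the affine Kac--Moody extension of $\bloopg{G}$, and it is where the real work lies; the homotopy-sequence bookkeeping around it is routine.
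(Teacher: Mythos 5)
Your proof is correct, and its skeleton is the same as the paper's: run the long exact homotopy sequence of the extension viewed as a fibration over $\threeloop$, feed in $\pi_1(\threeloop) \isom \pi_4(G) = 0$ (the standing assumption of this subsection) and $\pi_2(\threeloop) \isom \pi_5(G) = \Z$, and read off $\pi_1(\gext{\threeloop})$ from the segment $\Z \to \Z \to \pi_1(\gext{\threeloop}) \to 0$. The packaging differs only slightly -- the paper works directly with the fibration $\threemaps \to \gext{\threeloop} \to \threeloop$ and notes $\pi_n(\threemaps) \isom \pi_n(S^1)$ since $\threeball$ is simply connected, whereas you quotient by the centre first and use weak contractibility of $\threemaps/S^1$; these amount to the same observation. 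The genuine difference is at the crux: the paper simply asserts, on the model of the $1$-loop case in Pressley--Segal (Prop.~4.4.6), that for the basic extension the middle map $\Z \to \Z$ is an isomorphism, while you actually argue it, identifying the connecting homomorphism with the pairing of the circle bundle's first Chern class against the Hurewicz generator of $\pi_2(\threeloop) \isom \coh_2(\threeloop,\Z)$, so that the defining property of the basic class -- generating $\coh^2(\threeloop,\Z) \isom \Z$ -- is exactly what makes $\partial$ an isomorphism, a level-$k$ class giving $\pi_1(\gext{\threeloop}) \isom \Z/k$ instead. What your route buys is a proof, rather than an assertion, of the step the paper treats as definitional, together with an explanation of why the hypothesis \emph{basic} cannot be dropped; what the paper's route buys is brevity, at the cost of burying the normalisation question -- which, as you rightly say, is where the real content lies -- in the citation.
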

\begin{proof}
  This is a simple generalisation of the similar result for the $1$-loop group~\cite[Prop.~4.4.6, p.~50]{PS1986}. The fibration
  \[
    \exact{\threemaps}{\gext{\threeloop}}{\threeloop}
  \]
  induces the homotopy exact sequence
  \begin{align*}
    \dots &\to \pi_2(\threemaps) \to \pi_2(\gext{\threeloop}) \to \pi_2(\threeloop) \\
          &\to \pi_1(\threemaps) \to \pi_1(\gext{\threeloop}) \to 0 .
  \end{align*}
  Here $\pi_1(\threeloop) \isom \pi_4(G) = 0$ and, since (the identity component of) $\threeball$ was assumed to be simply connected earlier, $\pi_n(\threemaps) \isom \pi_n(S^1)$. Moreover, under the basic extension we can identify $\pi_2(\threeloop)$ with $\pi_1(S^1) \isom \Z$.
  Thus we have
  \[
    0 \to \pi_2(\gext{\threeloop}) \to \Z \overset{i}{\to} \Z \to \pi_1(\gext{\threeloop}) \to 0 ,
  \]
  where the map $i:\Z \to \Z$ is an isomorphism. Hence for the identity component of the extension $\pi_1(\gext{\threeloop}) = \pi_2(\gext{\threeloop}) = 0$.
\end{proof}
\begin{remark}
  The condition $\pi_5(G) = \Z$ is justified again for groups $G=SU(p)$ with $p>2$.
\end{remark}

Let us summarise the construction. Originally the aim was to build a $3$-loop crossed module on the Abelian extension
\[
	\exact{\map{\threeloop, S^1}}{\gext{\threeloop}}{\threeloop}
\]
by using the action of the group $\threeball$ on $\threeloop$. However, this action does not lift to the extension and there is no natural morphism $\gext{\threeloop} \to \threeball$ as required by the crossed module. We can then proceed by the following: first, extend the acting group by
\[
  \exact{\threemaps/S^1}{\gext{\threeball}}{\threeball} ,
\]
and impose the action of the extended group $\gext{\threeball}$ to another extension given by
\[
  \exact{S^1}{\gext{\threeloop}}{\gext{\threeloop}/S^1} .
\]
Thus we get an action groupoid $\actg{\gext{\threeloop}}{\gext{\threeball}}$ which fulfills the requirements of a smooth crossed module by the morphisms $\alpha: \gext{\threeball} \to \aut(\gext{\threeloop}) $ and $\delta: \gext{\threeloop} \to \gext{\threeball}$. This is then the $3$-loop group analogue to the $1$-loop group crossed module introduced in \cite{MRW2017}.

\begin{remark}
  There is a slightly different route to the action morphism $\alpha$
  which at the outset does not require simply-connectedness for the group $\gext{\threeloop}$. Instead we assume that the cohomology group $\coh^1(\threeloop,\threemaps)$ is trivial, which is again true for $G=SU(p)$ with $p > 2$.
  We will briefly outline this approach; see \cite{MN2019} and \cite{MW2016} for further details.

  First recall from Appendix~\ref{app:cohomology} that $A^N$ denotes the group of $N$-invariant elements of $A$, where $N$ is a normal subgroup of a Lie group $H$, and $A$ is a smooth $H$-module. Let $\coc_s^1(N,A)$ be the group of smooth $1$-cocycles on $N$ with values in $A$: under the condition of a trivial cohomology group $\coh^1(N,A)$, there is an isomorphism $\coc_s^1(N,A) \isom A/A^N$.

  We can then apply this to $A = \threemaps$, $N=\threeloop$ and $H=\threeball$, and construct the following commuting diagram:~\cite[Sec.~4]{MW2016}
  \[
  \begin{tikzcd}
  	0 \arrow[r] & \threemaps \arrow[d] \arrow[r] & \gext{\threeloop} \arrow[d] \arrow[r] & \threeloop \arrow[d] \arrow[r] & 0 \\
  	0 \arrow[r] & \threemaps/\map{S^2G,S^1} \arrow[r] & \Gamma \arrow[r] & \threeball \arrow[r] & 0
  \end{tikzcd}
  \]
  in which the group $S^2G$ consists of all smooth maps from the $2$-sphere to the group $G$, as introduced earlier at the beginning of this section. We can then identify $\threemaps^{\threeloop}$ with $\map{S^2G,S^1}$, and the group $\Gamma$ is the Lie group extension
  \[
    \exact{\threemaps/\map{S^2G,S^1}}{\Gamma}{\threeball} .
  \]
  Furthermore, there is a smooth action
  \[
    \Gamma \to \aut(\gext{\threeloop}) .
  \]
  By the smooth homomorphism
  \[
    \threemaps/S^1 \to \threemaps/\map{S^2G,S^1}
  \]
  and the trivial action of the group $\map{S^2G,S^1}$ on $\gext{\threeloop}$, we can lift the action $\Gamma \to \aut(\gext{\threeloop})$ to a smooth homomorphism
  \[
    \alpha: \gext{\threeball} \to \aut(\gext{\threeloop}) .
  \]
\end{remark}

\begin{remark}
  One of the virtues of the $1$-loop group crossed module was that it retained the possibility to a physically meaningful $S^1$-action of the group $SO(2)$. In the $3$-loop case one would naturally want the corresponding rotations of the $3$-sphere, but this is not possible in the above construction. This is due to the fact that the definition of the loops in $B^3$ requires a contracted boundary, and hence a choice of a distinguished point which cannot be made equivariant under the $S^3$-action of the group $SO(4)$.
\end{remark}

\idxloppu{loop group!$3$-loop}
%

\section{Bundle gerbes}\label{sec:gerbes} %
\idxalku{bundle gerbe}

As principal bundles are central to the mathematical tapestry of gauge theory, it is natural to wish for higher versions thereof. Here we focus on one particular case, that of (bundle) gerbes, which can be thought of as a categorification of (Hermitian) line bundles with connection. A bundle gerbe also provides a natural geometric realisation for the third cohomology group which has a decisive role in understanding Hamiltonian anomalies -- we study this aspect more closely in Chapters~\ref{chap:ktheory}~and~\ref{chap:hanomaly}, and concentrate here on the definition and basic properties. We follow mostly the introductory works by Murray~\cite{MURRAY1996,MURRAY2010}.

Let us begin fresh with the definition. In the following we assume that the spaces we work with are smooth manifolds.
\begin{definition}\label{def:bundle_gerbe}
  Consider maps $\pi: Y \to X$ and $p: P \to Y^{[2]}$, where $Y^{[2]}$ is the fibre product of $Y$ with itself. The pair $(P,Y)$ is called a \term{bundle gerbe over $X$} if the following holds:
  \begin{enumerate}
  	\item $\pi : Y \to X$ is a surjective submersion.
  	\item $p: P \to Y^{[2]}$ is a $U(1)$-bundle.
  	\item Define smooth maps $\pi_i : Y^{[k]} \to Y^{[k-1]}$ by omitting the $i$th element from the fibre product. The multiplication
  		\[
  			\mu: \pi_3^*(P) \otimes \pi_1^*(P) \to \pi_2^*(P)
  		\]
  		defines a smooth isomorphism of $U(1)$-bundles over $Y^{[3]}$ that is associative. In other words, if we denote by $P_{(y_i,y_j)}$ the fibre of $P$ over $(y_i,y_j)$, the following diagram commutes for all $(y_1,y_2,y_3,y_4) \in Y^{[4]}$.
  		\[
  			\begin{tikzcd}
  				P_{(y_1,y_2)} \otimes P_{(y_2,y_3)} \otimes P_{(y_3,y_4)} \arrow{d} \arrow{r} &
  					P_{(y_1,y_3)} \otimes P_{(y_3,y_4)} \arrow{d} \\
  			    P_{(y_1,y_2)} \otimes P_{(y_2,y_4)} \arrow{r} &
  			    		P_{(y_1,y_4)}
  			\end{tikzcd}
  		\]
  \end{enumerate}
\end{definition}

\begin{remark}
  Originally in \cite{MURRAY1996} the projection $\pi: Y \to X$ was assumed to be a fibration. However, it is enough if the map $\pi$ has only local sections, so that for all points $x\in X$ we have an open neighbourhood $\U$ and a local section $\Gamma: \U \to Y$. Such maps are called \term{locally split}, and in the context of smooth spaces they are surjective submersions~\cite{MuS2000}.
  For the anomalous bundle gerbe introduced in Chapter~\ref{chap:hanomaly} the map $\pi$ is indeed not a fibration.
\end{remark}

A bundle gerbe is closely related to the more general notion of \term{gerbe}, which can be formally defined either as a stack of groupoids~\cite[Ch.~III Déf.~2.1.1, p.~129]{GIRAUD1971} (see for instance \cite{MOERDIJK2002} for an introduction in English) or as a Lie group -banded sheaf of groupoids~\cite[Def.~5.2.4, p.~196]{BRYLINSKI1993}, both armed with certain local properties.\footnote{Brylinski traces his definition back to Giraud's, 
though it takes a few moments of pondering to see how the two are linked.} While these terms are often mixed rather liberally in the literature, the definition of a gerbe is quite a bit more general -- in this light, bundle gerbes as defined above are smooth versions of the Abelian gerbe: the band here is the group of smooth maps $\map{X,U(1)}$~\cite{MURRAY2010}.
In any case, bundle gerbes can be thought of as fibrations with groupoids as fibres. As such, they can represent obstructions for the existence of global bundles with particular properties, the details depending on the setting. In Chapter~\ref{chap:hanomaly}, we deal with such a bundle gerbe obstruction arising in quantum theory of massless fermion fields. 

In order to open up the definition we need to consider the fibre products more carefully. A $k$-fold fibre product $Y^{[k]}$ is a submanifold of the Cartesian product manifold $Y^k$:
\[
  Y^{[k]} \defeq \{(y_1,\dots,y_k) :~ \pi(y_1) = \dots = \pi(y_k) \} \subset Y^k .
\]
The manifold structure is retained since $\pi$ is a submersion. One can then naturally consider differential $p$-forms $\Omega^p(Y^{[k]})$ on the fibre products; moreover, there is a boundary map
\[
  \cob: \Omega^p(Y^{[k-1]}) \to \Omega^p(Y^{[k]}) :~ \cob\omega \defeq \sum_{i=1}^{k} (-1)^{k-1} \pi_i^{*} (\omega) .
\]
From these maps one forms the \term{fundamental complex}
\[
\begin{tikzcd}
  0 \arrow[r] & \Omega^p(X) \arrow[r,"\pi^*"] & \Omega^p(Y) \arrow[r,"\cob"] & \Omega^p(Y^{[2]}) \arrow[r,"\cob"] & \dots
\end{tikzcd}
\]
which is exact for all forms of degree $p \geq 0$. Furthermore, if we have a map $g: Y^{[k-1]} \to A$ to an Abelian group $A$, we define $\cob g : Y^{[k]} \to A$ as an alternating sum
\[
  \cob g = \sum_{i=1}^k (-1)^{i-1}(g \circ \pi_i) .
\]
Finally, given an $A$-bundle $P \to Y^{[k-1]}$ there is another $A$-bundle $\cob P \to Y^{[k]}$ fulfilling
\[
  \cob(P) = \pi_1^*(P) \otimes (\pi_2^*(P))^* \otimes \pi_3^*(P) \otimes \cdots
\]
From these definitions it follows that $\cob^2 g = 1$ and that $\cob^2 P $ is trivial. Note also that the boundary map $\cob$ commutes with the de Rham derivative of forms.

The fibre product $Y^{[2]}$ in Definition~\ref{def:bundle_gerbe} has a natural groupoid structure, and the multiplication property of the bundle gerbe is constructed to be compatible with the in-built groupoid multiplication on the base $Y^{[2]}$.
By multiplication we can take a section $\sigma$ of $\cob P \to Y^{[3]}$, and furthermore a section $\cob\sigma$ of $\cob^2 P \to Y^{[4]}$; since the latter bundle is trivial by the definition above, and following the associativity of the multiplication, we settle for the natural condition $\cob \sigma = 1$.

\begin{definition} %
    \idx{bundle gerbe!trivial}
  A bundle gerbe $(P,Y)$ is \termd{trivial} if there is a $U(1)$-bundle $T \to Y$ such that $(P,Y) \isom (\delta(T),Y)$. If such an isomorphism can be found, we call it a \termd{trivialisation} of $(P,Y)$.
\end{definition}

Recall that for an Abelian $G$ any principal $G$-bundle $P$ has a dual bundle denoted by $P^*$, and that the tensor product of two principal $G$-bundles is a principal $G$-bundle. These concepts extend naturally to bundle gerbes.
\begin{definition}
  Given a bundle gerbe $(P,Y)$ there is a \termd{dual bundle gerbe} $(P,Y)^* \defeq (P^*,Y)$.
\end{definition}
\begin{definition}
  Two bundle gerbes $(P,Y)$ and $(Q,Z)$ form a \termd{product bundle gerbe} $(P,Y) \otimes (Q,Z) \defeq (P\otimes Q, Y \times_X Z)$ where
  \begin{enumerate}
    \item $\pi: Y \times_X Z \to X$ is a surjective submersion.
    \item $P \otimes Q  \to (Y \times_X Z)^{[2]}$ is a $U(1)$-bundle.
  \end{enumerate}
  Here, $(Y \times_X Z)^{[2]} = Y^{[2]} \times_X Z^{[2]}$ and fibrewise
  \[
    (P\otimes Q)_{((y_1,z_1),(y_2,z_2))} = P_{(y_1,y_2)} \otimes Q_{(z_1,z_2)} .
  \]
\end{definition}

From these definitions we conveniently arrive at a suitable concept of \emph{equivalence} for bundle gerbes. Specifically and in contrast to principal bundles, not all trivial bundle gerbes are isomorphic; hence the right equivalence is the stable isomorphism, reminiscent of the stable isomorphism of vector bundles in topological K-theory (see Section~\ref{sec:ktheory}).
\begin{definition} %
    \idx{bundle gerbe!equivalence of}
  Two bundle gerbes $(P,Y)$ and $(Q,Z)$ are \termd{stably equivalent} or simply \termd{equivalent} if the product $(P^*,Y) \otimes (Q,Z)$ is isomorphic to a trivial bundle gerbe.
\end{definition}

Stable equivalence is an equivalence relation of bundle gerbes~ \cite{MuS2000}. Given such an equivalence class one can define a local bundle gerbe in terms of a \emph{family} of $U(1)$-bundles over intersections $\U_i \cap \U_j$, where $\U_i, \U_j$ belong to a good cover of $X$. As such bundle gerbes can work as a categorification of transition functions of fibre bundles as follows. Recall that fibre bundles $Y\to X$ can be characterised up to an isomorphism by local transition functions; these are (complex) functions $Y^{[2]} \to \C$ which over local triple intersections satisfy the cocycle condition. If instead of complex numbers we characterise these cocycles by {$U(1)$-torsors} and replace the equations with coherent isomorphisms, we are back at the definition of the bundle gerbe.

Lastly, we note the following convenient way to illustrate a bundle gerbe $(P,Y)$ as a single diagram:
\[
  \gerbe{P}{Y}{X}
\]

\begin{remark}
  Bundle gerbes of Definition~\ref{def:bundle_gerbe} are sometimes called (principal) $U(1)$-bundle gerbes, or Abelian bundle gerbes. One can work out the theory of more general (principal) $G$-bundle gerbes as well, see \cite{ACJ2005} for the definition and applications in higher Yang-Mills theory.
\end{remark}

\subsection{Differential geometry}

A bundle gerbe is an intrinsically differential geometric object. It is therefore natural to define a connection and its curvature analogous to those of a fibre bundle -- furthermore, there are additional entities related to the bundle gerbes: the $3$-curvature and the curving.

Recall that any principal $U(1)$-bundle $P$ over $Y$ has a connection. This extends naturally to their fibre products, and thus we can formulate a connection on a bundle gerbe.
\begin{definition} %
    \idx{connection!bundle gerbe}
  Let $\conn$ be a connection on $P \to Y^{[2]}$. It extends to \termd{bundle gerbe connection} if it is compatible with the bundle gerbe multiplication. In other words, if for the section $\sigma$ of $\cob P \to Y^{[3]}$ it holds that
  \[
    \sigma^*(\cob\conn) = 0 .
  \]
  The \termd{curvature form} %
    \idx{curvature!$2$-form}
  $F_{A} \in \Omega^2(Y^{[2]})$ of a bundle gerbe connection is defined as $F_{A} \defeq \ext A$ for a $1$-form $A$ of the connection $\conn$. The curvature satisfies $\cob F_A = 0$.
\end{definition}
Bundle gerbe connections always exist: this follows from exactness of the fundamental complex and the fact that we can always define a connection on a principal $U(1)$-bundle.

Since $\cob F_A = 0$ and the fundamental complex is exact, it follows that there must be a $2$-form $f$ on $Y$ such that $\cob f = F_A$. Taking the de Rham derivative we get
\[
  \cob \ext f = \ext \cob f = \ext F_A = 0 ,
\]
and hence for some $3$-form $\omega$ on $X$ we can write $\ext f = \pull{\omega}$. Moreover, $\omega$ is closed, 
as $\pull{\ext \omega} = \ext \pull{\omega} = \ext^2 f = 0$.
\begin{definition}
  The $2$-form $f \in \Omega^2(Y)$ defined by $\cob f = F_A$ is called the \termd{curving} of the bundle gerbe connection $\conn$, and the $3$-form $\omega \in \Omega^3(X)$ defined by $\pull{\omega} = \ext f$ is called the \termd{$3$-curvature} of the bundle gerbe connection $\conn$.
\end{definition}

The $3$-curvature can be used to represent the third cohomology of the space $X$, and hence bundle gerbes can be used as a geometric realisation of these classes.

\subsection{Geometric realisation of cohomology classes}

Giraud's motivation in \cite{GIRAUD1971} for introducing gerbes was to describe non-Abelian cohomology. Specifically, a gerbe with a Lie group band $G$ over a manifold $X$ characterises cohomology classes in $\coh^2(X,\sheaf{G})$ (in sheaf cohomology, so cochains take values in the sheaf of maps to $G$ -- see Appendix~\ref{app:cohomology}). If the band is an Abelian group one gets an isomorphism between the equivalence classes of gerbes and the second (Abelian) cohomology~\cite[Thm.~5.2.8, p.~201]{BRYLINSKI1993}. By a \emph{transgression} this gives a handle on the third integral cohomology group $\coh^3(X,\Z)$,
hence bundle gerbes offer a geometric realisation of the third cohomology. %
  \idx{transgression}

An apt analogue can be given in regards to principal $U(1)$-bundles. A principal $U(1)$-bundle or a complex line bundle over a manifold $X$ can be characterised in terms of the second integral cohomology: it can be locally represented as a cocycle $c_{ij}: \U_i \cap \U_j \to U(1)$, where $\U_i, \U_j$ are elements of an open cover of $X$. The associated characteristic class of the bundle is the first Chern class in $\coh^2(X,\Z)$, and the map $c$ fulfills the familiar first order cocycle condition~\cite[Sec.~23.1]{HJJS2007}
\[
  c_{ij}c_{jk}c_{ki} = 1 \quad \textrm{ on } \U_i \cap \U_j \cap \U_k .
\]

Similarly, to a bundle gerbe we can locally attach a cocycle $c_{ijk}: \U_i \cap \U_j \cap \U_k \to U(1)$ (fulfilling the second-order cocycle condition). We then associate a class in $\coh^3(X,\Z)$ to the bundle gerbe; this is called the Dixmier-Douady class. 

\begin{definition} %
    \idx{Dixmier-Douady class}
  \termd{Dixmier-Douady class} as a characteristic class is a homotopy class of maps
  \[
    X \to K(\Z,3) ,
  \]
  where $X$ is a compact space. It is an element in $\coh^3(X,\Z)$ by the isomorphism
  \[
    \coh^3(X,\Z) \isom [X,K(\Z,3)] .
  \]
\end{definition}
This is a higher degree version of the first Chern class, which is a homotopy class of maps  $X \to K(\Z,2)$.~\cite[Sec.~10.1]{HJJS2007}


The Dixmier-Douady class can then be represented as a closed $3$-form with integral periods. Since any closed $3$-form appears as the $3$-curvature of a bundle gerbe, we can form an equivalence between the Dixmier-Douady classes and bundle gerbes.
\begin{proposition} %
    \idx{bundle gerbe!equivalence of}
  Two bundle gerbes are equivalent (stably isomorphic) if and only if their Dixmier-Douady classes match. In particular, a bundle gerbe is trivial if and only if its Dixmier-Douady class vanishes. Hence there is an isomorphism
  \[
    \text{Stable isomorphism classes of bundle gerbes on $X$} \isom \coh^3(X,\Z) .
  \]
\end{proposition}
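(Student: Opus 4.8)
The plan is to realise the Dixmier-Douady class as a group homomorphism on the set of stable isomorphism classes and then show it is a bijection onto $\coh^3(X,\Z)$. First I would note that the stable isomorphism classes of bundle gerbes on $X$ form an Abelian group under the tensor product, with unit the class of trivial gerbes and inverse given by the dual $(P,Y)^* = (P^*,Y)$; this is standard and follows from the definitions of the product and dual gerbes. The whole statement then reduces to producing a group isomorphism $\mathrm{DD} \colon \{\text{stable isomorphism classes}\} \to \coh^3(X,\Z)$, since the two displayed equivalences are precisely the injectivity and well-definedness of this map once one knows that $\mathrm{DD}$ is additive.

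Next I would verify that $\mathrm{DD}$ is a well-defined homomorphism. Using the local description recorded above --- the cocycle $c_{ijk} \colon \U_i \cap \U_j \cap \U_k \to U(1)$ satisfying the second-order cocycle condition over a good cover --- the class $\mathrm{DD}(P,Y)$ is represented by $[c_{ijk}] \in \coh^3(X,\Z)$. Naturality of this construction shows that isomorphic gerbes carry the same class, while additivity $\mathrm{DD}((P,Y)\otimes(Q,Z)) = \mathrm{DD}(P,Y) + \mathrm{DD}(Q,Z)$ and duality $\mathrm{DD}((P,Y)^*) = -\mathrm{DD}(P,Y)$ follow from the fact that the cocycle of a product (respectively dual) gerbe is the product (respectively inverse) of the cocycles. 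For a trivial gerbe $(\cob(T),Y)$ the associated cocycle is a coboundary, so its class vanishes (this uses $\cob^2 = 1$). Combined with additivity, this yields the \emph{only if} direction of the first equivalence: if $(P^*,Y)\otimes(Q,Z)$ is trivial then $\mathrm{DD}(Q,Z) - \mathrm{DD}(P,Y) = 0$.

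The crux --- and what I expect to be the main obstacle --- is the converse, namely that a bundle gerbe with vanishing Dixmier-Douady class is \emph{trivial}. Here one cannot stay at the level of cohomology but must reconstruct an honest trivialising bundle. Concretely, vanishing of $[c_{ijk}]$ means $c_{ijk} = (\cob b)_{ijk}$ for some $U(1)$-valued $b_{ij}$ on the double intersections, and the real work is to assemble this local data --- over a surjective submersion $\pi \colon Y \to X$ rather than merely a good cover --- into a genuine global $U(1)$-bundle $T \to Y$ with $(P,Y) \isom (\cob(T),Y)$. I would carry this out in the fibre-product picture, using exactness of the fundamental complex to solve the relevant coboundary equations compatibly across the $Y^{[k]}$, and then checking that the resulting $T$ intertwines the gerbe multiplication $\mu$ with the canonical multiplication on $\cob(T)$. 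Granting this, the general first equivalence follows by applying it to the product $(P^*,Y)\otimes(Q,Z)$.

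Finally, surjectivity of $\mathrm{DD}$ is essentially handed to us: every class in $\coh^3(X,\Z)$ is represented by a closed $3$-form with integral periods through de Rham theory, and by the statement recorded immediately above the proposition every such closed $3$-form occurs as the $3$-curvature of some bundle gerbe, whose Dixmier-Douady class is then the prescribed integral class. Assembling well-definedness, injectivity, and surjectivity gives the claimed isomorphism of $\{\text{stable isomorphism classes}\}$ with $\coh^3(X,\Z)$.
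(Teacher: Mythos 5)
Your overall strategy---organise the stable isomorphism classes into an Abelian group, show the Dixmier-Douady assignment is additive, kills exactly the trivial gerbes, and is bijective---is the standard route, and it is essentially what the paper delegates away: the paper's own proof consists of the single remark that the transgression $\coh^2(X,\sheaf{U(1)}) \isom \coh^3(X,\Z)$ is the central ingredient, with all details left to \cite{MURRAY1996} (and, for stable isomorphism specifically, \cite{MuS2000}). Your identification of the crux (vanishing class $\Rightarrow$ an honest trivialising bundle $T \to Y$, not just coboundary data on a good cover) is the right one; note, however, that the exactness of the fundamental complex as stated in the paper concerns differential forms $\Omega^p(Y^{[k]})$ only, not $U(1)$-bundles, so you cannot simply cite it to ``solve the coboundary equations'' at the bundle level --- the bundle-level analogue is precisely the descent argument that constitutes the work in Murray's proof.

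The genuine gap is in your surjectivity argument. You propose to represent a given class in $\coh^3(X,\Z)$ by a closed $3$-form with integral periods and then take a bundle gerbe having that form as its $3$-curvature. But de Rham theory only sees $\coh^3(X,\Z)$ modulo torsion: a torsion class has vanishing image in $\coh^3(X,\R)$ and is not represented by any closed $3$-form, and even for a non-torsion class the gerbe you obtain is only pinned down up to a torsion ambiguity in its integral class. (You are following the paper's informal prose preceding the proposition, but that prose is itself loose on this point; the paper elsewhere explicitly warns that passing to de Rham representatives loses torsion.) The correct argument works on the \v{C}ech side of the transgression isomorphism, which holds integrally: given a cocycle $g_{ijk}$ representing a class in $\coh^2(X,\sheaf{U(1)}) \isom \coh^3(X,\Z)$, build the gerbe directly from it --- take $Y = \bigsqcup_i \U_i \to X$, so that $Y^{[2]} = \bigsqcup_{i,j}\, \U_i \cap \U_j$, let $P$ be the trivial $U(1)$-bundle over each component, and define the gerbe multiplication over $Y^{[3]}$ by $g_{ijk}$. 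This produces a bundle gerbe whose Dixmier-Douady class is exactly the prescribed one, torsion included, which is what surjectivity requires.
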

\begin{proof}
  Central component is the transgression
  \[
    \coh^2(X,\sheaf{U(1)}) \isom \coh^3(X,\Z) .
  \]
  For details, see \cite{MURRAY1996}.
\end{proof}

\begin{example}[Lifting bundle gerbe~{\cite[Sec.~6.1]{MURRAY2010}}] %
    \idx{bundle gerbe!lifting}
  The Hamiltonian anomaly manifests as a projective bundle of Fock spaces, and this is tied to the problem of lifting the group of gauge transformations over the moduli space to its extension by an Abelian group. Anticipating this, we can as an example consider the lifting bundle gerbe related to such prolongation problems of group actions. The gist of this construction is as follows. Consider a central extension of a Lie group $G$
  \[
    \exact{U(1)}{\gext{G}}{G} ,
  \]
  and a principal $G$-bundle $Y \to X$. Does this bundle lift to a principal $\gext{G}$-bundle $\gext{Y} \to X$?
  The answer lies in topology. Transition functions of the bundle $Y \to X$
  induce a $U(1)$-valued cocycle as above, and the lift is possible if and only if the corresponding cohomology class in $\coh^2(X,\sheaf{U(1)}) \isom \coh^3(X,\Z))$ is trivial.

  Associated to the bundle $Y \to X$ we can define a map $Y^{[2]} \to G$, and the $U(1)$-bundle $\gext{G} \to G$ can be pulled back to a $U(1)$-bundle $Q \to Y^{[2]}$. One can show that there is a sound definition for the fibre multiplication, and thus a well-defined bundle gerbe $(Q,Y)$ called the \term{lifting bundle gerbe}. Its Dixmier-Douady class is by definition the obstruction to the desired lift, and thus the lifting bundle gerbe is trivial if and only if the $G$-bundle $Y\to X$ lifts to a $\gext{G}$-bundle. This result can be extended to hold for more general Lie group extensions besides the central extension -- in particular, it applies to the Abelian extension of the Hamiltonian anomaly~\cite{HMSV2013}.
\end{example}

\begin{remark} %
    \idx{Kac-Moody!group}
  There is a natural connection to the affine Kac-Moody groups in terms of the canonical generator $\coh^3(G,\Z) \isom \Z$ which underlies the theory of central extensions of $\loopg{G}$ for a compact Lie group $G$~\cite[Ch.~4]{PS1986}. This can be characterised by bundle gerbes over $G$, the cohomology class of which by the transgression map gives a central extension of the loop group $\loopg{G}$. Thus a $U(1)$-bundle over the loop group can be realised as an element in $\coh^3(G,\Z)$ with the generator
  \[
    \omega = k \tr \left( \inv{g}\ext g \wedge \comm{\inv{g}\ext g}{\inv{g} \ext g} \right),
  \]
  where $k$ is the normalisation parameter and $\inv{g}\ext g$ is the Maurer-Cartan form on an adjoint representation of $G$. %
  This gives the representation of the Dixmier-Douady class of the related bundle gerbe. See \cite[Ch.~6]{BRYLINSKI1993} for details and more examples regarding applications to loop spaces.
\end{remark}

\begin{remark}
  In the spirit of continuing categorification one can also introduce higher versions of the (bundle) gerbes and thus construct realisations of higher cohomology classes. Note in particular the bundle $2$-gerbe with useful applications as outlined in \cite{STEVENSON2004}.%
    \footnote{See also \cite{WALDORF2007} and \cite{BUNK2017} for the $2$-category structures defined on bundle gerbes.}
  Similarly to the bundle gerbe, the $2$-gerbes provide a geometric handle to the cohomology classes of degree $4$. Whereas the cohomology classification of bundle gerbes has a direct link to the anomalous gauge algebra $2$-cocycles in quantum physics, bundle $2$-gerbes may illuminate the failing of the Jacobi identity described by gauge algebra $3$-cocycles~\cite{CMW1997,JACKIW1985}.
\end{remark}

\idxloppu{bundle gerbe}
%

\chapter{K-theory and Dirac operators}\label{chap:ktheory}

In order to get a proper handle on the topological aspects of gauge symmetries we need to discuss the topology of Dirac operators. To this end we introduce the rudiments of K-theory as a means to deal with homotopic families of Dirac operators, and show how this connects to the cohomology classification of bundle gerbes.

K-theory and operator topology is a vast landscape in its own right with many applications. One of special importance in gauge theory is the celebrated Atiyah-Singer index theorem: that the index of a Fredholm operator $D$ defined as a map
\[
	\ind D = \dim( \ker D) - \dim( \cker D) \in \Z
\]
in fact gives a map to a K-theory group, and thus is a topological invariant which can be related to suitable characteristic classes. We will briefly discuss the index theorem as applied to Dirac operators. In the next chapter we will see how this relates to the gauge anomalies; namely, how the families index theorem yields the symmetry-breaking cocycle terms in the Hamiltonian anomaly. This chapter is meant largely to give the general background for this particular application.

The origin of K-theory in quantisation anomalies is quite natural. The Dirac 
operators are usually by construction self-adjoint and have polarised essential spectrum. Homotopy classes of such operators can be used as a definition for the elements in the group $K^1(X)$, where $X$ is the parameter space of the operator family. The characterisation of these elements -- ignoring possible torsion -- can be done via cohomology, where we also find the Dixmier-Douady class of the associated bundle gerbe. This then is also the topological source for the anomalous terms in the current algebra commutators we will explore in Chapter~\ref{chap:hanomaly}.

\section{Some fundamentals from operator theory}

Let us begin by recalling some of the basic definitions from operator theory. Let $\Hs$ be a complex Hilbert space and consider a linear operator $D:\Hs \to \Hs$. By the phrase \laina{operator $D$ on $\Hs$} we mean that the operator $D$ is defined on some (dense) subset of $\Hs$. For composite operators we expect that the intersection of their domains will also be dense.

\begin{definition}
	An operator $D$ on $\Hs$ is \termd{closed} if its graph is closed in $\Hs\oplus \Hs$. By \termd{graph} we mean the set
	\[
		\G_D = \{ (v,w) ~\mid~ Dv = w \} \subset \Hs\oplus \Hs .
	\]
	The \termd{closure} $\bar{D}$ of $D$ is its smallest closed extension: a closed operator $\bar{D}$ such that $\bar{D} = D$ on $\dom(D)$ and $\dom(D) \subset \dom(\bar{D})$.
\end{definition}

\begin{definition} %
		\idx{operator!bounded}
	An operator $D$ on $\Hs$ is \termd{bounded} with respect to a given norm in $\Hs$ if there is some constant $m\geq 0$ such that
	\[
  	\norm{Dv} \leq m\norm{v} \quad \forall v \in \Hs \, .
	\]
	The least upper bound of $\norm{Dv}$ is called the \termd{norm of the operator $D$}.
\end{definition}
The \term{set of bounded operators on $\Hs$} is denoted by $\bnd{\Hs}$. If $\Hs$ is a Banach space (complete in the metric induced by the norm), the algebra generated by $\bnd{\Hs}$ with respect to the operator composition is a \term{Banach algebra}. The Banach algebra $\bnd{\Hs}$ is \term{unital}\idx{algebra!unital Banach} if the identity operator $\idoperator$ on $\Hs$ has the norm of $1$.

\begin{definition}
	An operator $D$ on $\Hs$ is \termd{compact} if there exists a neighbourhood $\U$ of the origin $0 \in \Hs$ and a compact subset $\V \subset \Hs$ such that $D(\U) \subset \V$.
\end{definition}

\begin{definition} %
		\idx{operator!self-adjoint}
	An operator $D$ on $\Hs$ is \termd{self-adjoint} with respect to the inner product on $\Hs$ if $\dom(D) = \dom(D^*)$ and
	\[
		\ip{Du,v} = \ip{u,Dv} \quad \forall u,v \in \Hs \, .
	\]
	The operator is \termd{essentially self-adjoint} if its closure is self-adjoint.
\end{definition}

Let $D$ be a closed operator, and either bounded or unbounded but self-adjoint. The \term{resolvent set} $\rho(D) \subset \C$ constitutes of complex numbers $\alpha$ for which the operator $\alpha \idoperator - D$ is a bijection of the domain $\dom(D)$ and has a bounded inverse on $\Hs$. This inverse is then called the \term{resolvent} of the operator $D$. Note that the \term{spectrum} of the operator $D$ can now be defined as the complement of the resolvent set: $\spec(D) \defeq \C \setminus \rho(D)$. From the definitions it follows that the spectrum of a self-adjoint operator is always a real subset $\spec(D) \subset \R$.%
\footnote{Note that the spectrum in this definition is not necessarily exactly the set of eigenvalues of $D$, but can be larger.}

Is the resolvent itself a compact operator? For \term{positive} operators, that is if $\ip{Dv,v} \geq 0$ for all $v\in \Hs$, the compactness can be solved by studying the operator $\inv{(\idoperator + D)}$.\footnote{The compactness of the resolvent does not depend on the choice of the number $\alpha$, and so in the case of positive operators we can use $\alpha=-1$, see \cite[p.~273]{GVF2001}.} Failing the positivity condition, we can consider the \term{Laplacian} $D^2$ and give the following definition.
\begin{definition}
	An operator $D$ on $\Hs$ has a \termd{compact resolvent} if $\inv{(\idoperator + D^2)}$ is compact.
\end{definition}

\begin{remark}
	The idea of the resolvent is linked to the \term{spectral theorem}. In essence, this means that self-adjoint operators can be realised as multiplication operators through unitary equivalence; to discuss this theorem properly we would have to introduce spectral measures of the operator and so we will not follow this further (see \cite{RS1980} for an introduction to the topic). Nevertheless, as a result, the underlying Hilbert space admits a spectral decomposition.
		\idx{observable}
	The usefulness of this in physics comes from the idea of representing quantum mechanical observables as self-adjoint operators on a Hilbert space, the state space of the system. One can describe the physical dynamics with unitary operators depending continuously on some external parameter (often the time variable). The self-adjoint generator of this group of unitary operators is called the \term{Hamiltonian} of the system, and changes in the system state can be linked to the spectrum of this operator.
	However, it is not always clear if this setting can be built in a fashion that is mathematically sound, see for instance \cite[Sec.~VII.11]{RS1980} for a discussion on possible problems.
\end{remark}

\idx{operator!Fredholm}
There are two important classes of operators we are interested in: Fredholm operators and elliptic operators.
\begin{definition}
	A bounded operator $D$ on $\Hs$ is \termd{Fredholm} if its kernel and cokernel are finite-dimensional, and its range is closed.
\end{definition}
Fredholm operators have a well-defined \term{analytical index}: %
	\idx{index!of Fredholm operator}
	\idx{index!analytical}
	\index{index|seealso {index theorem}}
\[
	\ind D =  \dim( \ker D) - \dim( \cker D) .
\]
Note that these definitions extend easily to operators for which the domain and codomain are different Hilbert spaces.

Let $D: \sct{E} \to \sct{F}$ be a \term{differential operator} %
	\idx{operator!differential}
on vector bundles $E,F \to X$ over a manifold $X$. In local coordinates we can write
\[
	D = \sum_{|\alpha| \leq m} A_{\alpha}(x) D^{|\alpha|} ,
\]
where $\alpha$ is the multi-index $(\alpha_1,\dots,\alpha_n)$ with $|\alpha| = \sum_k \alpha_k$, and
\[
	D^{|\alpha|} = \partial^{|\alpha|}/\partial x_1^{\alpha_1} \cdots \partial x_n^{\alpha_n} .
\]
The coefficients $A_{\alpha}(x)$ are matrices of smooth functions on $X$ fulfilling $A_{\alpha} \neq 0$ for some $\alpha$ with $|\alpha| = m$. The integer $m$ is then called the \term{order} of the operator. In the following we will abbreviate the local expression by $D = A_{\alpha}(x) D^{\alpha}$.

Let $\xi$ be a cotangent vector at a point $x\in X$ such that locally $\xi = \xi_i \ext x^i$ for some scalar coefficients $\xi_i$. The \term{principal symbol} %
	\idx{symbol!differential operator}
of the operator $D$ is a map $\sigma_{\xi}(D):E_x \to F_x$ defined locally by
\[
	\sigma_{\xi}(D) = i^m \sum_{|\alpha| = m} A_{\alpha}(x) \xi^{\alpha} .
\]
The principal symbol is a section of the bundle $(\odot^mTX)\otimes\Hom(E,F)$, represented by the coefficients $\{i^m A_{\alpha}\}_{|\alpha|=m}$.
\begin{definition} %
		\idx{operator!elliptic}
	A differential operator $D$ is \termd{elliptic} if its principal symbol $\sigma_{\xi}(D)$ is an isomorphism for all non-zero cotangent vectors $\xi \in T^{*}X$ .
\end{definition}

Any elliptic operator can be extended to a Fredholm operator with a matching analytical index. Moreover, on compact spaces $X$ all elliptic operators are Fredholm.~\cite[Sec.~III.5]{LM1989} The Hilbert spaces for these Fredholm extensions are the Sobolev space completions of the sections $\sct{E}$ and $\sct{F}$. %
	\idx{operator!Fredholm extension of}

\subsection[{$C^*$-algebras}]{\mots{\boldsymbol{C^*}}-algebras}\label{sec:cstar}
\idxalku{C@$C^*$-algebra}

\idx{algebra!$^*$-}
We introduce one additional operator in order to define $^*$-algebras: Let $\A$ be an associative algebra over $\C$. The operator $(\cdot)^*$ on $\A$ is called an \term{involution} if it satisfies the following conditions for all $a,b \in \A$:
\begin{enumerate}
	\item $(ab)^* = b^*a^*$.
	\item $(a^*)^* = a$.
\end{enumerate}
If furthermore the conditions $(a+b)^* = a^* + b^*$ and $(\lambda a)^* = \bar{\lambda}a^*$ are satisfied for all $a,b \in \A$ and $\lambda \in \C$, we call the algebra $(\A,^*)$ a \term{$*$-algebra}.
\begin{definition}
	A Banach algebra $\A$ over $\C$ is a \termd{$\boldsymbol{C^*}$-algebra} if it is a $^*$-algebra $\A$ and for all $a\in \A$ the following condition holds:
	\[
		\norm{a^*a} = \norm{a}\norm{a^*} = \norm{a}^2 .
	\]
\end{definition}

\begin{example}[Algebra of continuous functions]
	Let $X$ be a compact Hausdorff topological space. The set of all continuous complex functions $C(X)$ on $X$ has a natural Abelian $C^*$-algebra structure. Its unit is the constant function $f_1: x \mapsto 1$, and the Banach-algebra structure is given by the sup-norm
	\[
		\norm{f} \defeq \sup_{x\in X} \abs{f(x)} , \quad f \in C(X) .
	\]
	Involution can be defined as the complex conjugate $f^*(x) \defeq \overline{f(x)}$.
\end{example}

The algebra of bounded operators $\bnd{\Hs}$ is a $C^*$-algebra, when the involution is defined to be the operator adjoint. Indeed, by \emph{Gelfand-Naimark theorem},%
	\idx{theorem!Gelfand-Naimark}
any given $C^*$-algebra is isometrically $^*$-isomorphic to a $C^*$-algebra $\bnd{\Hs}$ of some Hilbert space $\Hs$; that is, the norm and the involution are preserved under the isomorphism.

\begin{remark}
	The Gelfand-Naimark equivalence of $C^*$-algebras with operators on a Hilbert spaces demands that the operators are bounded. This is somewhat problematic in quantum physics, where the relevant operators are usually unbounded. Various refinements to remedy this shortcoming have been proposed; for a \emph{higher perspective}, see \cite{MITCHENER2018} in which the problem is addressed by introducing an algebroid structure.
\end{remark}

\subsection*{CAR-algebra}\idx{canonical anticommutation relations (CAR)!algebra}

Recall that in Chapter~\ref{chap:gauge} we sketched the basic idea for transforming classical fields into quantum operators. For quantum field theories on Minkowskian manifolds this leads to $C^*$-algebras generated by operators subject to the \emph{Haag-Kastler axioms}~\cite{HK1964}. The $C^*$-algebra structure is necessary in order to include proper causal relations into the algebra of observables. For fermionic fields this can be extended to include globally hyperbolic Lorentzian manifolds~\cite{DIMOCK1982} and, as the end result, one recovers a functorial isomorphism linking the space-time structures and the algebraic $C^*$-structures.

Let $(V,\ip{\cdot,\cdot})$ be a topological vector space endowed with a nondegenerate symmetric bilinear form with values in $\R$.
The elements of $V$ generate a (unital) $^*$-algebra by the introduction of a bilinear multiplication and a linear involution $^*: V \to V$.%
	\footnote{The identity map always yields a trivial involution; for the multiplication we demand \emph{associativity} in light of the preceding definition.}
Using the symmetric form we can then define the CAR-algebra subject to the following conditions:
\begin{enumerate}
	\item $\acomm{f}{g} = \ip{f,g} \idoperator $.
	\item $f^* = f$.
\end{enumerate}
A standard example is given by a suitable algebra of $n\times n$-matrices with (conjugate) transpose as the involution, such as the subspace of symmetric or Hermitian matrices. More generally one introduces a linear map $\phi:V\to \bnd{\Hs}$ to bounded self-adjoint operators on a Hilbert space $\Hs$, and demands the relation
\[
	\acomm{\phi(f)}{\phi(g)} = \ip{f,g} \idoperator , \quad \text{for } f,g \in V .
\]
The map $\phi$ is then called the \term{representation of the CAR over $V$ in $\Hs$}; sometimes this is also called the Clifford system or Clifford relation on $\Hs$, reflecting the history of these relations in Clifford algebras.

For operators on a Hilbert space $\Hs$ we can define the analogous \term{$\boldsymbol{C^*}$ CAR-algebra} as the completion of the above unital $^*$-algebra. This algebra is generated by the symbols $\{B(f),B^*(f) : f \in \Hs \}$ and fulfills
\begin{enumerate}
	\item $\acomm{B(f)}{B^*(g)} = \ip{f,g} \idoperator $.
	\item $\acomm{B(f)}{B(g)} = \acomm{B^*(f)}{B^*(g)} = 0$.
	\item $(B(f))^* = B^*(f)$.
\end{enumerate}
Recall from Chapter~\ref{chap:gauge} that these are the relations for the creation and annihilation operators on the particle states. A representation of the CAR-algebra can be then built on the fermionic Fock space; for details, see \cite{ARAKI1987}.

\begin{remark}
	For the bosonic theory we would consider canonical \emph{commutation} relations, for which the anticommutator in the above conditions is changed into a commutator. Mixtures of these two then lead to free-particle states of mixed particle theory (and by perturbation, to interacting systems).
\end{remark}

\idxloppu{C@$C^*$-algebra}

\section{Rudiments of K-theory}\label{sec:ktheory}

In this section we will briefly introduce the K-theories of topological vector bundles and of operators on a Hilbert space. These two concepts are closely linked, and together they will give us useful tools for studying the topological properties of families of Dirac operators, such as the operator index theorem introduced in a later section. We rely mostly on the introductory materials provided by \cite{ATIYAH1967,HATCHER2017} (topological K-theory) and \cite{BLACKADAR1986,GVF2001} (operator K-theory).

\subsection{Topological K-theory} %
\idxalku{K-theory!topological}

Let $X$ be a compact topological manifold. We consider complex vector bundles for which globally the fibre dimension may vary, as long as it is locally constant. Now for any two topological vector bundles $E \to X$, $F \to X$ we can take the direct sum $E\oplus F \to X$, itself a topological vector bundle of (local) dimension $\dim(E) + \dim(F)$. Thus the sum respects isomorphism classes of vector bundles so that
\[
	[E] \oplus [F] = [E \oplus F] ,
\]
and these classes form a monoid $V(X)$ under the sum operation, with the $0$-dimensional trivial vector bundle as the unit called the \term{zero bundle}. We can extend this to a group structure as follows.

First we define vector bundles $E$ and $F$ to be \term{stably isomorphic} if there are trivial vector bundles such that
\[
	E \oplus I^k \isom F \oplus I^l .
\]
We denote the stable isomorphism by $E \sim_S F$. This is an equivalence relation in $V(X)$, and the quotient set $V(X)/{\sim_S}$ is called the \term{reduced $K$-group of $X$} denoted by $\redK(X)$: the group structure is natural, since for every vector bundle we can find a complement bundle with which the direct sum is stably isomorphic to the zero bundle.

We can further extend the reduced $K$-group to a \term{$K$-group of $X$} %
	\idx{K-group!topological}
denoted by $K(X)$ by defining elements as virtual differences
\[
	E - F \in K(X) ,
\]
with respect to the equivalence relation
\[
	E_1 - F_1 = E_2 - F_2 \quad \Leftrightarrow \quad E_1 \oplus F_2 \sim_S E_2 \oplus F_1 .
\]
The group addition in $K(X)$ is defined by the taking the direct sum of the minuends and subtrahends, and the unit element is the class given by $E - E$. The group $K(X)$ is obviously Abelian. We sometimes also use the term \term{K-theory} of $X$ for the group, if the context is clear.

This procedure of constructing groups from monoids is general and such groups go by the name of Grothendieck groups (of commutative monoids). In the case of vector bundles, the contravariant functor assigning Abelian groups to topological spaces is called the \term{K-theory functor} or the Grothendieck functor.%
	\idx{K-theory!functor}
	\idx{Groethendieck functor}

\begin{remark}
	The fibrewise tensor product of vector bundles introduces a multiplication on $K(X)$, and thus the topological K-theory is endowed with a natural \emph{ring structure} as well. In general, this property does not extend to noncommutative operator K-theory.
\end{remark}

\subsubsection[{Groups $K^n$}]{Groups \mots{\boldsymbol{K^n}}}\label{sec:kn-groups} %
\idx{K-group!$K^n$}

The elementary topological K-theory group introduced above is often denoted by $K^0(X)$ -- the degree may be omitted if the context is clear. The degree is in relation to general $K^n(X)$ groups defined as (reduced) suspensions of the base.%
	\footnote{Reduced suspension is analogous to suspension in \emph{pointed spaces}. For $X$ compact Hausdorff, both suspension and reduced suspension will yield the same K-theory, see~\cite[Ch.~I]{BLACKADAR1986}.}
We define
\[
	K^{-n}(X) \defeq K(\Sigma^n X) .
\]

\begin{remark}
	The use of \emph{negative} indexing is purely a convention in reference to the cohomology aspect of sequences of $K$-groups. The idea is that there is a coboundary-like operator mapping from $K^{n}$ to $K^{n+1}$, and that the $K$-groups represent a (periodic) cohomology theory in this sense; see Section~\ref{sec:bott} for an example of the Bott periodicity. In the complex case the period is $2$, so we usually use just $K^0$ and $K^1$, since all the other groups are isomorphic to these two.
\end{remark}

\idxloppu{K-theory!topological}

\subsection{Families of operators and K-theory}

Topological K-theory can be generalised to a homotopy classification of operators acting on vector bundles. Thus we have a K-theory functor not only on topological spaces, but also on $C^*$-algebras. 

To begin with, we look at the index map of Fredholm operators $\F(\Hs_1,\Hs_2)$ between Hilbert spaces $\Hs_1$ and $\Hs_2$: %
	\idx{index!of Fredholm operator}
\[
	\ind: \F \to \Z : D \mapsto \dim(\ker D) - \dim(\cker D) .
\]
The index map is continuous and thus locally constant with respect to the operator norm topology~\cite[Sec.~III.7]{AS1968, LM1989}. In fact, the index defines a bijection between integers and the connected components of the operator space. If we have Fredholm operators $\F(\Hs)$ on a Hilbert space $\Hs$, we can use the composition of operators to give the space of connected components $\pi_0 \F$ a semigroup structure. The index map then becomes a group isomorphism
\[
	\ind : \pi_0 \F(\Hs) \to \Z .
\]
In general, the index of an elliptic operator is defined as the index of its Fredholm extension~\cite[Cor.~III.5.3]{LM1989}.

\idx{operator!elliptic}
Let $E$ and $F$ be vector bundles over $X$ and consider a continuous family of elliptic operators $D_t : \sct{E} \to \sct{F}$, where the parametrisation is over the unit interval, $t \in I$; a family $D_t$ is \term{continuous} if the operators
\[
	D_t = A^{\alpha}(x,t)D^{\alpha}
\]
are continuous both in $x \in X$ and $t \in I$. This implies that the order of $P_t$ is constant and that the Fredholm extension can be made continuous.%
	\footnote{We can use the norm topology for bounded operators. In the unbounded case one needs to apply additional conditions, for instance by introducing the Riesz map $D \mapsto D/(\norm{D}+1)$ to bounded operators. See \cite[Sec.~2.6]{BB2013} for a more detailed account.}
Hence $\ind P_t$ is constant in $I$ as the index map of Fredholm operators.

\idxalku{operator!homotopy class of}
Now, two elliptic operators $P_1,P_2$ over $(X;E,F)$ are \term{homotopic} if they can be joined by a continuous family $P_t$ of elliptic operators. Thus we get that the index of the elliptic operator on a compact manifold depends only on its homotopy class, which in turn is isomorphic to the symbol class. Moreover, since the principal symbol of the operator defines an element of the K-theory $K(TX)$ (where $TX$ is the tangent bundle) and the index depends only on the symbol class of the operator, the index map induces a homomorphism
\[
	K(TX) \to \Z .
\]
The index theorem states that the analytical and topological indices coincide as such homomorphisms.~\cite[Sect. III.13]{LM1989}

\idxalku{operator family}
The homotopy parametrisation of operators gives us a simple first instance of families of operators: that of a product family, where for the continuous family $P_t$ of operators the parameter $t$ lies not in the unit interval $I$ but in a more general Hausdorff space $A$. If $X$ is a compact manifold and constant in $A$, we consider the product space $X \times A$.
This way we get an {index bundle} for Fredholm operators.

More generally, letting the parameter space $A$ be homotopically nontrivial, we need to consider twisting of the manifold and the operators over $A$. Instead of a product space $X \times A$, we have a bundle of manifolds $X \to A$ over the parameter space. The analytical index is then a map
\[
	K(TX) \to K(A) ,
\]
where $TX$ is the tangent bundle along the fibres, and the symbol class of the operator is an element of the K-theory $K(TX)$.~\cite{AS1971}.

Let $A \to \F(\Hs_1,\Hs_2)$ be a continuous family of Fredholm operators $T:\Hs_1 \to \Hs_2$, where $\Hs_i$ are Hilbert spaces and the parameter space $A$ is a compact topological space. We assume that the Hilbert spaces are fixed over $A$.%
	\footnote{Cf. Remark 3.31 in \cite[p. 85]{BB2013}.}
If $A$ is connected, we have
\[
	\ind T_a = \ind T_{a'}	\quad \textrm{for all } a,a' \in A ,
\]
and as above, each Fredholm operator is assigned an integer and for the connected parameter space there is an isomorphism
\[
	\ind: [A,\F] \to \Z
\]
on the homotopy set $[A,\F]$~\cite[Sec.~3.7]{BB2013}.

For a continuous family with \emph{constant kernel dimension}, there are well-defined vector bundles $\ker T$ and $\cker T$ over $A$, and we can consider the index via topological K-theory.
For each continuous family $A \to \F$ of Fredholm operators in a Hilbert space $\Hs$, we can then assign an \term{index bundle} %
	\idx{index!of operator family}
\[
	\ind T \coloneqq [\ker T ] - [\cker T] \in K(A) .
\]

Note that this definition relies on the fact that the dimensions of the nullspaces do not vary. However, we emphasise the following: it can be shown that the K-theoretic classification has sufficient stability \emph{even in the general case where the dimensions of the operator nullspaces may vary with a topologically nontrivial parameter space $A$}~\cites[Sec.~III.8]{LM1989}[p. 84]{BB2013}. In this case the K-theory does not map to integers, but the definition is nevertheless formally meaningful.

In the case of $\Hs_1 = \Hs_2$, the homotopy invariance generalises to the statement that the space of Fredholm operators $\F(\Hs)$ actually constitute a classifying space for the K-theory functor: there is a natural isomorphism 
\[
	\ind : [A,\F] \to K(A)
\]
such that for any continuous $f:A' \to A$, where $A'$ and $A$ are compact Hausdorff,
\[
	\ind \circ \, \pull{f} = \pull{f} \circ \ind ,
\]
where $\pi:\F \to A$ is the projection.

\idxloppu{operator family}

\subsubsection[{Group $K^1$}]{Group \mots{\boldsymbol{K^1}}} %
\idx{K-theory!operator}
\idx{K-group!$K^1$}

The index map of homotopies is a degree $0$ K-theory:
\[
	[A,\F] \isom K^0(A) .
\]
This isomorphism gives a complete classification of the homotopy type of the space of Fredholm operators on a Hilbert space $\Hs$. In anticipation of families of Dirac operators (Section~\ref{sec:dirac_families}), we present the refinement into the group $K^1(A)$ as follows:
\begin{theorem}\label{theorem:k1-group}
	Let $\Hs$ be a separable Hilbert space, and let $\F^*$ be the space of self-adjoint Fredholm operators on $\Hs$ with polarised essential spectrum. Then for any compact Hausdorff space $A$, the space $\F^*$ is a classifying space for the $K^1$-functor, that is
	\[
	  K^1(A) \isom [A,\F^*] .
	\]
\end{theorem}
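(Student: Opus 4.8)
The plan is to reduce the statement to the classification $[A,\F]\isom K^0(A)$ already recorded in this chapter, together with Bott periodicity, by identifying the homotopy type of $\F^*$ with the loop space $\Omega\F$ of the ordinary Fredholm operators. This is the theorem of Atiyah and Singer on self-adjoint (equivalently, skew-adjoint) Fredholm operators, and I would organise the whole argument around one explicit comparison map $\phi\colon \F^* \to \Omega\F$, proving it to be a homotopy equivalence. Granting that, the loop--space adjunction and the fact that $\F$ represents $K^0$ force $\Omega\F$ to represent $K^{-1}=K^1$, which is exactly the assertion.

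First I would pass to a bounded model. The bounded (Riesz) transform $D \mapsto F \defeq D(\idoperator + D^2)^{-1/2}$ sends an unbounded self-adjoint Fredholm operator to a bounded self-adjoint one with $\norm{F}\leq 1$, and one checks that it is a homeomorphism onto its image and a homotopy equivalence of the relevant operator spaces. Under this transform the \emph{polarised} essential spectrum condition — the essential spectrum is unbounded in both directions, so $D$ is neither essentially positive nor essentially negative — becomes the statement that in the Calkin algebra $\bnd{\Hs}/\mathcal{K}$ (with $\mathcal{K}$ the ideal of compact operators) the image $\bar F$ is an essential involution, $\bar F^2 = \idoperator$, with both $+1$ and $-1$ occurring in the essential spectrum. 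Equivalently $P_F \defeq \tfrac{1}{2}(\idoperator + F)$ is an essential projection ($P_F^2 - P_F$ compact) that is neither essentially $0$ nor essentially $\idoperator$. (In full generality one first deformation-retracts onto this essentially involutive subspace; for the Dirac operators of interest, which have compact resolvent, $\idoperator - F^2$ is already compact, so this step is automatic.)

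Next I would write down the comparison map as the operator-theoretic Bott clutching construction: to such an $F$ associate the based loop
\[
  \phi(F)\colon \theta \mapsto e^{2\pi i\theta}\,P_F + (\idoperator - P_F), \qquad \theta\in[0,1],
\]
based at $\idoperator$. This is continuous in $F$ in the norm topology since $P_F$ is a norm-continuous function of $F$. Moreover each $\phi(F)(\theta)$ is Fredholm: its image in the Calkin algebra is $\idoperator + (e^{2\pi i\theta}-1)\bar P_F$, whose spectrum is $\{e^{2\pi i\theta},1\}$ (both values occurring, since $\bar P_F$ is neither $0$ nor $\idoperator$), and this misses $0$, so the element is invertible modulo compacts. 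Hence $\phi(F)\in\Omega\F$ and $\phi$ is a well-defined continuous map.

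The main work — and the principal obstacle — is to show that $\phi$ is a homotopy equivalence. The indispensable input here is \emph{Kuiper's theorem}, that the unitary group $U(\Hs)$ of an infinite-dimensional separable Hilbert space is contractible in the norm topology; this is precisely what makes $\F$ a classifying space, and it governs the contractibility of the fibres one uses. I would contract away the essentially definite directions by a spectral deformation, pushing the spectrum of $F$ toward $\{-1,+1\}$ and trivialising the action of $\phi(F)$ on the essentially $0$ and essentially $\idoperator$ parts, and then compare the resulting fibre sequences for $\F^*$ and for $\Omega\F$ to conclude that $\phi$ induces isomorphisms on all homotopy groups. Throughout, the delicate points are maintaining norm-continuity (the naive spectral projection $\chi_{(0,\infty)}(F)$ is \emph{not} continuous, which is exactly why the essential-involution reduction is needed) and controlling the passage between the bounded and unbounded pictures. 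Granting $\F^* \simeq \Omega\F$, and since $\F$ represents $K^0$ by $[A,\F]\isom K^0(A)$, its loop space represents $K^{-1}$, so that
\[
  [A,\F^*] \isom [A,\Omega\F] \isom K^{-1}(A) = K^1(A)
\]
naturally in the compact Hausdorff space $A$, which is the claimed classification.
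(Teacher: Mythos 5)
The paper offers no proof of its own here---it simply cites Atiyah--Singer (1969)---and your outline is precisely the argument of that reference: reduction to a bounded model of essential involutions, the clutching loop $\theta \mapsto e^{2\pi i\theta}P_F + (\idoperator - P_F)$ realising the homotopy equivalence $\F^* \simeq \Omega\F$, and then the Atiyah--J\"anich classification $[A,\F]\isom K^0(A)$ together with Bott periodicity. Since the one step you defer (that $\phi$ is a homotopy equivalence, via Kuiper's theorem and spectral deformations) is exactly the content of the cited theorem, your proposal takes essentially the same route as the paper and contains no error.
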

\begin{proof}
For the original result, see \cite{AS1969}.
\end{proof}

The importance of this classification is that it enables us to work with different families of Dirac operators up to homotopy. Later in Section \ref{sec:families_cohomology} we connect the homotopy classes to the third cohomology of the parameter space, which will be useful in the analysis of the Hamiltonian anomaly.

\idxloppu{operator!homotopy class of}

\subsection{K-theory of operator algebras} %
\idx{K-theory!of operator algebra}
\idx{C@$C^*$-algebra}

Let $C(X)$ be the $C^*$-algebra of a topological space $X$ defined by continuous complex functions. By the Gelfand-Naimark theorem %
\idx{theorem!Gelfand-Naimark}
there is an identification
\[
	\begin{tikzcd}
		\mlnode{unital Abelian $C^*$-algebra $\A \defeq C(X)$} \arrow{r} & \arrow{l} \mlnode{compact Hausdorff space $X$} ,
	\end{tikzcd}
\]
where from the algebra-perspective the space $X$ is the space of characters of $\A$, and from the topology-perspective the algebra consists of evaluation maps at points of $X$.

Consider a vector bundle $E\to X$, and define another vector bundle $F\to X$ with respect to $E$ by setting $E\oplus F \isom X \times \C^n$; such a bundle always exists~\cite[Prop.~2.2,~p.~52]{GVF2001}. Then besides the algebra $\A = C(X)$ we can also define modules $M = C(X,E)$ and $N = C(X,F)$ so that $M \oplus N \isom \A^n$. In other words, the vector bundle $E\to X$ defines a canonical finitely generated projective module $M$ over $\A$. Conversely, any such module over $\A$ is the space of all continuous sections of some vector bundle over $X$.
All in all, we have the \term{Serre-Swan theorem}: the category of vector bundles over the topological space $X$ is equivalent to the category of finitely generated projective modules over the algebra $\A$.~\cite[Ch.~2]{GVF2001}

Therefore we can define $K$-groups of $C^*$-algebras similarly to the topological K-theory. The set of isomorphism classes of finitely generated projective modules over $\A$ is an Abelian semigroup with respect to the direct sum. By the Grothendieck construction %
	\idx{Groethendieck functor}
we get a group $K_0(\A)$ by defining the equivalence relation
\[
	(M,N) \sim (M',N') \Leftrightarrow M \oplus N' \oplus B \isom M' \oplus N \oplus B ,
\]
for some projective module $B$ over $\A$. Then~\cite[Sec.~3.2]{GVF2001}
\[
	K_0(C(X)) \isom K^0(X) .
\]


\subsection{Bott periodicity}\label{sec:bott} %
\idxalku{Bott periodicity}

Lastly, we make note of the (complex) K-theoretic formulation of the \term{Bott periodicity}. Its application to group cohomology is often useful in gauge theory; the \emph{real} K-theory version on the other hand has applications in condensed matter physics, where it is used to classify topological phases of matter for free fermionic systems~\cite{THIANG2016}.

Let $X$ be a locally compact Hausdorff space, $Y$ a closed subspace of $X$, and define $Z=X\setminus Y$ as the complement. Let $q: X^+ \to Z^+$ be the map between one-point compactification of the spaces $X$ and $Z$ such that $q(x) = x$ for $x\in Z^+$ and $q: x \mapsto \{+\}$ for $x\in X^+\setminus Z$. This induces an isomorphism between the relative group $K(X,Y)$ and the group $K(Z)$. Hence the following sequence is exact:
\[
	K(Z) \overset{q^*}{\longrightarrow} K(X) \overset{\iota^*}{\longrightarrow} K(Y) ,
\]
where $q^*$ and $\iota^*$ are pullbacks to the commutative monoid of vector bundles over the spaces. Here we have identified the groups $K(X)$ and $K(X^+)$ with each other, as usual.~\cite[Ch.~I]{BLACKADAR1986}

\begin{theorem}[Bott periodicity]
	The diagram of $K$-groups
	\[
	\begin{tikzcd}
		K^0(Z) \arrow[r,"q^*"] & K^0(X) \arrow[r,"\iota^*"] & K^0(Y) \arrow[d,"\bdr"] \\
		\arrow[u,"\bdr\,"] K^{-1}(Y) & \arrow[l,"\iota^*"] K^{-1}(X) & \arrow[l,"q^*"] K^{-1}(Z)
	\end{tikzcd}
	\]
	is a cyclic exact sequence. For real K-theory there is a similar cycle with the period of $8$ instead of $2$.
\end{theorem}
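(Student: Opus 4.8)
The plan is to assemble the displayed cycle from two independent ingredients: the long exact sequence in reduced K-theory attached to the pair $(X,Y)$, and the complex Bott periodicity isomorphism $K^{-n-2}(W) \isom K^{-n}(W)$, which folds that long exact sequence onto itself with period two. Exactness of the six-term cycle then follows formally; the genuine mathematical weight sits entirely in the periodicity isomorphism.

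First I would promote the three-term exact sequence $K(Z) \to K(X) \to K(Y)$ recalled above to an infinite long exact sequence. The natural tool is the Puppe (cofibre) sequence of the inclusion $Y \embedd X$,
\[
  Y \to X \to X/Y \to \Sigma Y \to \Sigma X \to \Sigma(X/Y) \to \cdots ,
\]
in which each consecutive pair is, up to homotopy, a cofibration. Reduced K-theory is homotopy invariant and carries a cofibre sequence to an exact sequence of Abelian groups; combined with $K^{-n}(W) = \redK(\Sigma^n W^+)$ and the relative identification $K(X,Y) \isom K(Z)$ established just before the statement, applying $\redK$ term by term produces
\[
  \cdots \to K^{-1}(Y) \overset{\bdr}{\to} K^0(Z) \to K^0(X) \to K^0(Y) \overset{\bdr}{\to} K^1(Z) \to \cdots ,
\]
the connecting maps $\bdr$ being those induced by the attaching maps $X/Y \to \Sigma Y$ of the Puppe sequence.

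Next I would invoke complex Bott periodicity in the suspension form $K^{-n-2}(W) \isom K^{-n}(W)$, valid for every compact $W$ and all $n$. This identifies $K^1(Z)$ with $K^{-1}(Z)$, $K^1(X)$ with $K^{-1}(X)$, and so on, so the infinite sequence closes up into exactly the displayed six-term cycle
\[
  K^0(Z) \to K^0(X) \to K^0(Y) \overset{\bdr}{\to} K^{-1}(Z) \to K^{-1}(X) \to K^{-1}(Y) \overset{\bdr}{\to} K^0(Z) .
\]
Exactness at each vertex is inherited verbatim from the long exact sequence, since the periodicity isomorphisms are natural and commute with $q^*$, $\iota^*$, and $\bdr$. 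For real K-theory the identical argument runs once the period-two isomorphism is replaced by the period-eight real periodicity isomorphism, giving a cyclic exact sequence of length sixteen instead of six.

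The main obstacle is, unsurprisingly, Bott periodicity itself; everything preceding the folding step is formal homological algebra resting only on homotopy invariance and cofibre exactness of $\redK$. To establish the periodicity isomorphism I would first compute $K(S^2) \isom \Z[H]/(H-1)^2$ from the clutching description of bundles over the $2$-sphere, where $H$ is the hyperplane line bundle, and then show that external multiplication by the Bott generator $(H-1) \in \redK(S^2)$ defines an isomorphism $\redK(W) \to \redK(\Sigma^2 W)$. Proving this last map is bijective is the crux: it is cleanest via Atiyah's clutching-function argument, though it can equally be obtained from the index-theoretic machinery of families of Dirac operators developed in this chapter.
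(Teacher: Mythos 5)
The paper offers no proof of its own here: its ``proof'' is the single line citing \cite[Ch.~IV]{BLACKADAR1986} and \cite[Sec.~I.9]{LM1989}. So there is nothing of the paper's to compare you against line by line; what you have written is, in outline, precisely the argument contained in those references. Your division of labour is the standard and correct one: the Puppe cofibre sequence of $Y \embedd X$, applied to reduced K-theory of one-point compactifications (using $K^{-n}(W) = \redK(\Sigma^n W^+)$ and the identification $K(X,Y)\isom K(Z)$ recalled just before the statement), gives the long exact sequence formally; the natural isomorphism $K^{-n-2}(W)\isom K^{-n}(W)$ folds it into the six-term cycle, naturality being exactly what makes the folded diagram commute with $q^*$, $\iota^*$ and $\bdr$; and all genuine content sits in the periodicity isomorphism, which you rightly propose to prove via Atiyah's clutching argument, $K(S^2)\isom \Z[H]/(H-1)^2$ together with bijectivity of external multiplication by the Bott class $H-1$. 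This is sound, and it buys the reader an actual proof where the thesis only delegates one.

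Two corrections, one substantive. First, your count for the real case is wrong: with period $8$ and the three spaces $Z$, $X$, $Y$, the folded cycle has $8\times 3 = 24$ terms, not sixteen (the complex count is $2\times 3 = 6$, and the same arithmetic applies). Second, a caution rather than an error: your closing remark that periodicity ``can equally be obtained from the index-theoretic machinery of families of Dirac operators developed in this chapter'' should be phrased carefully, since in this thesis the families index is defined as a map into $K(A)$ and its homotopy-theoretic properties presuppose the K-theory being discussed; an operator-theoretic proof of periodicity exists (Atiyah's elliptic-operator argument), but it is logically prior to, not a consequence of, the chapter's families index theorem.
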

\begin{proof}
See \cite[Ch.~IV]{BLACKADAR1986} or \cite[Sec.~I.9]{LM1989}.
\end{proof}

\begin{remark}\label{rem:bottcohomology}
	The Bott periodicity is quite a general phenomenon. It applies to the groups of unitaries $U(p)$ and $SU(p)$ as follows:~\cite{BOTT1957}
	\[
		\pi_k(U(p)) \isom  \pi_k(SU(p)) = \begin{cases} 0, & k \text{ even;}\\ \Z, & k \text{ odd.} \end{cases}
	\]
	assuming $k>1$ and $p \geq (k+1)/2$. 
	From this one also gains generators of the integral cohomology groups $\coh^{2m-1}(SU(p))$ in the odd degrees for $m\leq p$; by the Bott periodicity they are one-dimensional and can thus be generated by a single element. This will be useful when computing the anomalous Dixmier-Douady class in Section~\ref{sec:general_comm}.
\end{remark}

\idxloppu{Bott periodicity}
%

\section{Atiyah-Singer index theorem} %
\idxalku{index theorem}
\idxsee{theorem!Atiyah-Singer}{index theorem}

The Atiyah-Singer index theorem states that the analytical index of an elliptic operator $P$ on a compact Riemannian manifold $X$ %
	\idx{index!analytical}
\[
	\ind_a P = \ker P - \cker P
\]
coincides with the \emph{topological index} %
	\idx{index!topological}
$\ind_t P$ as a map
\[
	K(TX) \to \Z .
\]
The gist of the theorem is in connecting the functional analysis of an elliptic operator and the homotopy theory related to the topological properties of the operator symbol. It is then clear that the operator index is a \emph{topological invariant} and, by cohomology formulas for the topological index, it can be expressed with characteristic classes for which there are well-known computation methods.

\subsection{The topological index} %
\idxalku{index!topological}

Our aim is to sketch the construction of the topological index as a mapping
\[
	K(TX) \to \Z .
\]
As usual, on a Riemannian manifold we can canonically identify the tangent and cotangent bundles, and in the following we do not bother with being too pedantic about the distinction.

A crucial component in defining the right mapping for the topological index is the \term{Thom isomorphism}. Let $\pi : E \to X$ be a vector bundle of rank $n$ over a compact manifold $X$. For any integer $k \geq 0$, there is an isomorphism
\[
	\psi: \coh^k(X,\Z_2) \to \tilde{\coh}^{k+n}(T(E),\Z_2) ,
\]
with $\tilde{\coh}$ referring to reduced cohomology and $T(E)$ is the \emph{Thom space} of the bundle. The K-theoretic variant of this map can be given as follows.~\cite[App. C]{LM1989}
\begin{proposition}[Thom isomorphism in K-theory]
	The mapping
	\[
		\psi: K(X) \to K(E) : x \mapsto (\pull{x}) \lambda_E
	\]
	is an isomorphism.
\end{proposition}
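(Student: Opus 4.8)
The plan is to reduce the general statement to the Bott periodicity theorem, which settles the case of a trivial bundle, and then to propagate the isomorphism across a trivialising cover of the compact base by a Mayer--Vietoris argument.

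First I would pin down the Thom class $\lambda_E \in K(E)$ itself. For a complex vector bundle $\pi : E \to X$ of rank $n$ one forms the exterior (Koszul) complex
\[
  0 \to \pull{\forms^0 E} \to \pull{\forms^1 E} \to \cdots \to \pull{\forms^n E} \to 0 ,
\]
whose differential at a point $v \in E$ is exterior multiplication by $v$. Away from the zero section this complex is exact, so it defines a compactly supported class $\lambda_E$, equivalently an element of the reduced K-theory $\redK(T(E))$ of the Thom space. Two structural properties will then do the real work: multiplicativity, $\lambda_{E \oplus F} \isom \lambda_E \cdot \lambda_F$ under the canonical identification of exterior algebras, and naturality of $\psi$ with respect to pullback along maps of the base. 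These are what make the subsequent gluing go through.

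Next I would treat the trivial bundle $E = X \times \C^n$. Here $\pull{x}\,\lambda_E$ is exactly the external product of $x$ with the Bott generator of $K(\C^n) \isom \redK(S^{2n})$, so the assertion reduces to the statement that multiplication by the Bott class is an isomorphism $K(X) \to K(X \times \C^n)$; iterating multiplicativity cuts this down to the rank-one case, which is precisely the content of the complex Bott periodicity theorem established above. Thus $\psi$ is an isomorphism whenever $E$ is trivial, and the identical argument applies fibrewise over any open set on which $E$ trivialises, once one works throughout with K-theory with compact supports along the fibres.

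Finally I would handle a general $E$ by induction on the size of a finite trivialising cover, which exists since $X$ is compact. Writing $X = U \cup V$ with the isomorphism already known over $U$, $V$ and $U \cap V$, I would compare the Mayer--Vietoris sequence in K-theory for the base with the corresponding sequence for the total space $E$; naturality of $\psi$ makes the resulting ladder commute, and the five lemma promotes the isomorphisms on the pieces to an isomorphism over all of $X$. The main obstacle here is organisational rather than conceptual: one must arrange the relative and compactly supported K-groups so that $\lambda_E$ restricts correctly to each open piece and so that every square in the Mayer--Vietoris ladder genuinely commutes. The analytic and topological weight has already been absorbed into Bott periodicity, which supplies the base case.
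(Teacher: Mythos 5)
Your proof is sound, but note that the paper never actually proves this proposition: it states it and defers entirely to the literature, remarking only that $\lambda_E$ is the canonical difference element and citing \cite[Sec.~I.9]{LM1989} and \cite[Sec.~12.2]{BB2013}. Your argument --- Koszul (exterior) complex to construct $\lambda_E$, multiplicativity and naturality, reduction of the trivial case $E = X\times\C^n$ to rank one and hence to Bott periodicity, then a Mayer--Vietoris/five-lemma induction over a finite trivialising cover of the compact base --- is precisely the standard proof given in such references, so in effect you have supplied the argument the paper outsources. Two small points deserve care if this were written out in full. First, the Bott periodicity theorem as stated in this paper is the cyclic six-term exact sequence; your base case invokes the \emph{multiplicative} form (external product with the Bott class is an isomorphism $K(X) \to K(X\times\C)$), and although the two are equivalent, passing between them is a step you should acknowledge rather than treat as identical. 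Second, in the Mayer--Vietoris induction the groups involved are K-groups with compact supports, which are covariant for open inclusions rather than contravariant; the commutativity of the ladder therefore has to be checked against this variance, exactly the organisational issue you flag at the end, and it does go through.
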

Here $\lambda_E$ is the canonical \emph{difference element} in the K-theory $K(E)$; see \cite[Sec.~I.9]{LM1989} or \cite[Sec.~12.2]{BB2013} for details.
%
Consider then the \term{principal symbol}
\[
	\sigma(D): T^*X \to \Hom(p^* E, p^* F)
\]
of an elliptic operator $D : \sct{E} \to \sct{F}$, where $p:T^*X \to X$. For an elliptic operator, the symbol defines an isomorphism away from the zero section. Hence we have a \term{symbol class}
\[
	[\sigma(P)] = [p^*E,p^*F;\sigma(P)] \in K(T^*X) .
\]

Take a smooth embedding $f: X \embedd Y$, with $Y = \R^k$,\footnote{By the Whitney embedding theorem, this is always possible.} and push forward the tangent bundle $f_*: TX \embedd TY$. Denote by $N$ the normal bundle of $X$ in $Y$. We also have the normal bundle $TN$ of $TX$ in $TY$, which we gain by pulling $N\oplus N$ back to $TX$. Thus we have a fibre bundle $TN \to TX$ with pointwise components in manifold and fibre directions in $Y$. Now the Thom isomorphism gives us a map
\[
	\psi_{TN \to TX} : K(TX) \to K(TN) .
\]

Consider now the extension homomorphism $h:K(TN) \to K(TY)$ given by the natural embedding $TN \embedd TY$. Composing this with the Thom isomorphism we obtain a homomorphism
\[
	f_{!} : K(TX) \to K(TY) .
\]

We defined $Y = \R^k$, so $TY = \R^{2k}$. Let $i$ denote the inclusion of the origin in $Y$, and we get the induced map
\[
	i_{!}: K(T\{0\}) \to K(\R^{2k}) ,
\]
which in fact is a Thom isomorphism. Now, obviously, $K(T\{0\}) \isom \Z$, so that
\[
	\inv{i_{!}} : K(TY) \to \Z .
\]
Let us wrap this all in the final definition:
\begin{definition}[Topological index] The \termd{topological index of an elliptic operator $D$} is the map
\[
	\ind_t(D) : K(TX) \to \Z : D \mapsto \inv{i_{!}} f_{!} ~(\sigma(D)) .
\]
\end{definition}

The index is independent of the choice of the embedding $f$: First take a linear inclusion $j: \R^k \embedd \R^{k+n}$ for some $n \in \N$, and consider the composition $\tilde{f} = j \circ f$. The induced $j_! : K(T\R^k) \to K(T\R^{k+n})$ is a Thom isomorphism, and thus $ \inv{i_{!}} f_{!} = \inv{\tilde{i}_{!}} \tilde{f}_{!}$. Let then
$g$ be another embedding $X \embedd \R^{n}$, and consider the inclusion $j_g:\R^{n} \embedd \R^{n+k}$. The composite embeddings $j \circ f: X \embedd \R^{n+k}$ and $j_g \circ g : X \embedd \R^{n+k}$ are isotopic, and in fact we have a smooth family of such embeddings. Since K-theory is homotopy invariant, it follows that the topological index is independent of the choice of embedding.


%
\idxloppu{index!topological}

\subsection{The index theorem and cohomology formulas}

We can now state the index theorem, although we will not attempt to give a proof here.
\begin{theorem}[Atiyah-Singer]
	Let $D$ be an elliptic operator on a compact Riemannian manifold $M$. Then
	\[
		\ind_a D = \ind_t D .
	\]
\end{theorem}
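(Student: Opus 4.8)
The plan is to follow the K-theoretic embedding strategy of Atiyah and Singer, which dovetails with the construction of $\ind_t$ just given. Both $\ind_a$ and $\ind_t$ define homomorphisms $K(TX) \to \Z$, and the topological index was built precisely so as to be compatible with the pushforward maps $f_{!}$ arising from embeddings $X \embedd Y$ and with the Thom isomorphism. The core of the argument is therefore to show that the analytical index enjoys the same compatibilities, and then to pin both maps down on a single base case where they visibly agree.

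First I would establish the elementary normalisation: for $X = \{pt\}$ one has $K(T\{pt\}) \isom \Z$, and both the analytical and the topological index reduce to the identity on $\Z$. Next I would prove the multiplicative property of the analytical index --- that for elliptic symbols $a$ over $X$ and $b$ over $X'$ the index of the external product satisfies $\ind_a(a \cdot b) = \ind_a(a)\,\ind_a(b)$ --- since this is what permits the analytic index to be computed after twisting by the Bott element. Using this, the decisive step is to verify that $\ind_a$ is invariant under the embedding pushforward, i.e.\ $\ind_a(f_{!}\,\sigma(D)) = \ind_a(\sigma(D))$. Concretely, embedding $X$ in $Y = \R^k$ and extending the symbol class across the normal bundle via the Thom isomorphism $\psi$ should leave the analytical index unchanged, because the added normal directions contribute only the Bott generator, whose analytic index is $1$.

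Granting this invariance, the proof closes quickly: every compact $M$ embeds in some $\R^k$ by the Whitney theorem, so applying $f_{!}$ transports $[\sigma(D)] \in K(TM)$ to $K(\R^{2k})$, and the inverse Thom isomorphism $\inv{i_{!}}$ returns an integer. Since $\ind_t$ is \emph{defined} as $\inv{i_{!}} f_{!}(\sigma(D))$, and $\ind_a$ has just been shown to be unchanged by each of these operations while agreeing with $\ind_t$ at the origin, the two indices must coincide for every elliptic $D$.

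I expect the main obstacle to be the analytic invariance under $f_{!}$: showing that the analytical index is genuinely preserved when the symbol is pushed across the normal bundle amounts to the analytic form of Bott periodicity together with an excision argument, and is where essentially all of the hard work of the theorem resides. An alternative route, avoiding embeddings, would be the heat-kernel approach: identify $\ind_a D$ with the $t$-independent supertrace $\tr(\Gamma e^{-tD^2})$, expand it as $t \to 0$, and match the constant term with $\int_M \ahatgenus(M)\,\chern(E)$ by local invariance theory --- in which case the hard part migrates instead to the local index computation via Getzler rescaling.
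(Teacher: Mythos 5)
The paper does not actually prove this theorem: its ``proof'' is a pointer to \cite[Ch.~III]{LM1989}, \cite[Pt.~III]{BB2013} and the original \cite{AS1968}, and those references carry out precisely the $K$-theoretic embedding argument you outline --- normalisation at a point, multiplicativity of the index under external products, and invariance of the analytical index under the pushforward $f_{!}$, with analytic Bott periodicity plus excision as the crux. Your sketch is therefore a faithful outline of the very approach the paper defers to, and you correctly locate where the hard analysis resides, though (as you acknowledge) the multiplicativity and embedding-invariance steps are stated rather than proved, so this remains a strategy outline rather than a complete proof.
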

\begin{proof}
	Good expositions can be found in \cite[Ch.~III]{LM1989} and \cite[Pt.~III]{BB2013}. Consult \cite{AS1968} for the original proof relying on \emph{equivariant} K-theory.
\end{proof}

The K-theoretical index map can be realised also as a map between cohomologies, giving the common formulation for the index. The topological index can be written as
\[
	\ind_t D = \int_M \chern(E) \wedge \ahatgenus(M) ,
\]
where $\chern(E)$ is the Chern character of the vector bundle $E\to M$, and $\ahatgenus(M)$ is the $\ahatgenus$-genus of the base manifold. These invariants can be formulated in local terms: on one hand we have the Chern forms $\tr F^k$, where $F$ is the curvature $2$-form of \emph{some} connection on the bundle $E\to M$; and on the other hand we can write the $\ahatgenus$-genus as in terms of the curvature form $R$ on the tangent space of $M$.

\begin{remark}
	Not all topological information is preserved by the cohomology formula; K-theory does detect torsion, but this is lost when mapped to Chern characteristic class.~\cite[Sec.~III.17]{LM1989}
\end{remark}

\idxloppu{index theorem}
%

\section{Families of Dirac operators}\label{sec:dirac_families} %
\idxalku{operator family}
\idxalku{Dirac operator}

Let $X$ be a closed smooth manifold, and consider a family $\{D_x\}$ of self-adjoint unbounded Fredholm operators with compact resolvent. Such a family defines an element in $K^1(X)$, as explained in Section~\ref{sec:ktheory}.

Denote by $\F_R^*$ the \term{space of regular self-adjoint unbounded Fredholm operators}. This means that if $F \in \F_R^*$, then
\begin{enumerate}
	\item $F$ is closed and self-adjoint,
	\item the resolvent $\inv{(\idoperator + F^2)}$ is compact, and
	\item $F$ has both positive and negative spectrum, and both are infinite.
\end{enumerate}

A \term{family of such operators} $D_X \defeq \{D_x\}$ parametrised by a closed smooth manifold $X$ is a mapping $X \to \F_R^*$. Given a family $\{D_x\} \subset X \times \F_R^*$, the mapping $\xi(D_x) = D_x(\idoperator+D_x^2)^{-1/2}$ makes each operator $D_x$ bounded and we can define the bounded family
\[
  D_X^b \defeq \{\xi(D_x)\} .
\]
Applying the mapping $\xi$ does not change the homotopy of the original family, and thus to each family $D_X$ we can pair an element in $K^1(X)$ through the homotopy class of its bounded equivalent.~\cite{DK2010}

Observe that the common definition for the Dirac operator on a spin manifold fulfills these requirements. Let $M$ be a compact Riemannian spin manifold with an associated spinor bundle $S$ and let $E \to M$ be a complex vector bundle with Hermitian metric and a unitary connection $\conn^E$. 
\begin{definition}\label{def:dirac_op}
	The \termd{Dirac operator} $D$ is the composition of maps
	\[
		\sct{S\otimes E} \overset{\conn^E}{\longrightarrow} \sct{T^*M \otimes S\otimes E} \overset{\gamma}{\longrightarrow} \sct{S\otimes E} ,
	\]
	so that locally we have the formula
	\[
		D = -i \gamma(\ext x^k) \conn^E_k ,
	\]
	where $\gamma$ is the Clifford multiplication in the fibre, and $k$ is the local coordinate index. %
\end{definition}

\begin{remark}
	The factor $-i$ in the local formulation is due to the requirement for self-adjointness; see Proposition~\ref{prop:dirac_properties} below.
\end{remark}

\begin{remark}
	On even-dimensional manifolds the spinor bundle has a $\Z_2$-grading and so one gains Dirac operators $D^{\pm}$ mapping \emph{positive} spinors to \emph{negative}, and vice versa. This gives the often useful \term{Weyl spinor polarisation}, and the index of the Dirac operator can be defined as the difference of the nullspaces:
	\[
		\ind D = \dim(\ker D^+) - \dim(\ker D^-) .
	\]
\end{remark}

\begin{proposition}\label{prop:dirac_properties}
	The Dirac operator is regular, essentially self-adjoint, and Fredholm.
\end{proposition}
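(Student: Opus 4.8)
The plan is to verify the three asserted properties in turn, all of which flow from the observation that $D$ is a first-order elliptic operator on a compact manifold without boundary. First I would establish ellipticity directly from the symbol and deduce the Fredholm property from the general theory recalled above; essential self-adjointness and regularity (compact resolvent together with the polarised infinite spectrum) would then follow from elliptic estimates and the Rellich compactness theorem.

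For ellipticity and the Fredholm property: by Definition~\ref{def:dirac_op} the principal symbol of $D$ at a covector $\xi = \xi_k \ext x^k$ is Clifford multiplication $\sigma_\xi(D) = \gamma(\xi)$, the scalar factor being irrelevant to invertibility. The Clifford relation gives $\gamma(\xi)^2 = -\abs{\xi}^2 \idoperator$, so $\sigma_\xi(D)$ is an isomorphism of the fibre $S\otimes E$ for every nonzero $\xi$; hence $D$ is elliptic. Since $M$ is compact, every elliptic operator extends to a Fredholm operator on the appropriate Sobolev completions, as noted earlier, so $D$ is Fredholm.

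For essential self-adjointness: using that the gamma matrices are anti-Hermitian and that $\conn^E$ is a unitary (metric) connection, integration by parts over the closed manifold $M$ shows that $\ip{Du,v} = \ip{u,Dv}$ for all smooth sections $u,v$ of $S\otimes E$; the factor $-i$ in the local formula is exactly what turns the otherwise skew-symmetric expression into a symmetric one. A symmetric first-order elliptic operator on a compact manifold without boundary is essentially self-adjoint by standard elliptic theory, so the closure $\bar D$ is self-adjoint.

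For regularity I would proceed as follows. The square $D^2$ is a second-order generalised Laplacian (differing from the connection Laplacian by a zeroth-order term), and the elliptic estimate shows that $(\idoperator + D^2)^{-1}$ carries square-integrable sections into the Sobolev space of order two. By the Rellich compact embedding on the compact manifold $M$, the inclusion of this space back into $L^2$ is compact, so $D$ has compact resolvent; consequently $\bar D$ has discrete real spectrum consisting of eigenvalues of finite multiplicity. The remaining point, which I expect to be the main obstacle, is to confirm that \emph{both} the positive and the negative parts of this spectrum are infinite, as compactness of the resolvent alone does not guarantee this. I would argue it from the symbol, whose eigenvalues are $\pm\abs{\xi}$ with equal multiplicities, together with Weyl-type eigenvalue asymptotics, which force infinitely many eigenvalues accumulating at both $+\infty$ and $-\infty$. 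This places $\bar D$ in the class $\F_R^*$ and completes the verification.
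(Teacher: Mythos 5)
Your proposal is correct and takes essentially the same route as the paper: the paper's entire proof is your first step (ellipticity of the symbol, hence Fredholm on the compact $M$), after which it delegates self-adjointness and regularity to \cite[Ch.~9]{GVF2001}, which contains precisely the standard elliptic-estimate and Rellich-compactness arguments you spell out. Your explicit handling of the polarised spectrum via symbol symmetry and Weyl asymptotics fills in a point the paper leaves entirely to that citation, and it is the right way to close it.
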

\begin{proof}
	The Dirac operator is \emph{elliptic}, and hence Fredholm on compact manifolds $M$.
	For regularity and self-adjointness, see for instance \cite[Ch.~9]{GVF2001}.
\end{proof}
In the following, we will call an operator \term{Dirac-like} if it fulfills these three requirements.

\begin{remark} %
	\idx{Dirac operator!pseudo-Riemannian}
	\idx{space-time!pseudo-Riemannian}
	The operator in Definition~\ref{def:dirac_op} could be called \emph{mathematician's Dirac operator}. It differs somewhat from what Paul Dirac originally considered in~\cite{DIRAC1928}, most importantly in the requirement of a Riemannian metric. One can define Dirac operators in a pseudo-Riemannian metric too, only then the operator will not be elliptic but hyperbolic. Proper take on the pseudo-Riemannian Dirac operators leads one to consider essentially self-adjoint operators on \emph{Krein spaces} instead of Hilbert spaces~\cite{STROHMAIER2006}. Some properties of the elliptic case are nevertheless retained; for instance, there is a sound definition for the Fredholm index and a related index theorem on globally hyperbolic Lorentzian manifolds with boundary~\cite{BS2017}. It is still an open question if this pseudo-Riemannian index theorem extends to other variants such as the families index theorem.
\end{remark}

\idxloppu{Dirac operator}

\subsection{Families index theorem} %
\idxalku{index theorem!of operator family}

The Atiyah-Singer index theorem has a natural extension to families of Fredholm operators. Recall from Section~\ref{sec:ktheory} that a family of Fredholm operators parametrised by a Hausdorff space $A$ induces an isomorphism
\[
	[A,\F] \to K(A) ,
\]
which we call the \term{index of the family of operators}. %
	\idx{index!of operator family}
Given a family $D_A$ of operators $D_a$, with $a\in A$, we can define the analytical index as a formal difference
\[
	\ind D_A = \{ \ker D_a \} - \{ \cker D_a \} ,
\]
where on the right-hand side we have families of vector spaces. If the dimension of the null-space $\ker D_a$ is \emph{constant} over $A$, these spaces are true vector bundles over $A$, and the definition of their difference as an element in $K(A)$ is natural.

More generally, we can have elliptic operators over a family of compact manifolds $\M \defeq \{M_a\}$, parametrised by a compact space $A$ in such a way that we have a fibre bundle $\M \to A$ with $M$ as the typical fibre and the structure group is the diffeomorphisms of $M$.

The analytical index of the family is then a map
\[
	K(T\M) \to K(A) .
\]
The families index theorem then states the following:~\cite{AS1971}
\begin{theorem}[Families index theorem]
	Let $D_A$ be a family of elliptic operators on a compact Riemannian manifold $M$ parametrised by a compact Hausdorff space $A$. Then
	\[
		\ind_a D_A = \ind_t D_A .
	\]
\end{theorem}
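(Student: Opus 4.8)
The plan is to construct both index maps as homomorphisms $K(T\M) \to K(A)$ and then to establish their equality by reducing to the absolute Atiyah--Singer theorem already available to us, exploiting the uniqueness of the index as a natural transformation. First I would make precise the analytic index $\ind_a D_A = \{\ker D_a\} - \{\cker D_a\} \in K(A)$. When the fibre dimension of $\ker D_a$ is locally constant, the kernels and cokernels assemble into honest vector bundles over $A$ and their difference is an element of $K(A)$ directly. In general I would invoke stabilisation: since $A$ is compact, one can choose a finite-dimensional subspace $V \subset \Hs$ whose image trivialises the cokernels uniformly, perturb $D_a$ by a fixed finite-rank map $A \times V \to \Hs$ so that the perturbed family is fibrewise surjective, and then $\ker$ of the stabilised family is a genuine bundle whose class, minus the trivial bundle $A \times V$, is independent of all choices. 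This is the ``sufficient stability'' remarked on earlier, and it is where the genuinely new content lies compared with the single-operator case.

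Next I would build the topological index $\ind_t$ as the exact relative analogue of the absolute construction. Choose a fibrewise embedding $\M \embedd A \times \R^{2N}$ over $A$, with fibrewise normal bundle $N$. The Thom isomorphism in K-theory gives $\psi: K(T\M) \to K(TN)$; composing with the open-inclusion homomorphism $K(TN) \to K(T(A\times\R^{2N}))$ and then with the inverse of the Thom isomorphism for the trivial bundle $A\times\R^{2N}\to A$ (i.e. Bott periodicity) lands in $K(A)$. Applying this chain to the symbol class $[\sigma(D_A)] \in K(T\M)$ defines $\ind_t D_A$. Independence of the embedding follows exactly as in the absolute case from homotopy invariance of K-theory, since any two such fibrewise embeddings become isotopic after enlarging the ambient $\R^{2N}$.

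The equality $\ind_a = \ind_t$ I would then obtain by the same axiomatic route that proves the absolute theorem, now carried out relatively over $A$. Both maps are natural in $A$: for a continuous $f: A' \to A$ the pullback commutes with each construction, $\ind_\bullet \circ f^* = f^* \circ \ind_\bullet$, where on the analytic side this uses that the stabilised index bundle pulls back correctly. Both maps satisfy the multiplicativity and excision properties with respect to the pushforward $f_!$ built from the Thom isomorphism, and both restrict, over a single point $A = \{\ast\}$, to the ordinary analytic resp. topological index. Since the absolute Atiyah--Singer theorem already gives $\ind_a = \ind_t$ over a point, and the two families maps transform identically under embedding and product operations, the uniqueness of the index homomorphism characterised by these properties forces the two to agree for all compact Hausdorff $A$.

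The hardest part will be the analytic index in the jumping-kernel situation: verifying that the stabilised index bundle is well defined, independent of the chosen finite-rank perturbation, and natural under pullback. There is no counterpart of this in the $A = \{\ast\}$ case, where the index is merely an integer. The topological side and the matching of the two indices are then faithful relative versions of the arguments already recorded for the absolute theorem, with $K(A)$ playing the role of $\Z = K(\{\ast\})$.
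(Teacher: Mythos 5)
You cannot really be matched against an internal argument here: the paper offers no proof of this theorem, stating it with a citation to Atiyah--Singer's families paper \cite{AS1971}, just as the absolute index theorem earlier is dispatched with pointers to the literature. Your outline is in substance the strategy of that cited source: the stabilised index bundle (choose a finite-dimensional $V$ with $\mathrm{im}(D_a)+V$ equal to the whole target space for every $a$, possible by compactness of $A$) to define $\ind_a$ when kernels jump, the fibrewise embedding $\M \embedd A\times\R^{2N}$ with Thom isomorphism and Bott periodicity to define $\ind_t$, and an axiomatic characterisation to force equality. Your first two paragraphs are sound.

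The gap is in the final step: the axioms you list do not characterise the index uniquely, so uniqueness cannot ``force the two to agree.'' Restriction to points detects only the rank of a class in $K(A)$ --- for instance $[L]-[1]$, with $L$ the tautological line bundle over $S^2$, restricts to zero at every point yet is nonzero in $K(S^2)$ --- so ``agrees with the ordinary index over $A=\{\ast\}$'' is strictly weaker than the normalisation actually needed. Concretely, the composite $\psi^2\circ\ind_t$ of the topological index with the Adams operation $\psi^2$ is additive, natural in $A$, compatible with $f_!$ (because $\ind_t\circ f_!=\ind_t$ and $\psi^2$ is a natural transformation), and restricts over a point to the ordinary index, since $\psi^2=\id$ on $K(\{\ast\})=\Z$; yet it differs from $\ind_t$ on any family whose index has a nonzero reduced class. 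The normalisation must instead be imposed over $A$ itself: for the point-fibre family $A\to A$ the index is required to be the identity of $K(A)$, equivalently the index must invert Bott periodicity for $A\times\R^{2N}\to A$ as a map into $K(A)$. That is the form used by Atiyah--Singer; it is trivially satisfied by both indices, and the genuine burden then shifts to verifying compatibility of the \emph{analytic} index with $f_!$ in the parametrised setting --- a re-run of the absolute proof's multiplicativity and Bott-operator computation with parameters, not a corollary of the absolute theorem over a point. The distinction is not pedantic in the present context: the Hamiltonian anomaly studied in this thesis lives exactly in the pointwise-invisible, rank-zero part of the families index, detected by the determinant-bundle curvature and the Dixmier--Douady class, which is precisely the part your pointwise base case fails to control.
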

The cohomology formula of the index map can be written similarly to the normal index. For our purposes it is not necessary for the manifold $M$ to vary with the parameter space, and then the index formula is simply
\[
	\ind D_A = \int_M \chern(E)\wedge \ahatgenus(M) ,
\]
where $E$ is now a vector bundle over the \emph{product space} $M\times A$.

\begin{remark}\label{rem:aps_index} %
		\idx{index theorem!Atiyah-Patodi-Singer}
		\idx{$\eta$-invariant}
	In Chapter~\ref{chap:hanomaly} we discuss the index of a family over a manifold with boundary. In this case there is a boundary term given by the spectral $\eta$-invariant. Now boundary conditions apply: in the Atiyah-Patodi-Singer index theorem~\cite{APS1975}, one requires that the boundary values should lie in the subspace spanned by certain eigenfunctions under an orthogonal projection. 
	The $\eta$-terms in the index formula are not local, since they reflect the spectral properties of the operators on the boundary. However, when applying the index theorem to the Hamiltonian anomaly, we see that these boundary parts do not actually contribute due to the gauge invariance.
\end{remark}

\idxloppu{index theorem!of operator family}
\idxloppu{operator family}
%

\section{Cohomology and gerbes}\label{sec:families_cohomology}

Let $D_X$ be a family of Dirac-like operators, parametrised over a closed smooth manifold $X$. We can define the \term{spectral graph} of the family as the closed subset
\[
	\mathcal{S}(D_X) = \{ (x,\lambda) ~\mid~ \lambda \in \spec(D_x) \} \subset X \times \R .
\]
If the spectral graph is disconnected, the family represents a trivial element of the K-theory $K^1(X)$. There are then topological obstructions to deforming a given family to a one which has a disconnected spectral graph, and these obstructions can be represented as the components of the related Chern character.~\cite{DK2010}

\idx{Dixmier-Douady class}
The first component of the Chern character can be realised as a $1$-cocycle and it defines the spectral flow of the family. The second component manifests as a $2$-cocycle and we have the following theorem connecting the K-theoretic classification of families and the Dixmier-Douady class of the bundle gerbe~\cite[Thm.~2.3]{DK2016}:
\begin{theorem}\label{theorem:family_kclass}
	Let $X$ be a closed smooth manifold and let $\{D_x\}_{x\in X}$ be a family of self-adjoint unbounded Fredholm operators with compact resolvent. Then the cohomology class $[D_X] \in \coh^3(X,\Z)$ equals the Dixmier-Douady class of the bundle gerbe of the family.
\end{theorem}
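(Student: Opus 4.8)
The plan is to construct the bundle gerbe of the family explicitly from spectral data, extract a Čech representative of its Dixmier-Douady class, and then identify that class with the degree-three component $[D_X]$ of the odd Chern character. First I would build the gerbe over the complement of the spectral graph. Set $Y = \{(x,\lambda) \in X \times \R :~ \lambda \notin \spec(D_x)\}$, the open complement of $\mathcal{S}(D_X)$ inside $X\times\R$. Since each $D_x$ is Fredholm with compact resolvent its spectrum is discrete, and away from $\mathcal{S}(D_X)$ the spectral gaps persist under small variations of $x$; hence $\pi: Y \to X$ admits local sections and is a surjective submersion (locally split, as in the remark following Definition~\ref{def:bundle_gerbe}). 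Over $Y^{[2]}$, whose points are triples $(x,\lambda_1,\lambda_2)$ with both $\lambda_i \notin \spec(D_x)$, the spectral subspace $\Hs_{[\lambda_1,\lambda_2]}(D_x)$ of eigenvectors with eigenvalue between $\lambda_1$ and $\lambda_2$ is finite-dimensional, again by compactness of the resolvent. I then take $P \to Y^{[2]}$ to be the associated determinant $U(1)$-bundle $\Det\,\Hs_{[\lambda_1,\lambda_2]}(D_x)$.

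Second, I would verify the bundle gerbe axioms. For $\lambda_1,\lambda_2,\lambda_3$ all lying in gaps of $D_x$, the orthogonal decomposition $\Hs_{[\lambda_1,\lambda_3]} \isom \Hs_{[\lambda_1,\lambda_2]} \oplus \Hs_{[\lambda_2,\lambda_3]}$ induces the canonical isomorphism $\Det(V\oplus W)\isom \Det V \otimes \Det W$, which furnishes the multiplication $\mu: \pi_3^*(P) \otimes \pi_1^*(P) \to \pi_2^*(P)$ required in Definition~\ref{def:bundle_gerbe}; associativity over $Y^{[4]}$ follows from the coherence of these determinant isomorphisms for nested intervals. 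This yields the spectral (determinant) bundle gerbe $(P,Y)$, which is the bundle gerbe of the family.

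Third, I would read off a representative of its Dixmier-Douady class. Choosing a good cover $\{\U_i\}$ of $X$ and, by compactness and persistence of gaps, a level $\lambda_i$ lying in a spectral gap of $D_x$ for every $x\in\U_i$, produces local sections $\U_i \to Y$. The determinant lines over the double intersections then give $U(1)$-valued functions $g_{ijk}$ on triple intersections obeying the cocycle condition on quadruple intersections, and by the proposition identifying stable isomorphism classes of bundle gerbes with $\coh^3(X,\Z)$ the class $[g_{ijk}] \in \coh^2(X,\sheaf{U(1)}) \isom \coh^3(X,\Z)$ is precisely the Dixmier-Douady class.

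Finally comes the matching, and this is where I expect the real difficulty to lie. On one side $[D_X]$ is the degree-three component of the Chern character of the $K^1$-class the family defines through Theorem~\ref{theorem:k1-group}; on the other the cocycle $\{g_{ijk}\}$ describes the Dixmier-Douady class. I would argue the two coincide by naturality, since both assignments pull back along the classifying map $D_X: X \to \F^*$, so it suffices to compare the universal classes on $\F^*$, where the spectral-gap data assemble into a universal determinant gerbe whose Dixmier-Douady class is the degree-three generator detected by the odd Chern character. The hard part will be controlling this universal identification: one must handle eigenvalue crossings (the spectral-flow phenomenon governing the degree-one Chern-character component) and confirm that the determinant-line transition cocycle reproduces the degree-three Chern component, not merely some a priori degree-three class. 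A direct computation at the level of forms --- building a bundle gerbe connection and curving from the spectral projections and matching its three-curvature $\ext f = \pull{\omega}$ against the transgressed Chern form --- would give the de Rham confirmation, after which integrality fixes the integral class and completes the proof.
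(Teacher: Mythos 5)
Your construction half is sound and, in substance, matches what the paper does elsewhere: the spectral determinant gerbe over $Y = (X\times\R)\setminus\mathcal{S}(D_X)$ with fibres $\Det \Hs_{[\lambda_1,\lambda_2]}(D_x)$ is exactly how the anomalous gerbe is built in Chapter~5, and it is equivalent to the construction the paper actually gives at this point in the text, which is operator-algebraic rather than determinant-theoretic: local vacuum levels $\lambda_i$ define quasi-free CAR representations on Fock spaces $\F_i(\Hs)$, and the $S^1$-valued \v{C}ech $2$-cocycle arises as the Schur scalar of the composed intertwiners $S_{j,i}(x)S_{i,k}(x)S_{k,j}(x)$ (the intertwiner line between two such Fock representations is canonically your determinant line). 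Note, however, that the paper does not itself prove the identification of this class with the Dixmier--Douady class of the index gerbe; it establishes homotopy invariance of the cocycle class and then quotes the identification from the literature (DK2016, Lott's index gerbe). So the entire weight of the theorem sits on the step you yourself flag as the hard one.

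That step, as you propose it, fails concretely. Your plan is to compare universal classes on the classifying space $\F^*$, "where the spectral-gap data assemble into a universal determinant gerbe." But a general element of $\F^*$ is only Fredholm: its essential spectrum is excluded from a neighbourhood of $0$, while away from $0$ it may fill whole intervals. For such operators the set of non-eigenvalues $\lambda$ needed for $Y$, and the finite-dimensionality of the spectral subspaces $\Hs_{[\lambda_1,\lambda_2]}$ needed for the determinant lines, simply do not exist; compactness of the resolvent is precisely the hypothesis that makes your gerbe well defined, and it is neither shared by all points of $\F^*$ nor preserved under homotopies inside $\F^*$. Hence there is no universal determinant gerbe on $\F^*$ and the naturality argument cannot run as stated. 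Repairing it requires either replacing $\F^*$ by a homotopy-equivalent subspace on which the spectral construction lives (e.g.\ the bounded transforms of compact-resolvent operators, together with a proof that this subspace still classifies $K^1$), or arguing at the level of the family itself, as the cited sources do, by realising both classes as the obstruction to lifting the projective Fock bundle to a Hilbert bundle. Finally, even granting a universal comparison, $\coh^3$ of the classifying space is $\Z$, so naturality only gives equality up to an integer multiple; fixing that multiple to $1$ requires evaluating both classes on an explicit generator (say a family over $S^3$), which your closing appeal to "a direct computation at the level of forms" gestures at but does not supply.
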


The cohomology classes are constructed as follows.
Let $\{\U_i\}$ be an open cover 
of the space $X$, and assume that its element sets and their finite intersections are contractible. Given an open set $\U_i$ in the cover, there is a \emph{non-eigenvalue} $\lambda_i \in \R$ such that $\lambda_i \notin \spec(D_x)$, for any $x\in \U_i$.
Such non-eigenvalues indeed exist for any given $x\in X$ since the Dirac operator coupled to a potential $x\in X$ always has a spectral gap, and in particular on compact manifolds the spectrum is discrete~\cite[Thm.~III.5.8, p.~196]{LM1989}. This extends over the set $\U_i$ by Propositions~2.4--2.5 in \cite{DK2010}.
Recall from Chapter~\ref{chap:gauge} that then the eigenspace $\Hs$ of $D_x$ can be decomposed into a subspace $\Hs^+$ for eigenvalues greater than $\lambda_i$
and its orthogonal complement $\Hs^-$: this is the spectral decomposition $\Hs = \Hs^+ \oplus \Hs^-$, locally defining the Dirac sea of \laina{positive} and \laina{negative} particles with respect to the spectrum of the Dirac operator.

\idx{canonical anticommutation relations (CAR)}
Let then $\pi^+(x)$ be the local projection onto the space $\Hs^+$. It induces a quasi-free representation of the CAR-algebra $\Ca(\Hs)$ by the map
\[
	\alpha_i(x) : \Ca(\Hs) \to \bnd{\F_{i}(\Hs)} .
\]
Here $\F_{i}(\Hs)$ is the fermionic Fock space over a point $x\in X$ defined by %
	\idx{fermionic Fock space}
\[
	\F_{i}(\Hs) = \F(\pi^+(x)\Hs)\otimes \F(\overline{(\idoperator - \pi^+(x))\Hs}) ,
\]
recall Chapter~\ref{chap:gauge} and see \cite{ARAKI1987} for the general idea.
By \term{quasi-free} %
	\idx{quasi-free representation}
we mean that in each of the Fock spaces there is a vacuum vector $v_0$ satisfying %
	\idx{vacuum vector}
\[
	\annih(v) v_0 = \creat(u) v_0 = 0
\]
for all $v\in \Hs^+$ and $u\in \Hs^-$. In general, the choice of a vacuum vector cannot always be made continuously, and the topology of the vacuum line bundle is physically significant. We will discuss this more closely in Chapter~\ref{chap:hanomaly}.

Now for $x$ in the intersection $\U_i \cap \U_j$, we can define two different but \emph{equivalent} representations $\alpha_i(x)$ and $\alpha_j(x)$. Between these there is an intertwining operator %
\[
	S_{j,i}(x): \F_j(\Hs) \to \F_{i}(\Hs)
\]
such that
\[
	S_{j,i}(x) \alpha_i(x) S_{j,i}^*(x) = \alpha_j(x) .
\]
Let us then consider triple intersections: let $x\in\U_i\cap\U_j\cap\U_k$. We can define a circle-valued function $g(\U_i,\U_j,\U_k) : \U_i\cap\U_j\cap\U_k \to S^1$ induced by the composition
\[
	S_{j,i}(x) S_{i,k}(x) S_{k,j}(x) = g(\U_i,\U_j,\U_k)(x) .
\]
By Schur's lemma, this defines a scalar in $S^1$ since the composition is an intertwining operator of irreducible representations of the algebra $\Ca(\Hs)$.

A crucial property of the map $g$ is that it is a continuous function over $X$, since the intertwining operators can be made continuous; see \cite{DK2016} for details. Therefore the \emph{families of maps} $\{g:\U_i\cap\U_j\cap\U_k \to S^1\}$ define a $2$-cocycle over the cover $\{\U_i\}$ with values in the sheaf of $S^1$-valued functions, and thus we have a  cohomology class
\[
	[g] \in \coh^2(\{\U_i\},\sheaf{S^1}) .
\]

Furthermore, with a suitable choice of the cover,
this class maps to the second \v{C}ech cohomology by the canonical isomorphism (see Appendix~\ref{app:cohomology}).
For smooth and paracompact $X$ the groups in \v{C}ech cohomology and de Rham cohomology are equivalent~\cite[Sec.~1.4]{BRYLINSKI1993}, and we have an isomorphism with the third integral de Rham cohomology
\[
	\coh^2(X,\sheaf{S^1})  \isom \coh^3(X,\Z)
\]
induced by the exponential exact sequence~\cites[Ch.~23]{HJJS2007}[Sec.~5.2]{BRYLINSKI1993}
\[
	\exact{\Z}{\C}{S^1} .
\]

The cohomology class defined by the map $g$ depends on the family $D_X$ only up to homotopy: hence we can link the homotopy class of the operator family to the third de Rham cohomology. That is, by construction, we have the following useful result:
\begin{lemma}
	The cohomology class $[D_X] \in \coh^3(X,\Z)$ depends on the family up to homotopy.
\end{lemma}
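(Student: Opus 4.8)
The plan is to promote the observation made just before the statement---that the triple-overlap cocycle $g$ is built functorially from the spectral data of the family---into a genuine homotopy invariance, by exhibiting $D_X \mapsto [D_X]$ as a natural transformation and then appealing to the homotopy invariance of the functor $\coh^3(-,\Z)$. Concretely, a homotopy between two families $D_X^0$ and $D_X^1$ of Dirac-like operators is nothing but a single continuous family $\tilde D = \{\tilde D_{(x,t)}\}$ parametrised by $X\times[0,1]$ whose restrictions to the two ends are $D_X^0$ and $D_X^1$. So the goal is to show $[D_X^0]=[D_X^1]$ in $\coh^3(X,\Z)$.

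First I would check naturality under pullback. Given a continuous map $f\colon Y\to X$, I would pull the family back to $f^*D_X=\{D_{f(y)}\}_{y\in Y}$ and the cover $\{\U_i\}$ back to $\{f^{-1}(\U_i)\}$. Every ingredient of the construction then transports by precomposition with $f$: a non-eigenvalue $\lambda_i$ of $D_x$ over $\U_i$ is still a non-eigenvalue over $f^{-1}(\U_i)$ because the spectra are merely relabelled; likewise the spectral projections, the quasi-free representations $\alpha_i$ and the intertwiners $S_{j,i}$. Hence the triple-overlap function of $f^*D_X$ is $g\circ f$, which at the level of \v{C}ech cohomology reads $[g_{f^*D_X}]=f^*[g_{D_X}]$ in $\coh^2(-,\sheaf{S^1})$. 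Since the isomorphism $\coh^2(-,\sheaf{S^1})\isom\coh^3(-,\Z)$ coming from the exponential sequence is natural, this yields $[f^*D_X]=f^*[D_X]$ in $\coh^3(Y,\Z)$; because the class is cover-independent, I am free to refine $\{f^{-1}(\U_i)\}$ to a good cover when passing to the de Rham picture.

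With naturality in place the conclusion is formal. I would apply it to the two end inclusions $i_s\colon X\to X\times[0,1]$, $i_s(x)=(x,s)$, which satisfy $i_s^*\tilde D=D_X^s$ and therefore $[D_X^s]=i_s^*[\tilde D]$ for $s=0,1$. As $i_0$ and $i_1$ are homotopic maps and $\coh^3(-,\Z)$ is a homotopy functor, $i_0^*=i_1^*$ on $\coh^3(X\times[0,1],\Z)$, giving $[D_X^0]=[D_X^1]$. Equivalently, one can route the whole argument through $K^1$: after the bounded transform $\xi$ the family determines a map $X\to\F_R^*$ whose homotopy class is, by \autoref{theorem:k1-group}, an element of $K^1(X)$ left unchanged by homotopies of $D_X$, and by \autoref{theorem:family_kclass} the class $[D_X]$ is the Dixmier--Douady class read off from this $K^1$-class, hence a homotopy invariant as well.

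The real obstacle, and the point I expect to need care, is the well-definedness underlying this naturality rather than the homotopy bookkeeping itself. The intertwiners $S_{j,i}$ are fixed by the irreducible representations only up to a phase (Schur's lemma), so I must check that these phases cancel in the product $S_{j,i}S_{i,k}S_{k,j}$---which is exactly what makes $g$ a genuine $S^1$-valued cocycle and what makes its pullback equal to $g\circ f$---and, crucially for the deformation argument, that the $S_{j,i}$ can be chosen continuously across all of $X\times[0,1]$ (established in \cite{DK2016}), so that the pointwise cocycles genuinely assemble into one class over $X\times[0,1]$. Granting these continuity and phase-cancellation facts, the homotopy invariance drops out of naturality.
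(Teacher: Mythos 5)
Your argument is correct, but it runs along a genuinely different track from the paper's. The paper offers essentially no cylinder-and-naturality argument at all: it declares the lemma to hold \emph{by construction}, the point being that the family is first converted (via the bounded transform $\xi$) into a homotopy class of maps $X \to \F^*$, i.e.\ an element of $K^1(X)$ by Theorem~\ref{theorem:k1-group}, and the class $[D_X]$ is then identified with the Dixmier--Douady class attached to that $K^1$-element by Theorem~\ref{theorem:family_kclass} (delegating the hard analytic content --- continuity of the intertwiners, well-definedness of the \v{C}ech class --- to \cite{DK2010,DK2016}). That is exactly your \emph{second} route, so your closing paragraph recovers the paper's reasoning. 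Your \emph{primary} route --- pullback-naturality of the triple-overlap cocycle, followed by the two end inclusions of $X\times[0,1]$ and homotopy invariance of $\coh^3(-,\Z)$ --- is more explicit and more self-contained: it makes visible precisely which properties of the construction are used (functoriality of spectral data under precomposition, cover-independence, naturality of the exponential isomorphism $\coh^2(-,\sheaf{S^1}) \isom \coh^3(-,\Z)$), at the cost of having to handle the cylinder by hand. What the paper's route buys is brevity and the automatic packaging of all analytic verifications inside the cited classification theorems.

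Two small points to tighten. First, $X\times[0,1]$ is a compact manifold with boundary, whereas the construction in this section is stated for \emph{closed} smooth $X$; nothing in the construction actually uses closedness (only compactness/paracompactness), but you should say so, or replace $[0,1]$ by $\R$, since otherwise you are applying the construction outside its stated hypotheses. Second, the phrase that the Schur phases \emph{cancel} in $S_{j,i}S_{i,k}S_{k,j}$ is not quite the right mechanism: rephasing $S_{j,i} \mapsto c_{j,i}S_{j,i}$ multiplies $g$ by the \v{C}ech coboundary $\cob c$, so it is the cohomology class $[g]$, not the cocycle itself, that is independent of the choices --- and that is all your naturality statement $[f^*D_X] = f^*[D_X]$ needs.
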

This forms the basis for computing the topological invariants related to the anomalous current commutators, as one can apply a homotopy transformation to the moduli space and yet derive the Schwinger terms from the same de Rham representations of the cohomology class. We will come back to an example of this in Section~\ref{sec:general_comm}.

\idx{Dixmier-Douady class}
Furthermore, these classes are the Dixmier-Douady classes defining equivalent bundle gerbes related to the operator family.
The gerbe class used in \cite{DK2016} is that of an \emph{index gerbe} defined in \cite{LOTT2002}. This is a variant of the bundle gerbe we discussed in \ref{sec:gerbes}; see \cite{CW2006} for further details.

\begin{remark}
	An interesting question is when the element of $K^1(X)$ given by the family is trivial -- in \cite{DK2010,DK2016} it is shown that the obstruction to trivialisation can be expressed as relevant Chern classes, the first two relating to the \emph{spectral flow} and the \emph{index gerbe}. Vanishing Dixmier-Douady class leads to a trivial element under the assumption that the operator family has a spectral multiplicity bounded by $2$.
	For families with \emph{constant} spectral multiplicity, it is enough to trivialise the spectral flow~\cite{DK2010}.
	These results do not in general show up in the physics applications though, since the restriction to the spectral multiplicity is often too severe.
\end{remark}

\chapter{Hamiltonian anomaly}\label{chap:hanomaly}
\idxalku{anomaly!Hamiltonian}

In this chapter we consider chiral fermions coupled to an external gauge field and the breaking of their internal symmetry structure. 
The coupling comes from a parametrisation by external fields, and we see that the gauge invariance cannot always be maintained when moving in this parameter space.

From the Lagrangian perspective the anomalous symmetry results in an effective action functional which is not gauge invariant -- geometrically this is seen from the curvature of a complex line bundle called the determinant bundle, defined over the moduli space of gauge connections. We stress that this situation arises when considering Weyl fermions -- that is, \emph{massless} Dirac fields. For the general Dirac fields with a nonzero mass, the effective action functional has no such deficiency.

When we look at this from the Hamiltonian perspective, we see that the equal time commutator relations of the gauge current algebra are modified by anomalous terms called the Schwinger terms. Given a trivial vector bundle modelling the external field coupling, this current algebra is formed from the Lie algebra of functions $\map{M,\lie{g}}$, where the manifold $M$ is the physical space and the Lie algebra $\lie{g}$ is that of the gauge symmetry group $G$. The anomalous terms are derived from the cohomology of the related current group extension, and they are related to a topological obstruction to prolong the action of the gauge symmetry group on the projective Fock bundle to an action on a proper Hilbert bundle. In physics terms, this means that we cannot fix a coherent vacuum state.

The origin of the anomaly can be said to be topological. We strive to connect all the different perspectives in this light: that the cohomology classes given by the current algebra cocycles are computed as the curvature of the determinant bundle, which is a topological invariant,
and that these same classes are coming down from the anomalous bundle gerbe. Thus the defining topological invariant behind the anomaly is the Dixmier-Douady class which classifies the bundle gerbes. For computations one can apply the index theorem for families of Dirac operators over the moduli space of gauge connections.

The novel contribution in this Chapter is the generalisation of the current commutators and the computation of the Schwinger terms in Section~\ref{sec:general_comm}. Traditionally the Hamiltonian anomaly is concerned with the time-components of the gauge currents and their commutators. For the remaining current components one needs to consider a modified space of connections living on the tensor product of the spinor bundle and the gauge bundle. We show that under a fixed spin connection this modification is in fact topologically equivalent to the standard case, and that therefore we can apply similar methods to compute the generalised commutators.
We conclude with example computations in the case of $M=S^3$ illustrating the effect of these modifications.

\paragraph{Background geometry}

\idx{space-time}
Let us first recall the background geometry for the most of what follows in this chapter. Let $M$ be a compact and connected smooth spin manifold without boundary. Fix the dimension $n \defeq \dim(M) = 2k + 1$ for some $k\in\N_0$. Write $S$ for the spinor bundle over $M$, and let $E$ be a trivial vector bundle over $M$, so that we can define square-integrable sections $\psi \in \Hs$ of the vector bundle $S\otimes E \to M$.%
	\footnote{The measure in $\Hs$ is defined by the fixed metrics in $M$ and $E$.}
\idx{spinor!bundle}
\idx{fermion!field}

Let $G$ be a compact Lie group of internal gauge symmetries with a unitary representation in the fibre of the bundle $E\to M$, following the outline in Chapter~\ref{chap:gauge}. Associated to the symmetry group we locally define the gauge connections $\A$ as $1$-forms on $E$ with values in the Lie algebra $\lie{g}$ of $G$.
We will call the bundle $E\to M$ the \term{gauge bundle}.

\idx{gauge transformation!group of}
Furthermore, since $E$ is trivial, we can define $\G$ as the space of gauge transformations as smooth \emph{based} maps in $\map{M,G}$ -- for all $g \in \G$ and a fixed point $p \in M$, set $g(p) = e$, the identity in $G$.
\idx{connection!moduli space of}
The quotient $\A/\G$ is the moduli space of gauge connections. By the quotient manifold theorem the moduli space is a smooth manifold if the action of $\G$ on $\A$ is free, which is guaranteed by fixing the base point as above.

We call the collection $\left(M,S\otimes E,G \right)$ the \term{odd-dimensional fermionic geometry}. The \term{Dirac operator $D_A$} is then a first-order differential operator on $\Hs$, the sections of the bundle.

\section{Physics behind the anomaly}

The central statement of the Hamiltonian anomaly is the impossibility to define a coherent vacuum state on the bundle of Fock spaces. While the axial chiral anomaly we discussed in Chapter~\ref{chap:gauge} has laboratory-observable physical consequences~\cite{XIONG2015} -- only \emph{external} symmetry is broken and thus the anomaly term is not at odds with the theory itself -- the non-Abelian anomaly in general signifies the breaking of quantum theory since the gauge symmetry is not conserved.
In this case one loses the gauge equivalence between the quantised state spaces.

Let us quickly review the Dirac quantisation. Consider a family of (classical) Dirac operators $D_A$ defined with respect to the space of gauge connections $\A$. These operators transform covariantly under the adjoint action of the gauge transformations $g\in \G$:
\[
	\ad_{g}(D_A) = D_{A^g} ,
\]
where $A \to A^g$ is the gauge transformation of the connection. A self-adjoint operator $D_A$ defines a polarisation of the particle state space $\Hs$ into negative and positive energy subspaces with respect to some chosen non-eigenvalues of $D_A$. This, in turn, gives an irreducible representation of the CAR-algebra in the Fock space
\[
	\F(\Hs) = \F({\Hs^+}\oplus\Hs^-) = \F(\Hs^+) \otimes \F(\overline{\Hs^-}),
\]
see \cite{ARAKI1987} for a more in-depth review. For this representation to be uniquely defined (up to an equivalence) and physically meaningful, one requires the existence of a \term{vacuum state vector} $\psi_0$ such that
\[
	\creat(v)\psi_0 = \annih(w)\psi_0 = 0
\]
for all pairs $(v,w) \in \Hs^- \times \Hs^+$. Here, $\creat$ and $\annih$ are the creation and annihilation operators in $\Hs$, as defined earlier in Chapter~\ref{chap:gauge}.

The Fock bundle is defined over the parameter space, $\F \to \A$. The quantised Dirac operator acts in the fibres of this bundle, but it is not obvious if the action of the symmetry group on the base can be lifted to the fibre; in other words, if there is a symmetry covariance for the quantised Dirac operator $\hat{D}_A$ such that the vacuum state is preserved. For massless Dirac operators and smooth vector potentials, this may not be possible. The bulk of this chapter is dedicated to studying the means of assessing when such a situation arises, and to the different topological and geometric perspectives behind it.

\begin{remark}
There is a mass-gap in the general Dirac equation which ensures the existence of a vacuum state. For the massive fermion fields there is then a continuous interval of non-eigenvalues of the Dirac operator, and we can define the spectral polarisation of the Hilbert space continuously over the parameter space.
\end{remark}

\begin{remark}
Massless fermion fields do not sound very physical at the outset. However, Weyl fermions serve as important building blocks in the standard model and in general the Dirac spinors can be defined as combinations of massless left- and right-handed fermions. Moreover, while they have not been observed as elementary particles as such, Weyl fermions may emerge as quasi-particles in condensed matter systems. In particular, there is an important class of topological materials called \emph{Weyl semimetals} which retain many properties of the massless system.
\end{remark}

\subsection{Current commutators} %
\idxalku{anomaly!commutator}

Consider the mapping group $\map{M,\lie{g}}$ and the related gauge currents $j_{\mu}^a(x)$, where $\mu$ is the space-time index and $a$ the internal gauge index which runs through the generators of the gauge algebra; see Chapter~\ref{chap:gauge} for details.
	\idxalku{current algebra}
Anomalous symmetries were first observed as a divergence term coming from equal time commutator relations~\cite{GI1955}. The commutators for the \term{time components} of the current terms of the form
\[
	j_{0}^a(x) = \psi^{\dagger}(x)\tau^a\psi(x) ,
\]
where $\psi(x)$ is a fermion field, follow from the known relation
\[
	\comm{j_0^a(x)}{j_0^b(y)} = i\lambda_c^{ab}j_0^c(x)\delta^{(2n+1)}(x-y) + \alpha^{ab}(x,y) .
\]
Here $\delta^{(2n+1)}$ is the $(2n+1)$-dimensional Dirac delta distribution, and the coefficients $\lambda_c^{ab}$ are the structure constants for the generators $\tau^a$ of some given matrix representation of the Lie algebra $\lie{g}$ of $G$ so that
\[
	\comm{\tau^a}{\tau^b} = i\lambda_c^{ab} \tau^c .
\]

What we are interested in here are the \emph{anomalous terms} $\alpha^{ab}(x,y)$ that appear as an extra on the right-hand side.  %
	\marginnote{Julian Schwinger (1918--1994) introduced an additional term to the commutator equation as a remedy to the observed inconsistency~\cite{SCHWINGER1959}.} %
	\idx{Schwinger terms}
These \term{Schwinger terms} reflect the divergent gauge currents and hence introduce a symmetry anomaly.
Formally they are local expressions involving the Dirac delta distribution and its derivatives, and in general they can be computed as cocycles in group cohomology~\cite{JACKIW1972,JACKIW1985}. Depending on the model, these terms may be trivial -- and when they are not, they signify the presence of an anomaly. In three dimensions the Schwinger terms are dictated by the \term{Mickelsson-Faddeev cocycle} %
	\idx{Mickelsson-Faddeev cocycle}
we already touched upon in Chapter~\ref{chap:higher}~\cite{MICKELSSON1985,FADDEEV1984}:
\[
	\alpha(A;u,v) = \frac{i}{24\pi^2} \int_{M^3} \tr A ~[\ext u , \ext v ] ,
\]
so that for instance, following the notation in \cite[Ch.~10]{FADDEEV1984,AI1995}, the cocycle amends the commutator relation as follows:
\begin{align}\label{eq:faddeev_schwinger}
	\comm{j_0^a(x)}{j_0^b(y)} =& i\lambda_c^{ab}j_0^c(x)\delta^{(3)}(x-y) \nonumber \\
		&~ + \frac{i}{24\pi^2} \tr (\acomm{\tau^a}{\tau^b}\tau^c) \epsilon_{\mu\nu\eta} \partial_{\mu} A_{{\nu}}^c(x) \partial_{\eta} \delta^{(3)} (x-y) .
\end{align}

While these computations with the time-components are well-known, it is not quite as obvious as to what happens with the commutators for the space components. One can again write down the commutators formally as
\[
	\comm{j_0^a(x)}{j_{\mu}^b(y)} = i\lambda_c^{ab}j_{\mu}^c(x)\delta^{(2n+1)}(x-y) + \text{cocycle terms} ,
\]
but more care has to be heeded when describing the exact symmetry relations.
In general, the currents feature linear combinations of the basis $\sigma_{\mu} \otimes \tau^a$ built from Pauli matrices and the generators of the current algebra -- we will call the components of this tensor product the \term{spin components} and the \term{Yang-Mills components}, respectively. 
Now the general current components can be written as tensor products of the form
\[
	j_{\mu}^a(x) = \psi^{\dagger}(x)\left(\sigma_{\mu}\otimes \tau^a \right) \psi(x) ,
\]
and one can compute the naive commutator relations using the basic properties of matrix tensor products. To obtain the anomalous cocycle terms, the same basis extension needs to be done for the differential forms $A$, $u$ and $v$.

Geometrically, the gauge connections live now on the tensor product bundle $S\otimes E$, and the group of gauge transformations is not induced just from the internal symmetry group $G$ but from the \term{extended symmetry group} $G_S$ which accounts for both $\spin$ and $G$. However, since the \emph{moduli space} remains unaffected by this modification, one nevertheless expects similar cocycle terms to appear as with the time-components -- and this can be justified by topological arguments, as we do in Section~\ref{sec:general_comm}.

\begin{remark}
How does the commutator anomaly break the gauge symmetry? One can see this by considering the so-called \term{Gauss law constraints} in gauge theory which follow directly from the symmetry invariance~\cite{FADDEEV1984}. These constraints as quantised operators must follow the commutator relations so that the gauge algebra can be recovered: in non-anomalous theories the Gauss law operator generates the gauge transformations~\cite[Sec.~3.2]{JACKIW1985}. Anomalous terms in the commutators prevent this, and thus the constraints cannot be placed on the physical states; see \cite[Ch.~10]{AI1995} for details. 
The other side of the story lies in the topological origin of the anomaly terms, which we discuss more closely in the following.
\end{remark}

\idxloppu{current algebra}
\idxloppu{anomaly!commutator}
%

\section{Extensions of current groups} %
\idxalku{current group}

In quantum mechanics, the state space is modelled as rays on a Hilbert space $\Hs$, thus defining a projective space $P(\Hs)$. 
On the other hand, the projective unitary group $PU(\Hs) = U(\Hs)/S^1$ is a connected component of the unity of all unitary operators.
Given a connected topological group of symmetries $G$ of the system, there is then a continuous representation in the group $PU(\Hs)$ (or in the space $P(\Hs)$).

A question then arises whether it is possible to lift this to a unitary representation on $U(\Hs)$. There are two obstructions creating this \term{lifting problem}: the topological obstruction when $G$ is not simply connected, and the \emph{cohomological obstruction}. If $G$ is a Lie group, the latter can be computed in Lie algebra cohomology, where the relevant class is given by $\coh^2(\lie{g},\R)$.

A similar issue is behind the Hamiltonian anomaly in quantum field theory when considering massless fermions coupled to an external potential. Here one has a \emph{projective} bundle $\mathcal{P}$ of Fock spaces on the space of connections $\A$. Given a gauge symmetry group $G$ and the induced group of based gauge transformations \,$\G$, we would want to have a Hilbert bundle $\Hs$ over $\A/\G$ which has a projective bundle isomorphic to $\mathcal{P}/\G$. For $\A$, an affine space and simply connected, the question is trivial. But taking the connections modulo gauge transformations, we need to lift the group action of $\G$ to $\Hs$ and to this there is again a cohomology obstruction, this time as a class in $\coh^3(\A/\G, \Z)$.

The core issue in both of these questions is that the lift of the group action does not necessarily work with the desired group $G$, but \emph{only with a particular extension of the group.} Such extensions can be characterised by the cohomology theory of Lie groups and algebras, and the related cocycles then become the source for the physical anomaly terms. In particular, if the group extension of $G$ is by an Abelian group $A$, it is classified (up to an isomorphism) by the cohomology group $\coh^2(G,A)$. There is also an important connection to the third cohomology group through transgression, leading to a geometric characterisation through bundle gerbes.

\subsection{Cohomology theory} %
\idxalku{Lie group!cohomology of}

We recall some of the basic results concerning group cohomology from Appendix~\ref{app:cohomology}. Let $A$ be an Abelian group and $G$ a group acting via automorphisms $\rho: G \to \aut(A)$ on $A$ -- in other words, $A$ is a $G$-module.

By Theorem~\ref{theorem:group_ext_coh} there is an isomorphism
\[
  \coh^2(G,A) \isom \extensions(G,A) .
\]
Note that the cocycles in Theorem~\ref{theorem:group_ext_coh} are not necessarily smooth (besides locally in the neighbourhood of the unity). To properly apply the cohomology theory to (possibly infinite-dimensional) Lie groups, we also require additional conditions for \emph{smoothness}. This means that the action map on $G \times A$ should be smooth, and that the cochain maps $G^n \to A$ should be smooth in an identity neighbourhood. For Theorem~\ref{theorem:group_ext_coh} to hold one needs to assume connectedness of the group $G$, work within the connected component, or consider a suitable subgroup cohomology~\cite[Sec.~II and App.~B]{NEEB2004}. In the following, we tacitly assume such conditions whenever necessary.

\idx{Lie algebra!cohomology of}
There is a corresponding cohomology theory for Lie algebras~\cite[Sec.~7.5--7.6]{HN2012}.\footnote{Here instead of smoothness we should require \emph{continuity} for the action map $\lie{g}^n \to \lie{a}$~\cite[Sec.~I and App.~B]{NEEB2004}.} An important result analogous to Theorem~\ref{theorem:group_ext_coh} is the following isomorphism:
\[
  \coh^2(\lie{g},\lie{a}) \isom \extensions(\lie{g},\lie{a}) ,
\]
for extensions of $\lie{g}$ by an Abelian Lie algebra $\lie{a}$ and with respect to a given representation of these Lie algebras on some vector space. The exact relationship between the second cohomology groups of Lie groups and algebras depends on the setting. In particular, we have Theorem~\ref{theorem:group_algebra_coh_isom}:
\begin{quotation}
	\noindent Let $G$ and $A$ be connected Lie groups, $G$ simply connected and $A$ Abelian. Then
	\[
	  \coh^2(G,A) \isom \coh^2(\lie{g},\lie{a}) .
	\]
\end{quotation}

In general, one can define a derivation map for $\coh^n(G,A) \to \coh^n(\lie{g},\lie{a})$ for $n\geq 2$, which is a monomorphism if $G$ simply connected and $A \isom \lie{a}/\Gamma_A$ for some discrete subgroup $\Gamma_A$ of $\lie{a}$, see Theorem~\ref{theorem:cohom_hom}. Many of the applications in gauge theory can be made to fulfill these requirements, in particular this holds for all current groups $\map{M,G}$ with $G=SU(p)$ for $p\geq3$
if the base manifold $M$ has the dimension $3$. The case of $SU(2)$ is different, since the current group is not connected. However, the extension is in fact a simpler one with the fibre $\Z_2$ and the computation of the relevant cohomology groups is not that tricky, see~\cite{MICKELSSON1987} for details.

\idxloppu{Lie group!cohomology of}

\subsection{Current groups and anomalous cocycles}

The group-theoretic setting for the fermionic field theory necessarily deals with infinite-dimensional Lie groups and their algebras. The gauge current algebra is an example of such an object: a Lie algebra formed of maps $\map{M,\lie{g}}$. We look here at examples for dimensions $\dim(M)=1$ and $\dim(M)=3$: in the first case we see the chiral anomaly arising from a central extension of the Kac-Moody algebra, and the second example introduces the Mickelsson-Faddeev extension we already saw in Chapter~\ref{chap:higher} in the context of $3$-loop groups. We refer to \cite[Ch.~4]{MICKELSSON1989} for more details.

To begin with, we have a projective action on the Fock space through the group extension
\[
	\exact{\map{\A,S^1}}{\gext{\G}}{\G} .
\]
In the case that the extension is cohomologically trivial, the action can be lifted to an honest group action. The corresponding class in the Lie algebra is then the source for the possible anomaly terms: from these cocycles one can derive the local Schwinger terms for the current commutators.

Let us look at examples of this in the the case of $M=S^n$ for odd-dimensional $n$-spheres. Now for the group of gauge transformations $\G = \map{S_e^n,G}$ -- the group of based maps from $S^n$ to $G$ -- the fibration
\[
  \G \to \A \to \A/\G
\]
is equivalent in homotopy to the fibration
\[
  \G \to \paths{S_e^{n-1}G} \to S_e^{n-1}G ,
\]
where the elements in $\A \defeq \paths{S_e^{n-1}G}$ are smooth paths $f$ on $S_e^{n-1}G$ starting from the identity $e$, and if we parametrise the paths by  $t \in [0,1]$, we have $\pi(f) = f(1)$ under the projection $\pi: \A \to \A/\G$. To ensure that the moduli space is a smooth manifold we also need to apply suitable boundary conditions, such as imposing zero radial derivatives on the boundary.

This construction works for $n\geq 2$, and in the case of $n=1$ we have $\G = \bloopg{G}$ and the sequence
\[
  \bloopg{G} \to \paths{G} \to G .
\]
Depending on the topology of the symmetry group $G$, this may lead to nontrivial extensions by the Abelian group $\map{\A,S^1}$.

\idx{Kac-Moody!cocycle}
\begin{example}[$M=S^1$ and the central extension]
In the dimension $n=1$ the relevant group extension is \emph{central}:
\[
  \exact{S^1}{\gext{LG}}{LG}
\]
defining a principal $S^1$-bundle over $LG$. Note that this actually defines a \emph{family} of extensions, which can be characterised by their level $k$ in some representation, see \cite[Ch.~4]{PS1986}. In this case the Lie algebra $2$-cocycle is the central Kac-Moody term:
\[
  \omega(x,y) = \frac{ik}{4\pi} \int_{S^1} \ip{x, \ext y} .
\]
\end{example}

\idx{Mickelsson-Faddeev cocycle}
\begin{example}[$M=S^3$ and Mickelsson-Faddeev extension]
Let the space-time manifold be a $3$-sphere $S^3$. Now the moduli space of based gauge transformations $\A/\G$ is the current group $S_e^2G$~\cite{SINGER1981}.

The obstruction to lift the action is then characterised by the Abelian extension
\[
  \exact{\map{\A,S^1}}{\gext{S^3G}}{S^3G} .
\]
The anomalous commutator term is given by the $2$-cocycle of the corresponding Lie algebra extension
\[
  \exact{\map{\A,i\R}}{\gext{S^3\lie{g}}}{S^3\lie{g}} ,
\]
which is known as the \term{Mickelsson-Faddeev cocycle}~\cite{FADDEEV1984,MICKELSSON1985}:
\[
  \omega(A;x,y) = \frac{i}{24\pi^2} \int_{S^3} \tr A \comm{\ext x}{\ext y} .
\]

\end{example}


\subsection{Transgression}\label{sec:transgression} %
\idx{transgression}

Let $\omega$ be a $3$-form on the moduli space $\A/\G$ representing a cohomology class, and let $\pi: \A \to \A/\G$ be the natural projection. Since $\A$ is topologically trivial, there is an \emph{exact} $3$-form on $\A$ given by the pull-back $\ext \mu = \pi^*(\omega)$, where $\mu$ is some $2$-form. Explicitly, for $A\in \A$ and $x,y \in \Lie(\G)$, we can write the Lie algebra $2$-cocycle $c$ as
\[
  c(A;x,y) = \mu_A(x,y) .
\]
Here $x,y$ are interpreted as vertical tangent vectors at $A \in \A$.

This relationship between the topology of the moduli space $\A/\G$ and the cocycles of the extensions of the Lie algebra $\Lie(\G)$ is given by the following map~\cite{CM1994}:
\begin{definition}
	The \termd{cohomology transgression} is the map
	\[
		\coh^p(\A/\G,\Z) \to \coh^{p-1}(\Lie(\G),\map{\A,i\R})
	\]
	for $p>1$.
\end{definition}

The third cohomology group on $\A/\G$ is important since its elements represent the obstruction to the lift of the principal $\G$-bundle to the extension
\[
  \exact{U(1)}{\gext{\G}}{\G} .
\]
In \cite{HMSV2013} it was shown that the fibre in this extension can be replaced with any topological Abelian group. Thus by the transgression map we have a relation between the anomalous $2$-cocycles and the cohomology obstruction of the lifting problem. The latter can also be seen geometrically as the Dixmier-Douady class of the related bundle gerbe; more on this in Section~\ref{sec:gerbes_anomaly}

\idxloppu{current group}
%

\section{Determinant bundle} %
\idxalku{determinant bundle}

Determinant (line) bundle is a generalisation of the ordinary determinant of a finite-dimensional linear operator to a determinant-like geometric object related to a vector bundle over an infinite-dimensional manifold. Let us recall the familiar determinant construction for finite-dimensional spaces: Let $P: E \to F$ be a linear operator between $n$-dimensional vector spaces $E$ and $F$ over a field $K$. Any vector space $E$ has a natural \term{top exterior power} $\Top{n} E$ defined by the skew-symmetrised $n$th tensor power of $E$. Then one defines the \term{determinant of $P$} as a linear map $\det P : \Top{n} E \to \Top{n} F$. Note that since the top exterior powers are $1$-dimensional vector spaces, the determinant can be seen as a line $\Top{n} E^* \otimes \Top{n} F$ in $K$
-- in particular, if $E = F$, we have the usual association of the determinant as an element of the field $K$. The map $\det P$ retains many of the usual properties of the determinant, being nonzero if and only if $P$ is invertible.

We want to extend this definition over infinite-dimensional spaces of Fredholm operators $\mathcal{P}$. An early example of this was provided in~\cite{QUILLEN1985}, in which the determinant line bundle was studied over a space of Cauchy-Riemann operators. The gist of this construction is that for a given operator $P$ one can take as a fibre $\Top{n} (\ker P)^* \otimes \Top{n} \cker P$, and a family of such elements defines a line bundle on the space $\mathcal{P}$. If furthermore this determinant bundle can be trivialised, one could identify the bundle sections with functions on the base space and thus obtain a number associated to the operator and call this the determinant.%
	\footnote{For this definition to work, one must assume that the index $\dim(\ker P) - \dim(\cker P)$ is zero; compare the requirement $\dim(E) = \dim(F)$ for the finite-dimensional determinant. In this way one obtains a canonical section of the bundle which can be identified as a determinant map $\det P$ on $\mathcal{P}$, fulfilling the basic property of a vanishing determinant for non-invertible operators.}

Note that the determinant defines an endofunctor in the category of vector spaces:
\[
  (P:E \to F) \mapsto (\det P :\Top{n} E \to \Top{n} F) .
\]
This can be used to induce an endofunctor in the category of vector bundles over a manifold.
Let $E \to X$ be an $n$-dimensional vector bundle, and consider an open cover $\{\U_i\}$ of $X$. The vector bundle is then characterised by the transition functions $g_{ij}$ defined on intersections $\U_i \cap \U_j$ with values in the general linear group $GL(n)$. Taking now determinants of these transition functions we get transition functions for a bundle $\Top{n} E \to X$. We take this as the abstract definition of the determinant line bundle of $E$.

In the following we construct the determinant line bundle related to a Fock bundle over the space of gauge connections. The procedure essentially follows the one given in~\cite{CMM1997}.

Let us consider the odd-dimensional fermionic geometry $\left(M,S\otimes E,G \right)$. Each connection $A\in \A$ defines a massless Dirac operator $D_A$ acting on square-integrable sections of $S\otimes E$. For such operators we can define the space
\[
  \A_0 = \{(A,\lambda) :~ \lambda \notin \spec(D_A) \} \subset \A \times \C
\]
over $\A$. Let then $\U_{\lambda} \subset \A$ be open subsets such that $\U_{\lambda} = \{A\in \A :~ (A,\lambda) \in \A_0 \}$.

With respect to the given non-eigenvalue $\lambda$ there is a \term{spectral decomposition} %
	\idx{spectral decomposition}
\[
  V_A = V_{(A,\lambda)}^+ \oplus V_{(A,\lambda)}^-
\]
where $V_{(A,\lambda)}^+$ is the eigenspace of $D_A$ corresponding to eigenvalues larger than $\lambda$, and $V_{(A,\lambda)}^-$ is the orthogonal complement. Denote by $V_{(A,\lambda,\eta)}$ the intersection of $V_{(A,\lambda)}^+ \cap V_{(A,\eta)}^-$ for non-eigenvalues $\lambda < \eta$. We can then \emph{locally} define a canonical complex line bundle over an intersection $\U_{\lambda,\eta} = \U_{\lambda} \cap \U_{\eta}$,
\[
  \Det_{\lambda,\eta}(A) = \Top{\textrm{top}} V_{(A,\lambda,\eta)}
\]
for all $A \in \U_{\lambda,\eta}$. Since $M$ is assumed to be compact, the spectral subspace corresponding to the interval $[\lambda,\eta]$ is finite-dimensional and this line bundle is well defined. Furthermore, set $\Det_{\eta,\lambda}=\inv{(\Det_{\lambda,\eta})}$ when $\lambda < \eta$. By construction these local line bundles $\Det_{\lambda,\eta}$ fulfill the \term{cocycle condition} with respect to the open cover $\{\U_{\lambda}\}$ of $\A$:
\[
  \Det_{\lambda,\eta} \otimes \Det_{\eta,\mu} = \Det_{\lambda,\mu}
\]
over triple intersections $\U_{\lambda,\eta,\mu}$. The line bundles inherit a natural Hermitian structure from their definition as exterior powers of finite sub-bundles in a Hilbert space, and by restriction to $S^1 \subset \C$ they form $U(1)$-bundles over $\A$.

Now over for each $\U_{\lambda}$ we aim to define line bundles $\Det_{\lambda}$ fulfilling
\[
  \Det_{\eta} = \Det_{\lambda} \otimes \Det_{\lambda,\eta}
\]
over intersections $\U_{\lambda,\eta}$. That such a family exists is in fact equivalent to the cocycle condition of the local line bundles:

\begin{proposition}\label{prop:det_bundle}
	There is a family of line bundles $\{\Det_{\lambda} \to \U_{\lambda} \}$ for all open sets $\U_{\lambda}$ such that
	\[
		\Det_{\eta} = \Det_{\lambda} \otimes \Det_{\lambda,\eta}
	\]
	over intersections $\U_{\lambda,\eta}$ if and only if the line bundles $\Det_{\lambda,\eta} \to \U_{\lambda,\eta}$ fulfill the cocycle property.
\end{proposition}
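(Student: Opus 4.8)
The statement is an equivalence, so I would prove the two implications separately, treating the forward one as a formal cancellation and reserving the real work for the reverse. For the \emph{only if} direction, suppose a family $\{\Det_\lambda \to \U_\lambda\}$ with $\Det_\eta \isom \Det_\lambda \otimes \Det_{\lambda,\eta}$ is given. Tensoring by the dual (inverse) line bundle $\Det_\lambda^*$ and using that $\Det_\lambda \otimes \Det_\lambda^*$ is canonically trivial, I would rewrite this as $\Det_{\lambda,\eta} \isom \Det_\lambda^* \otimes \Det_\eta$ over each $\U_{\lambda,\eta}$. Substituting into the product over a triple intersection gives $\Det_{\lambda,\eta}\otimes\Det_{\eta,\mu}\isom \Det_\lambda^*\otimes\Det_\eta\otimes\Det_\eta^*\otimes\Det_\mu\isom\Det_\lambda^*\otimes\Det_\mu\isom\Det_{\lambda,\mu}$, which is exactly the cocycle condition; the relation $\Det_{\eta,\lambda}\isom\Det_{\lambda,\eta}^{-1}$ falls out of the same computation. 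This direction needs no geometry beyond the group structure of line bundles under tensor product.

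For the \emph{if} direction I would recognise the given datum as a descent problem. The sets $\{\U_\lambda\}$ cover $\A$, and assigning to each double overlap the line bundle $\Det_{\lambda,\eta}$, subject to the stated cocycle identity over triple overlaps, is precisely the local data of a $U(1)$-bundle gerbe over $\A$: the surjective submersion is the locally split map $\bigsqcup_\lambda \U_\lambda \to \A$, and the fibre bundle restricts to $\Det_{\lambda,\eta}$ on $\U_{\lambda,\eta}$. Producing the primitive family $\{\Det_\lambda\}$ with $\Det_{\lambda,\eta}\isom\Det_\lambda^*\otimes\Det_\eta$ is then exactly the problem of trivialising this gerbe, that is, of exhibiting the \v{C}ech $1$-cocycle as a coboundary.

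The key step is that this trivialisation is unobstructed here because the base is topologically trivial. Since $\A$ is an affine space it is contractible, so $\coh^3(\A,\Z)=0$; by the classification of bundle gerbes through their Dixmier-Douady class, any gerbe over $\A$ is trivial, and a trivialisation furnishes line bundles $\Det_\lambda \to \U_\lambda$ whose coboundary recovers the $\Det_{\lambda,\eta}$. Equivalently, and more concretely, I would fix a reference non-eigenvalue $\lambda_0$, set $\Det_\lambda \defeq \Det_{\lambda_0,\lambda}$ on $\U_{\lambda_0}\cap\U_\lambda$, and extend over all of $\U_\lambda$ by patching the analogous choices made for different reference values; the cocycle condition is exactly what forces consecutive patches to agree, and the vanishing of the obstruction over the contractible $\A$ is what lets this telescoping close up globally.

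The main obstacle, and the only place where anything beyond bookkeeping enters, is precisely this last point: each $\U_\lambda$ is merely an open subset of $\A$ and need not itself be contractible, so one cannot simply declare the $\Det_\lambda$ trivial. What rescues the construction is that the \emph{global} obstruction to assembling them is the Dixmier-Douady class in $\coh^3(\A,\Z)$, which vanishes on the contractible space of connections. I would therefore phrase the reverse implication so that it rests on the triviality criterion for gerbes established earlier in Section~\ref{sec:gerbes}, rather than on any contractibility of the individual overlaps $\U_\lambda$.
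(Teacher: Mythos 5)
Your proposal is correct and follows essentially the same route as the paper: the forward direction is the formal tensor cancellation, and the converse rests on the topological triviality of the affine space $\A$. The paper's own proof is just a two-sentence sketch, and your elaboration via gerbe trivialisation and the vanishing of the obstruction class in $\coh^3(\A,\Z)$ is precisely the content hiding behind its phrase \emph{"since the base $\A$ is topologically trivial"}.
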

\begin{proof}
	The family $\{\Det_{\lambda} \to \U_{\lambda} \}$ necessitates the cocycle property by construction.

	Conversely, assume that the line bundles $\Det_{\lambda,\eta} \to \U_{\lambda,\eta}$ fulfill the cocycle property. This local data is then enough to justify existence of the family since the base $\A$ is topologically trivial.
\end{proof}

Let $\F'_{\lambda}(A)$ be the local Fock space over $\U_{\lambda}$. Then, with respect to the local line bundles
\[
  \F'_{\eta}(A) = \F'_{\lambda}(A) \otimes \Det_{\lambda,\eta}(A) .
\]
Now we can define the \term{Fock bundle}
\[
	\F_{\lambda}(A) = \F'_{\lambda}(A) \otimes \Det_{\lambda}(A)
\]
over $\U_{\lambda}$. By construction, the bundle $\F_{\lambda}(A)$ is independent of $\lambda$ and hence gives a well-defined bundle of Fock spaces $\F \to \A$.

The structure of determinant bundles can also be understood in terms of bundle gerbes, as we see in Section~\ref{sec:gerbes_anomaly}.

\subsection{Anomaly and the bundle curvature}

The anomaly is now related to the lift of the gauge action of $\G$ on the base space $\A$ to the total space $\F$. The gauge action on the open sets $\U_{\lambda}$ can be naturally lifted to $\F'_{\lambda}(A)$, but the lift to the determinant bundle hinges on the triviality of the bundle. A non-zero curvature of the determinant bundle is a \emph{topological obstruction}, since it is given by the first Chern class of the determinant bundle and this in turn is the generator of the cohomology group $\coh^2(\A/\G,\Z)$.

First note that the line bundles descend naturally along the projection $\pi:\A \to \A/\G$ from $\U_\lambda \subset \A$ to open subsets $ \V_\lambda \subset \A/\G$, since the transformations in $\G$ act covariantly on the Dirac operator $D_A$ and the determinant bundle relies only on the spectral data of the operator. The question of the anomaly is trivial over the affine $\A$, but on the moduli space much depends on the group $G$.

For connections $\A$ in \emph{odd dimensions} we would expect that the determinant bundle is trivial~\cite{AS1984}, since then $\coh^2(\A/\G,\Z) = 0$. However, given the physical manifold $M$ of dimension $2k+1$, we have \emph{even-dimensional} geometry $M\times I$ since we are interested in \emph{paths} in $\A$ rather than in single fixed potentials. The physical interpretation of the interval $I$ is that it represents the time interval between a given time and the infinite past. The natural Dirac operator on this extended manifold is given as
\[
  D_{A(t)}^{even} = \partial_t + D_{A(t)} ,
\]
where the potential has the form $A(t) \defeq f(t)A + (1-f(t))A_0$ for some fixed smooth function $f:[0,1] \to \R$ such that $f(0) = 0$ and $f(1) = 1$ with the additional condition of being constant near these endpoints. The exact definition of the function $f$ is not important as it serves only as a crutch for the set-up of the curvature computation which does not ultimately depend on $f$. In order to have a well-posed boundary value problem we require that this even Dirac operator satisfies the conditions similar to the Atiyah-Patodi-Singer boundary conditions for the index theorem~\cite{APS1975}:
\begin{enumerate}
	\item at $t=0$, the spinor fields belong to the \emph{negative} eigenspace $V_{(A,\lambda)}^-$, and
	\item at $t=1$, the spinor fields belong to the \emph{positive} eigenspace $V_{(A,\lambda)}^+$.
\end{enumerate}
Due to the boundary conditions, the determinant bundle does not cover all of $\A$, since we must exclude connections with the eigenvalue $\lambda$ -- hence the parameter space is not affine anymore.

Consider then a family of gauge-transformed connections $A^g(p,z)$, where $p \in M$ and $z\in S^2 \subset \A$. The reason for the latter parametrisation is that the Chern classes can be represented as differential forms and determined completely by integrating over a closed surface in the parameter space~\cite{AS1984}. Thus we can consider the product manifold $M \times S^2$, and evaluate the curvature in the gauge directions to get the anomalous cocycles for the current algebra. The computation is described by the families index theorem, see Section~\ref{sec:index}.

\begin{remark}
	The structure of the determinant bundle and its relation to anomalies can also be studied via infinite-dimensional Grassmannian embeddings~\cite{MICKELSSON1990}. In a nutshell, one can think of the Fock spaces $\F'_\lambda(A)$ with the parameter space embedded in a Grassmannian manifold. Naively, the Fock bundle over $\A$ can be defined by pulling back through the map $A \to \F'_\lambda(A)$. The possibly nontrivial determinant bundle arises similarly to the case above when considering the action of the gauge transformation group on the bundle.
\end{remark}

\idxloppu{determinant bundle}
%

\section{Anomalous bundle gerbes}\label{sec:gerbes_anomaly} %
\idxalku{bundle gerbe!anomalous}

The bundle gerbes introduced in Chapter~\ref{chap:higher} give us a natural geometric view on the lifting problem and the anomalous cohomology terms. Recall from Section~\ref{sec:gerbes} that a bundle gerbe $(P,Y)$ over a manifold $M$ is characterised by the following conditions:
\begin{enumerate}
  \item $\pi: Y \to M$ is a surjective submersion.
  \item $p: P \to Y^{[2]}$ is a $U(1)$-bundle.
  \item The multiplication $\pi^*_3(P) \otimes \pi^*_1(P) \to \pi^*_2(P)$ defines an associative isomorphism of $U(1)$-bundles over $Y^{[3]}$.
\end{enumerate}
We wish to apply this to the Fock bundle over the space of connections $\A$. Let $\A_0$ be the space of pairs $\{(A,\lambda) :~ \lambda \notin \spec(D_A)\} \subset \A \times \R$. We can then use the construction of the determinant bundle to define a bundle gerbe $(\F,\A_0)$ over the connections $\A$:
\[
	\gerbe{\F}{\A_0}{\A}
\]
The three conditions are fulfilled as follows.

\begin{enumerate}

	\item
	The projection $\pi: \A_0 \to \A$ is a surjective submersion since locally $\A_0 = \U \times \Os$ for open subsets $\U \subset \A$ and $\Os \subset \R$. Note that
	$\pi: \A_0 \to \A$ is not a fibration, however, since it is not locally trivial near the points $A \in \A$ for which $D_A$ has degenerate eigenvalues.

	\item
	We can take the fibre product $\A_0^{[2]}$ as the set $\{(A,\lambda,\eta)\}$ for $\lambda, \eta$ not in the spectrum of $A$. This defines a line bundle over $\A_0^{[2]}$ as
	\[
  	\F \defeq \F_{(A,\lambda,\eta)} = \begin{cases}
																				\Det V_{(A,\lambda,\eta)} ,
																					& \lambda \leq \eta \\
                                      	\Det V^*_{(A,\lambda,\eta)} ,
																					& \lambda \geq \eta
                          						\end{cases} ,
	\]
	where as before $V_{(A,\lambda,\eta)}$ is the sum of all eigenspaces for eigenvalues between $\lambda$ and $\eta$. Now $\F$ is a $U(1)$-bundle over $\A_0^{[2]}$, due to the local line bundles being $U(1)$-bundles over $\A$.

	\item
	For any $\mu > \lambda > \eta$ we recall the cocycle condition,
	\[
  	V_{(A,\mu,\lambda)} \oplus V_{(A,\lambda,\eta)} = V_{(A,\mu,\eta)}
	\]
	so that
	\[
	  \Det V_{(A,\mu,\lambda)} \otimes \Det V_{(A,\lambda,\eta)} = \Det V_{(A,\mu,\eta)} .
	\]
	The multiplication is then given by
	\[
	  \F_{(A,\mu,\lambda)} \otimes \F_{(A,\lambda,\eta)} = \F_{(A,\mu,\eta)} ,
	\]
	and this is associative by construction.
\end{enumerate}

As noted before, the Dixmier-Douady class of the bundle gerbe $(\F,\A_0)$ is an element in the third integral cohomology group of the base manifold, and this class measures the triviality of the bundle gerbe. Now again if we have the affine space $\A$ as the base the bundle gerbe indeed is trivial. We are however interested in the moduli space $\A/\G$, and consequently in the (possibly) anomalous bundle gerbe $(\F/\G,\A_0/\G)$ on $\A/\G$, where the group action of $\G$ extends naturally to $\F$.

The relationship of the bundle gerbe triviality to the curvature of the determinant bundle was explained in \cite{CMM1997}. For the projection $\pi: \A_0 \to \A$ one has a canonical section $A \mapsto (A,\lambda)$ over each subset $\U_{\lambda}$. With respect to this section the bundle gerbe is locally equivalent to the line bundle $\Det_{\lambda,\eta}$ over $\U_{\lambda,\eta}$, and the question of triviality amounts to defining a line bundle $\Det_{\lambda} \to \U_{\lambda}$ such that
\[
  \Det_{\lambda,\eta} = \Det_{\lambda}^* \otimes \Det_{\eta} .
\]
Now if $F_{\lambda}$ is the curvature $2$-form of $\Det_{\lambda}$, we can define the Chern class representative on $\U_{\lambda,\eta}$ for the line bundle $\Det_{\lambda,\eta}$ as
\[
  F_{\lambda,\eta} = F_{\lambda} - F_{\eta} ,
\]
making use of the projection $\pi_{\G}:\A \to \A/\G$ to the subset $\V_{\lambda,\eta} = \pi_{\G}(\U_{\lambda,\eta}) \subset \A/\G$. This means that these $2$-forms on $\U_{\lambda,\eta}$ are cohomologous to the $2$-forms descending to closed forms on $\V_{\lambda,\eta}$. The anomaly terms can be computed using these classes and we can establish the following~\cite[Thm.~4.1]{CMM2000}:
\begin{theorem}\label{theorem:dd_curvature}
	The Dixmier-Douady class of the bundle gerbe $(\F/\G,\A_0/\G)$ is represented by the family of closed curvature $2$-forms $F_{\lambda,\eta}$ on $\V_{\lambda,\eta} \subset \A/\G$. Furthermore, the anomalous Lie algebra $2$-cocycle is cohomologous to the negative of the cocycle given by the curvature of the corresponding determinant bundle along gauge orbits.
\end{theorem}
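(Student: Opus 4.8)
The plan is to prove the two assertions in turn and then link them through the cohomology transgression. For the first assertion I would read off the local data of the anomalous bundle gerbe $(\F/\G,\A_0/\G)$: over each double overlap $\V_{\lambda,\eta}\subset\A/\G$ the gerbe restricts to the line bundle $\Det_{\lambda,\eta}$, whose curvature is $F_{\lambda,\eta}=F_\lambda-F_\eta$. The multiplicative cocycle condition $\Det_{\lambda,\eta}\otimes\Det_{\eta,\mu}=\Det_{\lambda,\mu}$—which descends from $\A$ to $\A/\G$ precisely because the spaces $V_{(A,\lambda,\eta)}$ are assembled from gauge-covariant spectral data—translates into the additive relation $F_{\lambda,\eta}+F_{\eta,\mu}=F_{\lambda,\mu}$ on triple overlaps, and each $F_{\lambda,\eta}$ is closed. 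Thus $\{F_{\lambda,\eta}\}$ is a total-degree-$3$ cocycle in the \v{C}ech--de Rham bicomplex for the cover $\{\V_\lambda\}$, sitting in \v{C}ech degree $1$ and de Rham degree $2$. Matching it with the $U(1)$-valued triple-overlap functions of the gerbe through the standard zig-zag (curvature $\to$ connection $1$-form $\to$ logarithm of the transition function) identifies the represented class in $\coh^3(\A/\G,\Z)$ with the Dixmier-Douady class; this is exactly what Theorem~\ref{theorem:family_kclass} and the construction of \cite{CMM1997} lead one to expect.

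For the second assertion I would invoke the transgression. Let $\omega$ be the closed $3$-form on $\A/\G$ representing the Dixmier-Douady class just obtained and $\pi:\A\to\A/\G$ the quotient; topological triviality of $\A$ yields a $2$-form $\mu$ with $\ext\mu=\pi^*(\omega)$, and the anomalous Lie algebra $2$-cocycle is $c(A;x,y)=\mu_A(x,y)$ for vertical tangent vectors $x,y\in\Lie(\G)$. The task is then to identify $\mu$ along the gauge orbits with the determinant-bundle curvature. Since $\A$ is contractible, Proposition~\ref{prop:det_bundle} supplies trivialising bundles $\Det_\lambda$ with curvatures $F_\lambda$ obeying $F_{\lambda,\eta}=F_\lambda-F_\eta$. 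I would realise the gerbe $3$-curvature through a curving whose pullback to $\A$ differs from the locally closed forms $-F_\lambda$ only by a term that vanishes on $\A/\G$, so that $\ext\mu=\pi^*(\omega)$ is solved by a $\mu$ agreeing, in the vertical directions, with $-F_\lambda$ up to an exact cocycle. Restricting to the gauge-orbit directions then gives $c(A;x,y)\sim -F_\lambda|_{\mathrm{vert}}(x,y)$, which is the claimed statement that the anomalous $2$-cocycle is cohomologous to the negative of the curvature cocycle of the determinant bundle along gauge orbits.

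The main obstacle will be the third step: the trivialisations $\Det_\lambda$ exist only because $\A$ is topologically trivial and are \emph{not} gauge-equivariant, so the individual curvatures $F_\lambda$ fail to descend to $\A/\G$ even though their differences $F_{\lambda,\eta}$ do. Controlling this non-equivariance—ensuring that the curving used to compute $\omega$ descends while remaining expressible through the $F_\lambda$—is where the genuine work lies, and it is also what pins down the sign in ``the negative of the cocycle'', which originates jointly in the ordering convention $F_{\lambda,\eta}=F_\lambda-F_\eta$ and in the sign of $\ext\mu=\pi^*(\omega)$. I would organise this bookkeeping as in \cite{CMM2000}, checking along the way that the spectral $\eta$-boundary terms flagged in Remark~\ref{rem:aps_index} drop out by gauge invariance, so that only the curvature contribution survives.
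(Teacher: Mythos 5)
The paper does not actually prove this theorem — it is quoted from \cite[Thm.~4.1]{CMM2000}, and the text immediately preceding it supplies exactly the local data your outline uses (the canonical sections $A\mapsto(A,\lambda)$, the local line bundles $\Det_{\lambda,\eta}$ with $F_{\lambda,\eta}=F_{\lambda}-F_{\eta}$ descending to $\V_{\lambda,\eta}$, and the transgression of Section~\ref{sec:transgression}) — so your proposal is essentially the same approach. You correctly isolate the one genuinely hard step, namely that the trivialisations $\Det_{\lambda}$ and hence the individual curvatures $F_{\lambda}$ are not gauge-equivariant (which is also what fixes the minus sign), but like the paper you defer precisely that bookkeeping to \cite{CMM1997,CMM2000}, so what you have is a faithful reconstruction of the cited argument rather than a self-contained proof.
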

On the other hand, the existence of the anomaly and the triviality of the bundle gerbe is tied to the prolongation problem for the Abelian extension
\[
  \exact{\map{\A,S^1}}{\gext{\G}}{\G} .
\]
	\idx{bundle gerbe!lifting}
Recall from Section~\ref{sec:gerbes} that to Lie group extensions we can associate a lifting bundle gerbe -- in this case the very same $(\F/\G,\A_0/\G)$. The existence of the anomaly then boils down to the following theorem~\cite[Prop.~2.3]{HMSV2013}:
\begin{theorem}
	The principal $\G$-bundle $\A \to \A/\G$ lifts to a $\gext{\G}$-bundle if and only if the bundle gerbe $(\F/\G,\A_0/\G)$ is trivial.
\end{theorem}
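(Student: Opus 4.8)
The plan is to recognise this statement as an instance of the lifting bundle gerbe criterion recalled in Section~\ref{sec:gerbes}, now applied to the principal $\G$-bundle $\A \to \A/\G$ together with the Abelian extension $\exact{\map{\A,S^1}}{\gext{\G}}{\G}$. The earlier example treated a central $U(1)$-extension; the first task is to note that the very same construction produces a bundle gerbe for an arbitrary Abelian topological group fibre, which is exactly the generalisation established in \cite{HMSV2013}. Concretely, the free (based) action of $\G$ furnishes a division map $\A^{[2]} \to \G$ sending a pair $(A_1,A_2)$ lying in a common gauge orbit to the unique $g\in\G$ with $A_1^g = A_2$; along this map the extension pulls back to a bundle over $\A^{[2]}$, whose associated object is the lifting bundle gerbe $\mathcal{L}$ over $\A/\G$. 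By the lifting bundle gerbe theorem the $\G$-bundle admits a $\gext{\G}$-bundle lift if and only if $\mathcal{L}$ is trivial.

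The decisive step is then to identify $\mathcal{L}$ with the anomalous Fock bundle gerbe $(\F/\G,\A_0/\G)$, or at least to show that the two carry the same Dixmier-Douady class. Here I would work over a good cover $\{\V_i\}$ of $\A/\G$ with local sections $s_i$ of $\A \to \A/\G$, producing transition functions $g_{ij}\colon \V_i\cap\V_j \to \G$. Each $g_{ij}$ lifts locally to some $\hat g_{ij}$ valued in $\gext{\G}$ because the sequence is locally split; the obstruction to choosing these lifts coherently is the $\map{\A,S^1}$-valued cochain $c_{ijk} \defeq \hat g_{ij}\hat g_{jk}\hat g_{ki}$, a \v{C}ech $2$-cocycle, and the lift exists globally precisely when $c_{ijk}$ is a coboundary. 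On the other hand, the definition of $\gext{\G}$ as the group of lifts of the $\G$-action to the Fock bundle $\F \to \A$ forces this same cocycle to be the spectral transition data of $(\F/\G,\A_0/\G)$ — the content already exploited in Theorem~\ref{theorem:dd_curvature}, where the Dixmier-Douady class of $(\F/\G,\A_0/\G)$ is represented by the determinant-bundle curvatures $F_{\lambda,\eta}$. Thus the class $[\mathcal{L}]$ and the Dixmier-Douady class of $(\F/\G,\A_0/\G)$ coincide in $\coh^3(\A/\G,\Z)$.

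To close the argument I would invoke the equivalence criterion from Section~\ref{sec:gerbes}: a bundle gerbe is trivial if and only if its Dixmier-Douady class vanishes, and two gerbes with equal classes are stably isomorphic. Combined with the first step this chains into the desired equivalence, namely that $\A \to \A/\G$ lifts to a $\gext{\G}$-bundle $\iff$ $\mathcal{L}$ is trivial $\iff$ its class vanishes $\iff$ the class of $(\F/\G,\A_0/\G)$ vanishes $\iff$ $(\F/\G,\A_0/\G)$ is trivial. Both directions of the original \emph{iff} are then subsumed by the standard fact that a class is a coboundary if and only if the corresponding local data glue.

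The main obstacle I anticipate is precisely the identification in the second paragraph. Unlike the textbook central extension, the fibre $\map{\A,S^1}$ is infinite-dimensional and depends on $\A$ itself, so one must verify that the abstract lifting cocycle $c_{ijk}$ and the spectrally-defined transition functions of the Fock gerbe genuinely represent the same class rather than merely similar-looking data; this forces one to check that the local splittings of the extension are compatible with the spectral cut-offs $\lambda,\eta$ that define $\A_0$, and that the smoothness and connectedness conditions from Appendix~\ref{app:cohomology} required for the group-cohomology classification of the extension hold on the identity component of $\G$. Settling this compatibility is where the real work lies; the remainder is bookkeeping within the gerbe formalism.
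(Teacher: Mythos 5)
Your proposal follows essentially the same route as the paper: the paper also treats the statement as an instance of the lifting-bundle-gerbe criterion applied to the Abelian extension \exact{\map{\A,S^1}}{\gext{\G}}{\G}, identifies $(\F/\G,\A_0/\G)$ as ``the very same'' lifting bundle gerbe of the principal $\G$-bundle $\A \to \A/\G$, and gives no independent proof, instead citing \cite[Prop.~2.3]{HMSV2013} --- precisely the central-to-Abelian generalisation you invoke. The identification step you single out as the real work is exactly what the paper disposes of by assertion and by deferring to that reference, so your sketch is a faithful reconstruction of the cited argument (your appeal to Theorem~\ref{theorem:dd_curvature} for that identification is tangential, since that theorem concerns curvature representatives of the Dixmier-Douady class rather than the transition-cocycle matching, but this does not affect the overall structure).
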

The realisation of the Dixmier-Douady class as a differential form on the base manifold is again related to the families index theorem, see the next section for examples.

\begin{remark}
	As with the determinant bundle, the bundle gerbe and its Dixmier-Douady class extends to the case of infinite-dimensional Grassmannian manifolds~\cite{CM2012}. The point is that to obtain the characteristic class on a space-time manifold as a pull-back of the class on an operator space, one needs to consider determinant bundles over Grassmannians instead of the space of self-adjoint operators. One can then take the related gerbe structure as a \emph{universal gerbe} generalising many of the applications~\cite{CW2006}.
\end{remark}

\idxloppu{bundle gerbe!anomalous}
%

\section{Index theorem}\label{sec:index} %
\idxalku{index theorem}

Having explored how the Hamiltonian anomaly is geometrically reflected both in the determinant bundle and in the bundle gerbe, we now come to the means of computing the anomaly: the index theorem. As explained in Chapter~\ref{chap:ktheory}, families of Dirac operators parametrised by connections in $\A$ defines an index bundle over the parameter space on the basis of the Atiyah-Singer index theorem; the families index is a formal difference
\[
  \ind D_A = \{\ker D_A\} - \{ \cker D_A \}
\]
as an element in the K-theory $K(\A)$. This difference is equivariant under the gauge transformations in $\G$, and thus the index descends to $K(\A/\G)$. Note that this does \emph{not} define a vector bundle over $\A$ since the nullspaces can vary -- and this harks back to the vacuum problem in quantisation.

In \cite{AS1984} the difference element was studied via the families index theorem and it was shown how one can compute the relevant characteristic classes explicitly via differential forms on an \emph{even-dimensional} base manifold. While the formal difference does not define a vector bundle, it is still possible to define characteristic classes in the sense of K-theory. In particular, the Chern character can be given as
\[
	\chern(\ind D_A) = \int_M \ahatgenus(M) \wedge \chern(\mathcal{E}) ,
\]
where $\mathcal{E}$ is a vector bundle over the product space $M \times \A/\G$.

This can be extended to the \emph{odd-dimensional} case as follows~\cite{CMM1997}: one can consider the odd-dimensional space-time manifold as a \emph{boundary} of an even-dimensional space, so that the index formula may give nontrivial results. Then one can show that the curvature of the determinant bundle is given by the index of the Dirac operator when restricted to the boundary. In the other direction, similar means can be used to express the Dixmier-Douady class of the anomalous bundle gerbe. As in Theorem~\ref{theorem:dd_curvature} we can conclude that these formulations are in fact equivalent in the sense of cohomology.

\idx{$\eta$-invariant}
For manifolds with boundary, in the index formula there are also nonlocal $\eta$-terms depending on the boundary data (see Remark~\ref{rem:aps_index}):
\[
	\ind D_A = \int_M \ahatgenus(M) \wedge  \chern(E) - \frac{1}{2}\eta(D_A) .
\]
To simplify this, we can first of all consider spherical base manifolds $M = S^n$ with a trivial $\ahatgenus$-genus. Secondly, since the $\eta$-terms are defined by the spectral data of the Dirac operator, they are invariant with respect to the gauge transformations. Since we are concerned with the moduli space $\A/\G$ rather than the space of connections $\A$, these nonlocal terms will not contribute. We are left with the evaluation of the Chern character, and this can be tackled with Chern-Simons forms under suitable simplifications.

\subsection{Chern-Simons forms and the boundary geometry} %
\idxalku{Chern-Simons form}

The index-theoretic computation of the curvature of the determinant bundle requires that the base manifold $M$ is the boundary of some larger manifold.
While this aspect of the geometry is a highly nontrivial question in general, many physically interesting problems can be formulated in this manner.

Given the boundary assumption, the anomaly terms in simple cases can be obtained by integration of the appropriate Chern-Simons form. We will not attempt to explore Chern-Simons forms and the related gauge theory to a great depth; the justification for this particular application is essentially as in the following example. See \cite{FREED1992} for a more detailed exposition, \cite{FSS2015} for a modern perspective, and \cite{CS1974} for the original account; finally, for the relationship of Chern-Simons forms to a variety of anomalies, see \cite[Ch.~7]{BERTLMANN2000}.

\begin{example}
	Let $M=S^1$ and $G = U(p)$, and consider a connection $1$-form $A$ on a vector bundle $E$ over $M$ with values in the matrix Lie algebra $\lie{u}_p$. We can define a real-valued $1$-form
	\[
		\frac{1}{2\pi i} \tr A
	\]
	which by integration over $M$ maps to real numbers. Given another connection $1$-form $A'$ (representing a different trivialisation of the bundle $E\to M$), the difference 
	\[
		\frac{1}{2\pi i} \int_M \tr A - \tr A'
	\]
	is an integer. This can be justified as follows. The $1$-form in question induces a $2$-dimensional functional in terms of the curvature of the connection:
	\[
	  \frac{1}{2\pi i}\int_{S^1} \tr A = \frac{1}{2\pi i}\int_{\bdr B^2} \tr A = \frac{1}{2\pi i}\int_{B^2} \tr F
	\]
	where $F \defeq \ext A$, and the last identity follows from the Stokes' theorem and the fact that the trace operator commutes with the de Rham differential. Choosing two different connections on $B^2$ such that they are connected by a gauge transformation on the boundary, we can glue together the two trivial bundles of these connections. On the base we then have $S^2$ as a disjoint union of the southern and northern hemispheres given by the two discs $B^2$ so that the boundaries become the equator:
	\[
		\frac{1}{2\pi i} \int_{S^1} \tr A - \tr A' = \frac{1}{2\pi i}\int_{B^2_N} \tr \ext A + \frac{1}{2\pi i}\int_{B^2_S} \tr \ext A' = \frac{1}{2\pi i}\int_{S^2} \tr F .
	\]
	But this is the so-called \emph{winding number}, the first Chern class of the bundle, which is a topological invariant with integer values. This particular formula has physical relevance in terms of the \emph{Dirac monopole}, see for example \cite[Sec.~6.4]{BERTLMANN2000}. For the case present the gist of the examples is this: given a principal bundle with a connection over a boundary manifold $M = \bdr N$, we can formulate Chern classes of the bundle in terms of a functional of the curvature $2$-form on the larger manifold $N$.
\end{example}

More generally, consider a product manifold $M \times X$ of dimension $\dim(M) + \dim(X) = n+k$. A closed integral $p$-form $\Omega$ on $M \times X$ defines a closed integral $(p-n)$-form $\Omega_X$ on $X$ through
\[
  \Omega_X = \int_M \Omega .
\]
Now for a Lie algebra valued connection $1$-form $A$ on $\mathcal{E} \to M\times X$ with the corresponding curvature $2$-form $F$, the even Chern classes $c_{2n}$ are invariant polynomials in $F$. To obtain these classes we wish to integrate the form $\Omega_X$ over a suitable closed surface in $X$, the dimension depending on the degree of the cohomology we are after. In the case of the Hamiltonian anomaly this is the $3$-form representation of the Dixmier-Douady class defined on the moduli space $\A/\G$.

\idxloppu{Chern-Simons form}

\subsection{The curvature of the determinant bundle} %
\idx{determinant bundle!curvature of}

We first recall the geometric fundamentals from \cite{AS1984}. Let $P \to M$ be a principal $G$-bundle, $\A$ the space of connections on the bundle and $\G$ the group of gauge transformations.
For the computation of the characteristic classes of the index in terms of forms, we define another bundle
\[
	\Lb \defeq \frac{P \times \A}{\G}
\]
which is a principal $\G$-bundle on $P \times \A$. Furthermore, since the group action of $G$ on $P \times \A$ commutes with the group action of $\G$, the group $G$ acts on $\Lb$. Since $P/G$ is homeomorphic to the base $M$, we can consider a principal $G$-bundle $\Lb$ on $M \times \A/\G$.

We have a Riemannian metric on $P \times \A$, invariant under the action of $G \times \G$. Given a point $(p, A) \in P \times \A$, the metric on the tangent spaces builds from the following data:
\begin{enumerate}
	\item Metric on $T(P, p)$. The metric on the tangent space $T(P, p)$ is given by the metrics of $G$ and $M$. The connection A defines horizontal and vertical projections, and the inner product splits such that the Riemannian metric on $M$ yields the inner product on the horizontal component, and the fixed invariant inner product on the Lie algebra $\lie{g}$ of $G$ yields the inner product on the vertical component (since $G$ is compact).

	\item Metric on $T(\A, A)$. The metric on $T(\A, A)$ is the usual metric on $\sct{\forms^1(M)\otimes \lie{g}}$: given tangent vectors $A_1, A_2$,
	\[
		\int_M \ip{A_1,A_2} \vol_M
	\]
	is the inner product.
\end{enumerate}

The metric on $P \times \A$ descends to a metric on $\Lb$, invariant under $G$. Then the orthogonal complement to orbits of $G$ (with respect to the induced Riemannian metric) yields a horizontal subspace and thus defines a connection. We denote by $\omega$ the related connection form: it is a vertical $1$-form on $\Lb$ with values in the Lie-algebra $\lie{g}$.

\idx{universal bundle}
The bundle $\Lb$ is called the \term{universal bundle}, and we have the following commutative diagram of fibrations:
\[
	\begin{tikzcd}
		P \times \A \arrow{r} \arrow{d} & \Lb \arrow{d} \\
		M \times \A \arrow{r} & M \times \A/\G 
	\end{tikzcd}
\]
where $P \times \A \to M \times \A$ is a principal $G$-bundle with a canonical connection. Moreover, the tangent mapping $T(P \times \A) \to T(\Lb)$ gives the splitting of $T(\Lb)$, and hence a principal connection on $\Lb$.

Earlier we defined Dirac operators on the space-time manifold $M\times I$ as
\[
  D_{A(t)}^{even} = \partial_t + D_{A(t)} ,
\]
where the extended potential is $A(t) = f(t)A + (1-f(t))A_0$, with respect to a fixed connection $A_0$ defining the spectral decomposition. To obtain the Schwinger terms one needs to know the curvature of the determinant bundle along gauge directions on the boundary, so we consider $D_{A(t=1)}^{odd}$. Moreover, we can focus on some closed surface on the moduli space -- take $S^2 \subset \A/\G$ -- since the integral of the first Chern class of the bundle equals the families index. In order to calculate the curvature, we can choose any connection $A$ on the product space $S^2 \times M \times I$, as long as its projection to $M\times I$ is equal to the extended potential given above and it fulfills the necessary boundary conditions.~\cite{CMM1997}

As noted above, the Atiyah-Patodi-Singer index formula has two parts: the local part given by the Chern character and the boundary part given by the $\eta$-invariant. However, since the $\eta$-term is defined by the spectral data of the Dirac operator, it is gauge invariant and thus vanishes when considering the operators modulo gauge transformations. What remains are the local Chern class terms.

Now the connections $A$ are parametrised by $z \in S^2 \subset \A/\G$, and the same holds for the gauge transformations $\G$. We then formulate the index in terms of the Dirac operator $D_A$ subject to these transformations. %
	\idx{Chern-Simons form}
	\idx{Chern class}
In simple situations the Chern classes $c_k$ are given by the integration of the Chern-Simons form of appropriate dimensions: assuming that $M$ is a sphere $S^{2k+1}$, we have the relation
\[
	\ext(CS_{2k+3}) = c_{k+2} ,
\]
where $CS_{2k+3}$ is a Chern-Simons form. By Stokes' theorem we can then write the boundary integral
\[
	\int_{S^2 \times M} CS_{2k+3}(A(z,p,1)) ,
\]
for $z \in S^2$ and $p \in M$. The form is identically zero at $t=0$.

Finally, the curvature $F_A$ of the determinant bundle at the point $A$ in the gauge directions $(x,y)$ is given by integrating over $M$. For instance, if $M=S^1$, we have the third Chern-Simons form
\[
	\frac{1}{8\pi^2} \tr (A \ext A + \frac{2}{3} A^3)
\]
and the curvature form is given by
\[
	F_{A}(x,y) = \frac{1}{4\pi} \int_{S^1} \tr A(p) \comm{x}{y} .
\]
Similarly, for $M=S^3$, one integrates the fifth Chern-Simons form
\[
	\frac{i}{24\pi^3} \tr ( A (\ext A)^2 + \frac{3}{2}A^3 \ext A + \frac{3}{5} A^5 )
\]
and the result is cohomologous to the familiar Mickelsson-Faddeev cocycle.

\subsection{The Dixmier-Douady class} %
\idxalku{Dixmier-Douady class}

Let $M$ now be our odd-dimensional space manifold and take $X = \A/\G$. Now for the Dixmier-Douady class we need to integrate the form $\Omega_X$ over a closed $3$-surface in the moduli space $X$; hence we can pick the $3$-sphere $S^3 \subset X$. By pulling back to $\A$ (both the surface and the connection), the integral becomes
\[
  \int_{S^3} \Omega_X = \int_{M \times B^3} c_{2k}(F) .
\]
Again, this is equal to the odd Chern-Simons form on the boundary $S^2 = \bdr{B^3}$:
\[
  \int_{M \times S^2} CS_{2k-1}(A) .
\]

Evaluations over gauge directions yields a familiar example for $M=S^1$:~\cite{CMM1997}
\[
	\int_{S^3} \Omega_X = \frac{1}{24\pi^2} \int_{S^1 \times S^2} \tr (\inv{g}\ext g)^3 .
\]
As in the above example, the map $g:M\times S^2 \to G$ is a gauge transformation linking the gauge connections on the boundary $S^2$. In higher dimensions for $M = S^{2k+1}$ we get
\[
	\int_{S^3} \Omega_X = \frac{-1}{(k+2)!(2k+3)}\left(\frac{i}{2\pi}\right)^{k+2} \int_{M\times S^2} \tr (\inv{g}\ext g)^{2k+3} .
\]

In general (in conjunction to Theorem~\ref{theorem:dd_curvature}) one can show that this indeed is the Dixmier-Douady class of the bundle gerbe:
\begin{theorem}[{\cite[Thm.~4.2]{CMM2000}}]
	The class $\Omega_X \defeq \Omega_{\A_0/\G}$ is a representative of the Dixmier-Douady class of the bundle gerbe $(\F/\G,\A_0/\G)$.
\end{theorem}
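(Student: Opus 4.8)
The plan is to leverage \autoref{theorem:dd_curvature}, which already identifies the Dixmier-Douady class of $(\F/\G,\A_0/\G)$ with the cohomology class carried by the family of local determinant-bundle curvatures $F_{\lambda,\eta}$ on the patches $\V_{\lambda,\eta}\subset\A/\G$. It therefore suffices to show that the index-theoretic $3$-form $\Omega_X$ is a de Rham representative of that same class; in other words, that the transgression of the determinant-bundle curvature data coincides, up to exact forms, with $\Omega_X$.

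First I would compute the curvature of the determinant bundle via the families index theorem. The relevant family of Dirac operators is $D_{A(t)}^{even}=\partial_t + D_{A(t)}$ over the even-dimensional manifold $M\times I$, and the first Chern form of $\Det_\lambda$ is extracted as the degree-$2$ component of the index density $\int_M \ahatgenus(M)\wedge\chern(\mathcal{E})$. The Atiyah-Patodi-Singer boundary $\eta$-term enters in principle, but since it is determined purely by the spectral data of $D_A$ it is gauge invariant, and hence drops out once we descend to the moduli space $\A/\G$; this is exactly the mechanism used in the preceding subsection. For $M=S^{2k+1}$ the $\ahatgenus$-genus is trivial, leaving only the Chern-character contribution, so that $F_\lambda$, and thus $F_{\lambda,\eta}=F_\lambda-F_\eta$, is controlled by the local Chern form.

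Second I would carry out the transgression of Section~\ref{sec:transgression}. On the affine (hence contractible) space $\A$ the pullback $\pull{\Omega_X}$ is exact, $\pull{\Omega_X}=\ext\mu$ for some $2$-form $\mu$, whose restriction to vertical tangent vectors at a connection $A$ recovers the anomalous Lie algebra $2$-cocycle along gauge orbits — precisely the cocycle that \autoref{theorem:dd_curvature} pairs (up to sign) with the determinant-bundle curvature. The key computation is that integrating the even-degree Chern form $c_{2k}(F)$ over $M\times B^3$ and reducing by Stokes' theorem to the odd Chern-Simons form $CS_{2k-1}(A)$ over the boundary $M\times S^2$ produces exactly this representative. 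Evaluating over a closed $3$-surface $S^3\subset\A/\G$ then gives $\int_{S^3}\Omega_X=\int_{M\times B^3}c_{2k}(F)$, which by construction equals the families index pairing, and hence the Dixmier-Douady period.

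The main obstacle will be the bookkeeping in this transgression step: one must match the normalisation constants appearing in the Chern-Simons forms (for instance $\tfrac{1}{24\pi^2}$ in the $S^1$ case and $\tfrac{i}{24\pi^3}$ in the $S^3$ case) against the integral periods of the Dixmier-Douady class, keeping careful track of the signs and orientations induced by the boundary $S^2=\bdr B^3$, and to verify that the resulting periods over $3$-cycles in $\A/\G$ are genuinely integers, so that $\Omega_X$ is an \emph{integral} representative. The analytic input — that the $\eta$-invariant is gauge invariant and that the families index theorem with boundary applies to $D_{A(t)}^{even}$ — is what justifies discarding the nonlocal boundary terms; once this is granted, the identification reduces to the purely cohomological matching described above, for which I would refer to \cite{CMM2000}.
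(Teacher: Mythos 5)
Your proposal is correct and takes essentially the same route as the paper: the paper's own proof is simply the citation to \cite{CMM1997}, and your sketch --- the families index theorem with Atiyah-Patodi-Singer boundary conditions on $M\times I$, gauge invariance of the $\eta$-term so that it drops on the moduli space, triviality of the $\ahatgenus$-genus for spherical $M$, Chern-Simons reduction to the boundary $M\times S^2$, and the matching with the determinant-bundle curvature data of Theorem~\ref{theorem:dd_curvature} via transgression --- is precisely the chain of reasoning laid out in Section~\ref{sec:index} and carried out in detail in \cite{CMM1997,CMM2000}. The only point to keep in view is that transgression to Lie-algebra $2$-cocycles (and the passage from K-theory to de Rham forms) loses torsion, so the identification of $\Omega_X$ with the Dixmier-Douady class is as an integral-period de Rham representative, which is exactly what the statement asserts.
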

\begin{proof}
	See~\cite{CMM1997}.
\end{proof}

\begin{remark}
Note that in these index computations possible torsion information may not be accounted; no general methods are known since the map from K-theory to de Rham cohomology tends to lose this information. See \cite{CMM2000} for examples on how to deal with the torsion.
\end{remark}

\idxloppu{index theorem}
\idxloppu{Dixmier-Douady class}
%

\section{Modified gauge transformations and current commutators}
\label{sec:general_comm} %
\idxalku{connection!moduli space of}
\idxalku{connection!gauge}

We now come back to the question of current commutators when we have not only the internal symmetry group $G$ and the induced gauge transformations $\G$, but the modified transformation group $\Gbig$ which accounts for all components of the full space-time geometry. By this we mean the group of all (based) vertical automorphisms of the vector bundle $S \otimes E \to M$. To this we could associate the principal $\spin \otimes G$ -bundle;
here we are abusing the notation slightly in that the tensor product $\spin \otimes G$ refers to the tensor product of linear representations on the typical fibre, rather than to the tensor product of groups themselves in the sense of \cite{BL1987}.
\idx{vertical automorphism}

In general, the connections $\omega$ on the bundle $S \otimes E \to M$ are defined as
\begin{equation}\label{eq:all_connections}
	\omega = \omega_S \otimes 1_E + 1_S \otimes \omega_E ,
\end{equation}
where $\omega_S$ is a \term{spin connection on $S$} and $\omega_E$ is a \term{Yang-Mills connection on $E$}. On the fermion fields $\psi \defeq \sigma \otimes e \in \sct{S\otimes E}$ the connection acts as
\[
	\omega(\psi) = \omega_S(\sigma) \otimes e + \sigma \otimes \omega_E(e) .
\]
As usual, we can write the connection as a $1$-form in $\forms^1(M)\otimes\lie{g}_S$, in which the extended Lie algebra $\lie{g}_S$ is the tensor product $\lie{spin}\otimes\lie{g}$ with $\lie{spin}$ generated by gamma matrices $\gamma_{\mu}$ of a dimension matching with that of $M$. In this modification we then use the general form of the gauge currents
\[
   j_{\mu}^a(x) = \psi^{\dagger}(x)( \alpha_{\mu} \otimes \tau^a ) \psi(x)
\]
with $\alpha_{\mu} = \gamma_0 \gamma_{\mu}$.

If we now apply vertical transformations given by $\Gbig$ on general connections of the form \eqref{eq:all_connections}, we will get tensor products of mixed components defining all the (orbits of) possible connections on the product bundle. Let us call this space $\tilde \A$. By definition, any connection $\omega' \in \tilde A$ is gauge equivalent to another connection $\omega$ of the form \eqref{eq:all_connections}, and there is a natural \term{restriction map}
\[
	\rho : \omega \mapsto \omega_E \in \A,
\]
which is surjective by the definition of $\tilde A$. In general, the map $\rho$ does not extend to a bijection between equivalence classes, since for any given connection $\omega_E$ in $\A$ the preimage $\inv{\rho}(\omega_E)$ in $\tilde \A$ may contain a number of connections not related by gauge transformations. This can be seen concretely by choosing another spin connection $\omega'_S \not\sim \omega_S$ in Equation~\eqref{eq:all_connections} while keeping $\omega_E$ fixed. Moreover, there is no homotopy
\[
	\tilde \A/\Gbig \simeq \A/\G ,
\]
unless $\tilde \A / \Gbig$ is topologically trivial.

The situation is not as clear if we have a \emph{fixed spin connection} $\Sigma$ on $S$, induced from an affine connection on $M$. %
Now for any connection $\omega_E$ on $E\to M$, we can write
\begin{equation}\label{eq:fixed_spin}
	\omega \defeq \Sigma \otimes 1_E + 1_S \otimes \omega_E .
\end{equation}
The action of $\Gbig$ on connections of this form for all $\omega_E \in \A$ then induces the \term{modified space of gauge connections} $\Abig$:
\[
	\Abig \defeq \left\{ (\Sigma \otimes 1_E + 1_S \otimes \omega_E)^g :~ \omega_E \in \A , \, g \in \Gbig \right\}.
\]
It is the union of all $\G_S$-orbit spaces thus obtained, and as such a subspace of the space of all connections on the product bundle. Again any connection $\omega' \in \Abig$ is by definition equivalent to a connection $\omega$ in the basic form \eqref{eq:fixed_spin} for some $\omega_E \in \A$, and there is a surjective restriction map $\rho : \omega \mapsto \omega_E$.

Let us denote the equivalence classes in $\A$ with respect to the action of $\G$ by $[\omega_E]^{\G}$, and similarly for the classes in $\Abig$ with respect to $\Gbig$ by $[\omega]^{\Gbig}$. Note that we can injectively map each $g\in \G$ and each $\omega_E \in \A$ as follows
\begin{align*}
	g &\mapsto 1_S\otimes g \in \Gbig \\
	\omega_E &\mapsto 1_S \otimes \omega_E \in \Abig .
\end{align*}
Now the question is if the restriction map $\rho$ extends to a map
\[
	\rho_* : \Abig/\Gbig \to \A/\G : [\omega]^{\Gbig} \mapsto [\omega_E]^{\G}
\]
between the two moduli spaces. 
In other words, if we have a transformation $h \in \Gbig$ such that
\[
	h: \Sigma \otimes 1_E + 1_S \otimes \omega_E \mapsto \Sigma \otimes 1_E + 1_S \otimes \omega_E' ,
\]
we would want to find such $g \in \G$ that the following diagram commutes:
\[
	\begin{tikzcd}[column sep=large,row sep=large]
		\Sigma \otimes 1_E + 1_S \otimes \omega_E \arrow[r,"h"] \arrow[d,"\rho"] & \Sigma \otimes 1_E + 1_S \otimes \omega_E'  \arrow[d,"\rho"]\\
    \omega_E \arrow[r,"g"] & \omega_E'
	\end{tikzcd}
\]
We see that the induced map $\rho_*$ is in fact a homeomorphism of the moduli spaces.

\begin{proposition}\label{prop:moduli_space}
	The moduli space of gauge connections is homeomorphic to that of the modified space,
	\[
		\Abig/\Gbig \simeq \A/\G .
	\]
\end{proposition}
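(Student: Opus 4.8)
The plan is to exhibit an explicit inverse to $\rho_*$ and to verify that both maps are continuous, so that $\rho_*$ is a homeomorphism. Define
\[
  \iota : \A/\G \to \Abig/\Gbig : [\omega_E]^{\G} \mapsto \left[ \Sigma \otimes 1_E + 1_S \otimes \omega_E \right]^{\Gbig} .
\]
First I would check that $\iota$ is well defined on $\G$-classes. If $\omega_E' = \omega_E^g$ for some $g \in \G$, then $1_S \otimes g \in \Gbig$, and a direct application of the gauge action formula gives
\[
  \left( \Sigma \otimes 1_E + 1_S \otimes \omega_E \right)^{1_S \otimes g}
    = \Sigma \otimes 1_E + 1_S \otimes \omega_E^g ,
\]
since $1_S \otimes g$ commutes with $\Sigma \otimes 1_E$ and contributes a Maurer-Cartan term only on the $E$-factor. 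Thus the two basic connections lie in the same $\Gbig$-orbit, and $\iota$ descends from the affine embedding $\omega_E \mapsto \Sigma \otimes 1_E + 1_S \otimes \omega_E$ through the quotient map; being a composition of continuous maps that is constant on $\G$-orbits, it is continuous by the universal property of the quotient topology.

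The substance of the argument is to show that $\iota$ and $\rho_*$ are mutually inverse, which reduces to the following key lemma: \emph{if $h \in \Gbig$ carries the basic connection of $\omega_E$ to the basic connection of $\omega_E'$, then $\omega_E$ and $\omega_E'$ are $\G$-equivalent.} Here I would pass to the principal-bundle picture. The bundle $S \otimes E$ is associated to the fibre product $Q \defeq P_S \times_M P_G$ of the spin frame bundle and the (trivial) $G$-bundle, with structure group the image of $\spin(n) \times G$ acting on $V_S \otimes V_E$; the basic connection $\Sigma \otimes 1_E + 1_S \otimes \omega_E$ is precisely the connection on $S \otimes E$ induced by the direct-sum connection $\Sigma \oplus \omega_E$ on $Q$, valued in $\lie{spin} \oplus \lie{g}$. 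Because $\lie{spin} \oplus \lie{g}$ is a direct sum of ideals, the adjoint action of $\Gbig$ preserves this splitting, and hence for any $h \in \Gbig$ the transformed connection splits as
\[
  \omega^h = \Sigma^{h_S} \oplus \omega_E^{h_G} ,
\]
where $h_S$ and $h_G$ denote the $\spin$- and $G$-components of $h$.

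Matching this against the basic form $\Sigma \oplus \omega_E'$ forces $\Sigma^{h_S} = \Sigma$ on the spin summand and $\omega_E' = \omega_E^{h_G}$ on the gauge summand. Since $h$ is based, $h_G(p) = e$, so $h_G \in \G$; this gives $\omega_E \sim \omega_E'$ under $\G$ and proves the lemma. Consequently $\rho_*$ is well defined on $\Gbig$-classes, and the lemma together with the computation of the first paragraph yields $\rho_* \circ \iota = \id$ and $\iota \circ \rho_* = \id$. Finally, for the topology I would use that both quotient maps $\A \to \A/\G$ and $\Abig \to \Abig/\Gbig$ are open: the $\Gbig$-saturation of the image of a $\G$-saturated open set is open, which shows $\iota$ is an open map, and a continuous open bijection is a homeomorphism, giving $\Abig/\Gbig \simeq \A/\G$.

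The main obstacle is the key lemma, specifically the claim that an element of $\Gbig$ preserving the basic form acts \emph{block-diagonally} with its $G$-component realised by a genuine element of $\G$. This is where the hypothesis of a \emph{fixed} spin connection $\Sigma$ is essential, and where one must be careful that $\Gbig$ is the gauge group of the $\spin \otimes G$-structure -- so that its Lie algebra is the direct sum $\lie{spin} \oplus \lie{g}$ rather than a more entangled algebra -- and not the full automorphism group of the vector bundle $S \otimes E$. A secondary technical point is the finite central $\Z_2$ relating $\spin(n) \times G$ to its image on $V_S \otimes V_E$: since connection transformations involve only the centre-insensitive adjoint action and the Maurer-Cartan form, and since the based condition fixes the residual ambiguity, this does not affect the decomposition, but it should be acknowledged.
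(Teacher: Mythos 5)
Your overall architecture -- an explicit inverse $\iota$, a key lemma, and the openness/continuity bookkeeping for the quotient topologies -- is sound, and on the point-set side it is actually more careful than the paper, which simply concludes from bijectivity of $\rho_*$ that the moduli spaces are homeomorphic. However, the proof of your key lemma has a genuine gap, and it sits exactly where the content of the proposition lies. You obtain the block-diagonal splitting $\omega^h = \Sigma^{h_S} \oplus \omega_E^{h_G}$ by assuming that every $h \in \Gbig$ has spin and gauge components $h_S$ and $h_G$ -- equivalently, that $\Gbig$ is the gauge group of the product structure, with Lie algebra $\lie{spin} \oplus \lie{g}$. That is not the group the paper works with: $\Gbig$ is defined as the group of \emph{all} based vertical automorphisms of the vector bundle $S \otimes E$, whose Lie algebra is the tensor product $\lie{g}_S = \lie{spin} \otimes \lie{g}$ spanned by the mixed generators $\sigma_{\mu} \otimes \tau^a$. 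An element such as $\exp(f\,\sigma_1 \otimes \tau^1)$, for $f$ a function on $M$, lies in $\Gbig$ but admits no factorisation $h_S \otimes h_G$, so your splitting argument simply does not apply to it. Your closing caveat -- that one must take $\Gbig$ to be ``not the full automorphism group of the vector bundle'' -- does not rescue the argument; it changes the statement. With that smaller group the proposition becomes nearly trivial, but it could no longer support the purpose it serves in the paper: the subsequent computation of Schwinger terms for the mixed commutators $\comm{j_{\mu}^a}{j_{\nu}^b}$ requires precisely the generators $\sigma_{\mu} \otimes \tau^a$ that do not lie in $\lie{spin} \oplus \lie{g}$, and the whole point is that even for this larger transformation group the moduli space is unchanged.

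The paper's proof supplies exactly the step you are missing. It takes an arbitrary based $h \in \Gbig$ satisfying the hypothesis of your key lemma, writes the condition locally as
\[
	\ext h = h\left(\Sigma \otimes 1_E + 1_S \otimes \omega_E'\right) - \left(\Sigma \otimes 1_E + 1_S \otimes \omega_E\right) h ,
\]
regarded as a linear first-order differential equation for $h$ with values in $\End(S)\otimes\End(E)$, and argues that its solution factorises into the subspaces $\End(S)$ and $\End(E)$; because the spin connection $\Sigma$ is the same on both sides, the solution is forced into the linear subspace $1_S \otimes \End(E)$, and only then does one conclude $h = 1_S \otimes g$ with $g \in \G$, hence $\omega_E' = \omega_E^{g}$. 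In other words, the product form of $h$ is the \emph{conclusion} of a rigidity argument, not an admissible hypothesis on $\Gbig$. To repair your proof you would need to establish this factorisation (or an equivalent statement) for arbitrary based vertical automorphisms; the rest of your argument -- well-definedness of $\iota$, mutual inversion, and the homeomorphism from a continuous open bijection -- then goes through as you wrote it.
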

\begin{proof}

	Consider a transformation $h\in \Gbig$ such that $\omega^h = \Sigma \otimes 1_E + 1_S \otimes \omega_E'$.
	Note that since the gauge transformations are based, such $h$ is unique. Now locally we can write the transformation in terms of $\End(S)\otimes\End(E)$-valued differential forms:
	\begin{align*}
		\Sigma \otimes 1_E  + 1_S \otimes \omega_E' &= \inv{h}(\Sigma \otimes 1_E + 1_S \otimes \omega_E) h + \inv{h}\ext h \\
		\Longleftrightarrow~ \ext h &= h(\Sigma \otimes 1_E  + 1_S \otimes \omega_E') - (\Sigma \otimes 1_E  + 1_S \otimes \omega_E) h .
	\end{align*}
	This is a linear first order differential equation, and its solution factorises to elements in the subspaces $\End(S)$ and $\End(E)$. Since the spin connection $\Sigma$ remains unchanged, it follows that the solution must be in the linear subspace $1_S \otimes \End(E)$.
	This is possible if and only if $h = 1_S \otimes g$ for some $g \in \G$. Thus $\omega_E' = \omega_E^g$, and we can define a map
	\[
		\rho_* : \Abig/\Gbig \to \A/\G .
	\]
	By definition this map is surjective, and it is also injective since for any equivalence $\omega_E \sim \omega_E^g$ in $\A$, there is an equivalence in $\Abig$:
	\[
		\omega = \Sigma \otimes 1_E  + 1_S\otimes \omega_E \sim \Sigma \otimes 1_E  + 1_S \otimes \omega_E^g = \omega^{1\otimes g}.
	\]
	Thus $\rho_*$ is a bijection, and the moduli spaces $\Abig/\Gbig$ and $\A/\G$ are homeomorphic.
\end{proof}

\begin{remark}
	Later on we consider the case with $\dim(M)=3$. Then if we have a representation in the space of unitary matrices $U(p)$, the modified group is represented in the larger space $U(2p)$, where the factor of $2$ is due to the defining representation of Weyl fermions in $\C^2$. By the same token, we can replace the gamma matrices by the Pauli matrices $\sigma_{\mu}$, with $\sigma_0 = I$, and write $\lie{g}_S = \lie{u}(n) \otimes \lie{u}(2)$ for the modified Lie algebra.
	Moreover, for $M=S^3$ there is a diffeomorphism $M \isom SU(2)$, and the spin bundle $S = M \times SU(2)$ is trivial. The structure group for the tangent bundle of $M$ is $SO(3)$, for which the spin group $\spin(3) \isom SU(2)$ is the double-covering group.
\end{remark}

\idxloppu{connection!moduli space of}
\idxloppu{connection!gauge}
\idxalku{Dixmier-Douady class}
From Chapter~\ref{chap:ktheory} we know that the Dixmier-Douady class represents a K-theory element of the family and hence it is homotopy invariant. Now, if the two moduli spaces are homeomorphic, the modification of the transformation group does not change the topological basis for the anomaly terms on the side of the moduli space.

Consider the natural projection of the connections to the moduli space
\[
	\pi : \A \to \A/\G ,
\]
where we define the Dixmier-Douady representative as a $3$-form $\omega_3$ over the moduli space $\A/\G$. 
By the definition of the class $\omega_3$ is closed in $\A/\G$. The pullback $\pull{\omega_3}$ through the projection is closed in the direction of gauge orbits, and since $\A$ is affine, there is a $2$-form $\theta$ on $\A$ representing a $2$-cocycle of the Lie algebra $\lie{g} \defeq \Lie(\G)$ and fulfilling
\[
	\ext \theta = \pull{\omega_3} .
\]
This follows from the basic idea that the Lie algebra $\lie{g}$ generates smooth vector fields on $\A$ via the right action of $\G$. Then, if there is a $2$-form $\theta$ closed in the vertical directions dictated by the gauge orbits, we obtain a $2$-cocycle on the Lie algebra with values in $\map{\A,i\R}$ by evaluating the form along the generated vector fields. Thus we have the \emph{cohomology transgression} $\coh^3(\A/\G,\Z) \to \coh^2(\lie{g},\map{\A,i\R})$ as in Subsection~\ref{sec:transgression}.

 \idx{universal bundle}
Assume that the base manifold $M$ is an odd-dimensional sphere. Following the outline given in the previous sections, the $3$-form representation $\Omega$ as a characteristic class is evaluated over a closed orientable $3$-surface in the moduli space; we can choose $S^3 \subset \A/\G$ without loss of generality. Then,
\[
	\int_{S^3} \Omega = \int_{M\times B^3} c_{2n}(\pull{\F})
\]
where the Chern class is given in terms of the curvature $2$-form $\F$ of the \term{universal connection} $1$-form $\Auni_{G}$ over $M\times \A/\G$ with values in the Lie algebra $\Lie(G)$. The ball $B^3$ is the pullback of the surface $S^3$ on the space of connections $\A$. In the general case such an evaluation of a closed form on a product manifold $M\times \A/\G$ leads to a rather involved expression with nonlocal terms coming from the Green's operator~\cite{AS1984}. %
Formally, we have components according to the decomposition on $M\times \A$, which descends to $M\times \A/\G$:
\[
	\forms^2(M\times\A) = \forms^2(M) \oplus(\forms^1(M)\otimes\forms^1(\A)) \oplus \forms^2(\A) .
\]
Here the last component is not local. At a point $(x,A) \in M \times \A/\G$ we can write for tangent vectors $a,b$ in gauge directions $\F^{(0,2)}_{x,A}(a,b) = -2 G_{A} \comm{a}{b}(x)$, where $G_A \defeq \inv{(\ext_A^{*} \ext_A^{})}$ is the Green's operator for the Laplacian of the exterior covariant derivative, expressed by $\ext_A \defeq \ext + \comm{A}{\cdot}$ in the adjoint representation. The formal adjoint of the derivative is locally defined by $\ext_{A}^{*}(b) \defeq \partial_i b^i + \comm{A^i}{b^i}$
for $b$ in the tangent space of $\A/\G$ at the point $A$.%
	\footnote{What we mean by \term{gauge directions} is the following: The condition $\ext_A^{*}(b) = 0$ at a point $A \in \A$ sets the tangent vector $b\in T_A(\A)$ in the \term{background gauge}. Since any tangent vector at $\pi(A) \in \A/\G$ is by definition in the background gauge, we can make this assumption without loss of generality.}
In principle this curvature can then be used to evaluate the characteristic classes on the gauge bundle. For instance, if $G=SU(p)$ then in three dimensions $M=S^3$ we can represent a closed $3$-form $\Omega$ on the moduli space as
\[
	\Omega = k \int_M \tr \F^3 ,
\]
where $k$ is the normalisation for the integral class.

However, if we restrict to $S^3 \subset \A/\G$ and assume that the gauge bundle associated to the connection is trivial so that the pullback $\Auni$ of the universal connection to $M\times \A$ is globally defined, we need only the local terms of the curvature. Now we can evaluate the Chern class by the boundary integral of a suitable Chern-Simons form:
\[
	\int_{S^3} \Omega = \int_{M\times B^3} c_{2n}(\pull{\F}) = \int_{M\times S^2} CS_{2n-1}(\Auni) .
\]
For $M=S^3$, the integrand is the fifth Chern-Simons form
\[
	CS_5(\Auni) = \frac{i}{24\pi^3} \tr \left( \Auni(\ext \Auni)^2 + \frac{3}{2}\Auni^3 \ext \Auni + \frac{3}{5}\Auni^5 \right) ,
\]
and the Lie algebra $2$-cocycle is revealed by transgression: for $M=S^3$ it can be shown to be cohomologous to the Mickelsson-Faddeev cocycle.~\cite{CMM1997}

Due to the homotopy invariance, we can consider the same $3$-form representative $\Omega$ on $\Abig/\Gbig$. Now also the pullback to $\Abig$ fulfills the same coboundary relation, that is, there is a $2$-form $\eta$ on $\Abig$ such that $\pull{\Omega} = \ext \eta$. This $2$-form $\eta$ is the (possibly anomalous) representation of the current algebra $2$-cocycle given as a characteristic class on the gauge bundle. All of the above then holds true also when we use the modified moduli space $\Abig/\Gbig$. The main difference is that now the local formulas for the Schwinger terms will be slightly more complicated since the gauge algebra is a tensor product space.

We can illustrate this in the case of $\dim(M) = 3$. Our main objective is to recover the local Schwinger terms for all components of the current commutators by working in the larger transformation group $\G_S$.

\begin{example}[$M=S^3$] %
	\idxalku{current group}
When the base manifold is a sphere $S^n$ the geometry of the moduli space simplifies considerably. In particular, the moduli space $\A/\G$ is homotopic to the space of based maps $\map{S^{n-1},G}$, and the gauge transformation group $\G$ can be identified with the based loop group of the moduli space~\cite{SINGER1981}. Note that in general $\A/\G$ is not simply connected, nor is $\G$ connected; we can restrict to the connected component of the identity if necessary.

Let us first consider the standard case of connections in $\A/\G$ for $M=S^3$. The moduli space is homotopic to $S^2G$ and the space of connections $\A$ can be identified with the group of based smooth paths $B^3G$, that is, a connection $A\in \A$ is a smooth path on $S^2G$ starting from the identity, and the projection $\pi:B^3G \to S^2G$ is given by the evaluation of the path at its end point.

\idx{Bott periodicity}
What we need is a concrete representative of the Dixmier-Douady class on the moduli space. If $G=SU(p)$ with $p>2$, the odd cohomology groups are generated by elements $\alpha_{2k-1}$ for small $k\in\N$ (due to the Bott periodicity, see Remark~\ref{rem:bottcohomology}). Following the basic principles of the Chern-Simons theory~\cite{CS1974}, the standard representation can be written in terms of the Maurer-Cartan form $\inv{g}\ext g$, where $g: B^3 \to G$ is a smooth map. Here we assume that the maps $(x,z) \mapsto g(x,z) : B^3 \times S^2 \to G$ identify the gauge connections on the boundary
$\bdr B^3 = S^2$. We can then recover the Dixmier-Douady form by evaluating the Chern-Simons form on the product manifold $M \times S^2$. %
	\idx{Chern-Simons form}
Along gauge directions we get
\[
	\int_{S^3 \times S^2} \Omega = 	c_3 \int_{S^3 \times S^2} \tr (\inv{g}\ext g)^5 ,
\]
where
\[
	c_{2k+1} \defeq -\left(\frac{i}{2\pi}\right)^{k+2} \frac{1}{(k+2)!(2k+3)} .
\]
is the normalisation coefficient ensuring that the class is indeed integral.~\cite{CMM1997} %

The transgression follows from evaluating the Chern-Simons form over $M$ for tangent vectors $u,v,w:S^2 \to \lie{g}$. The generator of the Lie algebra cohomology on the moduli space in its simplest form is the following:
\[
	\int_{S^2G} \omega = c_2 \int_{S^2} \tr u \comm{\ext v}{\ext w} ,
\]
for $c_2 \defeq i/(24\pi^2)$. A direct computation verifies that this is indeed cohomologous to the transgressed form of $\Omega$.
Lifting then to a $3$-form on $B^3$ by Stokes' theorem we get
\[
	\int_{B^3} \tr \ext u \comm{\ext v}{\ext w}  = \int_{B^3} \tr \left( \ext u \comm{\ext v}{\ext w} - \ext v \comm{\ext w}{\ext u} + \ext w \comm{\ext u}{\ext v} \right) .
\]
In the Lie algebra cohomology this is the coboundary of the familiar Mickelsson-Faddeev cochain: %
	\idx{Mickelsson-Faddeev cocycle}
\[
	\theta_2(\inv{f}\ext f;u,v) = c_2 \int_{B^3} \tr \inv{f}\ext f \comm{\ext u}{\ext v} ,
\]
where the Maurer-Cartan form $\inv{f}\ext f$ is defined for a smooth map $f : B^3 \to G$. The coboundary is given by the formula
\begin{align*}
	\cob \theta_2(A;u,v,w) \defeq &~\lder_u \theta_2(A;v,w) - \lder_v \theta_2(A;w,u) + \lder_w \theta_2(A;u,v) \\
		& - \theta_2(A;\comm{u}{v},w) + \theta_2(A;\comm{w}{u},v) - \theta_2(A;\comm{v}{w},u) ,
\end{align*}
where the Lie derivative acts on the potential $A\defeq \inv{f}\ext f$ as $\lder_u A \defeq \comm{A}{u} + \ext u$.

Since we are interested in the gauge directions with the boundary points of $\A = B^3G$ identified, the $2$-form $\theta_2$ is evaluated over tangent vectors $u,v$ vanishing on the boundary of $B^3$: hence it is \emph{closed in the gauge directions}. %
	\idx{transgression}
The transgression then gives a $2$-\emph{cocycle} on $S^3$:
\[
	\theta_2(\inv{f}\ext f;u,v) = c_2 \int_{S^3} \tr \inv{f}\ext f \comm{\ext u}{\ext v} .
\]
This the essence of the transgression: we have a map between classes given by
\[
	\int_{S^2G} (\cdot) \to \int_{S^3} (\cdot)
\]
and thus can obtain the desired Lie algebra $2$-cocycle.
\idxloppu{Dixmier-Douady class}
%


\begin{remark}%
		\idx{Bott periodicity}
	Note that the use of standard generators for the group cohomology as above applies only with $G=SU(p)$ with $p$ suitably large. The cohomology representation is in general more complicated for arbitrary Lie groups.
\end{remark}

\idxloppu{current group}

Let us then move to the modified group $\Gbig$ in the case of Weyl fermions. The generalised connections $\A_S$ can be recovered simply by letting the larger group $\Gbig$ act on the subspace $\A$. By Proposition~\ref{prop:moduli_space} the modified moduli space is equivalent to the original one, and so we have the same pullback mechanism for the Dixmier-Douady class and the cohomology transgression has the same form. The tensor products in the Lie algebra basis will not change the fact that pullback will be the coboundary of a $2$-form representing a Lie algebra $2$-cocycle cohomologous to Mickelsson-Faddeev cocycle.

\idxalku{Schwinger terms}
\idxalku{current algebra}
We can now have a closer look at the arising Schwinger terms. Now the Lie algebra valued forms $u,v$ and $A\defeq \inv{f}\ext f$ can be written with respect to linear combinations of the basis $\sigma_{\mu} \otimes \tau^a$ built from Pauli matrices and the generators of the current algebra -- as before, we will call the components of the tensor product \term{spin components} and \term{Yang-Mills components}, correspondingly. To illustrate how this looks, we can write simple two-component tensor commutators as
\begin{align}\label{eq:tensor_commutators}
	\comm{x_s \otimes x^t}{y_s \otimes y^t} &= (x_s y_s)\otimes(x^t y^t) - (y_s x_s)\otimes(y^t x^t) \nonumber \\
	&= \comm{x_s}{y_s} \otimes x^t y^t + y_s x_s \otimes \comm{x^t}{y^t} ,
\end{align}
where we have used the mixed Kronecker product. If we write $x_s \otimes x^t \defeq x^{\mu}_a \sigma_{\mu} \otimes \tau^a$, 
we can split the commutator into separate cases depending on the spin indices relevant to the gauge currents. In the following we use \emph{fixed indices $\mu$ and $\nu$}, and no summation is intended to be carried over them.
\begin{enumerate}
	\item $\mu = \nu$: the only contributing term is
		\[
			y_b^{\mu}x_a^{\mu}I \otimes \comm{\tau^a}{\tau^b} .
		\]
	\item $\mu = 0$ and $\nu \neq 0$: the contributing term is
		\[
			y_b^{\nu} x_a^0 \sigma_{\nu} \otimes \comm{\tau^a}{\tau^b} .
		\]
	\item $\mu \neq \nu$ and neither are zero: the contributing terms are
		\begin{align*}
			& 2i\epsilon_{\mu\nu\eta} \, x_a^{\mu} y_b^{\nu} \sigma_{\eta} \otimes \tau^a \tau^b + i\epsilon_{\nu\mu\eta} y_b^{\nu} x_a^{\mu}\sigma_{\eta} \otimes \comm{\tau^a}{\tau^b} \\
			&~= i\epsilon_{\mu\nu\eta} x_a^{\mu} y_b^{\nu} \sigma_{\eta} \otimes \acomm{\tau^a}{\tau^b} .
		\end{align*}
\end{enumerate}
Here the Levi-Civita symbol $\epsilon_{\mu\nu\eta}$ goes through the indices $\{1,2,3\}$, and we have defined $\sigma_0 \defeq I$.

We can now formulate the modified current commutators for the general current components defined by
\[
	j_a^{\mu}(x) = \psi^{\dagger}(x)\left(\sigma_{\mu}\otimes\tau^a \right)\psi(x) .
\]
First note the naive relations involving the non-anomalous terms in the commutators $\comm{j_a^{\mu}(x)}{j_b^{\nu}(y)}$ for fixed indices $\mu$ and $\nu$:
\begin{enumerate}
	\item\label{spin-index:same} $\mu = \nu$:
		\begin{align*}
			\comm{j^a_{\mu}(x)}{j^b_{\mu}(y)} &= \psi^{\dagger}(x) (I \otimes \comm{\tau^a}{\tau^b}) \psi(x) \delta(x-y) \\
				&= \lambda^{ab}_c j^c_0(x) \delta(x-y) .
		\end{align*}

	\item\label{spin-index:zero} $\mu = 0$ and $\nu \neq 0$:
		\begin{align*}
			\comm{j^a_{0}(x)}{j^b_{\nu}(y)} &= \psi^{\dagger}(x) (\sigma_{\nu} \otimes \comm{\tau^a}{\tau^b}) \psi(x) \delta(x-y) \\
			&= \lambda^{ab}_c j^c_{\nu}(x) \delta(x-y) .
		\end{align*}

	\item\label{spin-index:nontrivial} $\mu \neq \nu$ and neither are zero:
		\begin{align*}
			\comm{j^a_{\mu}(x)}{j^b_{\nu}(y)} &= i\epsilon_{\mu\nu\eta} \psi^{\dagger}(x) (\sigma_{\eta} \otimes \acomm{\tau^a}{\tau^b}) \psi(x) \delta(x-y) \\
			&= i \epsilon_{\mu\nu\eta}d_{abc} j_{\eta}^c(x) \delta(x-y).
		\end{align*}

\end{enumerate}
Assuming $G=SU(p)$ in the purely spatial case~\ref{spin-index:nontrivial}, the \emph{defining representation}
gives the anticommutator
\[
	\acomm{\tau^a}{\tau^b} = \frac{1}{p}\delta_{ab} I + d_{abc} \tau^c ,
\]
where $d_{abc} \defeq \frac{1}{2} \tr \acomm{\tau^a}{\tau^b}\tau^c$. Note that we are now indeed outside the scope of the original Lie algebra $\lie{g}$: the anticommutator is not in the Lie algebra $\lie{su}(p)$ but in the universal enveloping algebra. For the modified gauge algebra we can make the assumption of working within $\lie{u}(p)$ and then the anticommutator is simply
\[
	\acomm{\tau^a}{\tau^b} = d_{abc} \tau^c .
\]

\idxalku{anomaly!commutator}
These commutators need to be amended by the appropriate Schwinger terms $\alpha^{ab}(x,y)$. Plugging the tensor products into the Mickelsson-Faddeev cochain formula gives cases similar to the above in terms of the potential $A \in \Lambda^1(M) \otimes \lie{g}_S$ and of the exterior derivatives for $u,v \in \Lambda^0(M) \otimes \lie{g}_S$. Note that in the ordinary case we can write the local expression $A = A_i(x)\ext x^i$, where $A_i(x)$ is a function on $M$ with values in the Lie algebra $\lie{g}$ so that $A_i(x) = A_{i,c}(x) \tau^c$. In the modified Lie algebra $\lie{g}_S$ the local forms are
\begin{align*}
		A &= A^{\eta}_{k,c}(x) (\sigma_{\eta}\otimes \tau^c) \ext x^k ,\\
		\ext u &= \partial_i u^{\mu}_{a}(x) (\sigma_ {\mu} \otimes \tau^a) \ext x^i, \text{ and }\\
		\ext v &= \partial_j v^{\nu}_b(x) (\sigma_{\nu}\otimes \tau^b) \ext x^j .
\end{align*}
This maze of indices needs some clarification. The coordinate $x$ is naturally a point in the base manifold $M$, and the coordinate maps $x^i:M\to\R^n$ are local with respect to this given point. In the coefficient function $A^{\eta}_{k,c}(x)$ the indices $\eta$ and $c$ are in relation to the current algebra built from the tensor product $\sigma_{\eta}\otimes \tau^c$. The index $k$ is the local index of the differential form defined on $M$ with respect to the chosen coordinates, and with slight abuse of notation we then denote by $A^{\eta}_c$ the $1$-form on $M$ with respect to the basis in $\lie{g}_S$. As before we follow the convention for summing over repeated indices, unless otherwise stated.%

We can then write the local expression involving the commutator for the derivatives as follows:
\begin{align*}
	\comm{\ext u}{\ext v} =& \comm{ \ext u_a^{\mu} ( \sigma_{\mu} \otimes \tau^a)}{\ext v_b^{\nu} (\sigma_{\nu} \otimes \tau^b)} \\
	=& \partial_i u_a^{\mu}(x) \partial_j v_b^{\nu}(x) \left(\comm{\sigma_{\mu}}{\sigma_{\nu}}\otimes \tau^a \tau^b + \sigma_{\nu} \sigma_{\mu} \otimes \acomm{\tau^a}{\tau^b} \right)\ext x^i \ext x^j .
\end{align*}
In the last term the anticommutator follows from the antisymmetry of the wedge product of forms $\ext x^i$ and $\ext x^j$; compare to the basic form in Equation~\eqref{eq:tensor_commutators}. Splitting the calculation now into separate cases depending on the spin indices, we get the following components of the cocycle and the corresponding local Schwinger terms. Note that again the {indices $\mu$ and $\nu$ are fixed}.
\begin{enumerate}
	\item\label{ST-spin-index:same} $\mu = \nu$:
		\begin{align*}
			\theta_2(A;u,v) &= c_2 \int_M  \tr A^{\eta}_c(x) \partial_i u_a^{\mu}(x) \partial_j v_b^{\nu}(x) \sigma_{\eta} \otimes \tau^c \acomm{\tau^a}{\tau^b} \, \ext x^i \ext x^j \\
			  &= c_2 \int_M A^{\eta}_{k,c}(x) \partial_{i} u^{\mu}_a(x) \partial_{j} v^{\nu}_b(x)  \tr(\sigma_{\eta}) \tr(\acomm{ \tau^a}{\tau^b} \tau^c) \, \ext x^k \ext x^i \ext x^j \\
				&= c_2 \epsilon_{ijk} \int_M A^{\eta}_{k,c}(x) \partial_{i} u^{\mu}_a(x) \partial_{j} v^{\nu}_b(x)  \tr(\sigma_{\eta}) \tr(\acomm{ \tau^a}{\tau^b} \tau^c) \, \vol_M.
		\end{align*}
		Here the summation is carried over the local indices $i,j,k$, the spin index $\eta$, and the Yang-Mills indices $a,b,c$. Since the Pauli matrices $\sigma_{\eta}$ are traceless, we see that the Schwinger term appears only for $\eta = 0$:
		\[
			\alpha^{ab}(x,y) = 2 c_2 \epsilon_{ijk} \tr(\acomm{ \tau^a}{\tau^b} \tau^c) \partial_{i} A^0_{j,c}(x) \partial_{k} \delta(x-y) .
		\]
	\item\label{ST-spin-index:zero} $\mu = 0$ and $\nu \neq 0$:
		\begin{align*}
			\theta_2(A;u,v) &= c_2 \int_M  \tr A^{\eta}_c(x) \partial_i u_a^0(x) \partial_j v_b^{\nu}(x) \sigma_{\eta}\sigma_{\nu} \otimes \tau^c \acomm{\tau^a}{\tau^b} \, \ext x^i \ext x^j \\
				&= 2 c_2 \delta_{\eta\nu}\epsilon_{ijk} \int_M A^{\eta}_{k,c}(x) \partial_{i} u^0_a(x) \partial_{j} v^{\nu}_b(x) \tr(\acomm{ \tau^a}{\tau^b} \tau^c) \, \vol_M ,
		\end{align*}
		where we have used $\tr(\sigma_{\eta}\sigma_{\nu}) = 2\delta_{\eta\nu}$. This gives the Schwinger term:
		\[
			\alpha^{ab}(x,y) = 2 c_2 \epsilon_{ijk} \tr(\acomm{ \tau^a}{\tau^b} \tau^c) \partial_{i} A^{\nu}_{j,c}(x) \partial_{k} \delta(x-y) .
		\]
	\item\label{ST-spin-index:nontrivial} $\mu \neq \nu$ and neither are zero:
		\begin{align*}
			\theta_2(A;u,v) &= i c_2 \epsilon_{\mu \nu \eta} \int_M  \tr A^{\zeta}_c(x) \partial_i u_a^{\mu}(x) \partial_j v_b^{\nu}(x)  \sigma_{\zeta} \sigma_{\eta} \otimes \tau^c \comm{ \tau^a}{\tau^b} \, \ext x^i \ext x^j \\
			&= i c_2 \epsilon_{\mu \nu \eta} \epsilon_{ijk} \int_M A^{\zeta}_{k,c}(x)  \partial_i u_a^{\mu}(x) \partial_j v_b^{\nu}(x) \tr(\sigma_{\zeta} \sigma_{\eta})\tr(\comm{ \tau^a}{\tau^b} \tau^c) \, \vol_M\\
			&=  2 i c_2 \epsilon_{\mu \nu \eta} \delta_{\zeta \eta} \epsilon_{ijk} \int_M A^{\zeta}_{k,c}(x) \partial_i u_a^{\mu}(x) \partial_j v_b^{\nu}(x)  \tr(\comm{ \tau^a}{\tau^b} \tau^c) \, \vol_M .
		\end{align*}
		Now the Schwinger term is:
		\[
			\alpha^{ab}(x,y) = 2 i c_2 \epsilon_{\mu \nu \eta} \epsilon_{ijk}  \tr(\comm{ \tau^a}{\tau^b} \tau^c)  \partial_i A^{\eta}_{j,c}(x) \partial_k \delta(x-y) .
		\]
\end{enumerate}
As expected, in the case~\ref{ST-spin-index:same} for $\mu=\nu=0$ we have recovered the well-known Schwinger term involving only the component $A^0_c(x)$: compare to Equation~\eqref{eq:faddeev_schwinger}. In the derivation of the local expressions from the cocycle we have used the techniques described in \cite[Ch.~10]{AI1995}.

\idxloppu{anomaly!commutator}
\idxloppu{current algebra}
\idxloppu{Schwinger terms}

\end{example}

\idxloppu{anomaly!Hamiltonian}
%

\chapter{Conclusion and outlook}

The substance of this thesis roughly divides into two parts, in reverse chronological order: the Hamiltonian anomaly, and higher structures relevant to symmetry groups and anomalies. Concerning the former, we have sketched the relationships between different mathematical structures surrounding the Hamiltonian anomaly in Figure~\ref{fig:hanomaly}.
\begin{figure}[!ht]
  \tikzmark{figurecorner}%

\begin{tikzpicture}[remember picture,overlay,shift={( $ (pic cs:figurecorner) + (-12mm,0mm) $ )},x=6mm,y=4mm]
  \node[circle,draw=black, fill=white, inner sep=0pt,minimum size=5pt,opacity=0] (alphaCorner) at (0,0) {};
  \node[circle,draw=black, fill=white, inner sep=0pt,minimum size=5pt,opacity=0] (omegaCorner) at (25,-26) {};

\end{tikzpicture}

\begin{tikzpicture}[remember picture,shift={(alphaCorner)},%
    x=6mm,y=4mm]]%
    \path[use as bounding box] (0,0) rectangle (21,22.5); 

\end{tikzpicture}

\begin{tikzpicture}[remember picture,overlay,shift={(alphaCorner)},%
    every node/.style={anchor=center},every text node part/.style={align=center},%
    node distance=20mm and 2mm,%
    font=\small,%
    x=6mm,y=4mm]
    \tikzset{nuolet/.style={sloped, anchor=center, font=\scriptsize}}
    \tikzset{>=latex}

    \node (CENTER) at (10.25,-10) {\bfseries Hamiltonian anomaly};
    \node[below = of CENTER] (ST) {Schwinger terms\\in commutators};
    \node[below right = of ST] (LIECOH) {$\coh^2(\Lie(\G),\map{\A,i\R})$};
    \node[above right = of LIECOH] (COH) {$\coh^3(\A/\G,\Z)$};
    \node[above = of COH] (GERBE) {Bundle gerbe $(\F/\G,\A/\G)$};
    \node[above = of GERBE] (LIFT) {Principal $\G$-bundle $\A \to \A/\G$};
    \node[above = of CENTER] (EXT) {Extension\\$\exact{\map{\A,S^1}}{\gext{\G}}{\G}$};
    \node[below left = of EXT] (FOCK) {Fock bundle $\F \to \A$};
    \node[below = of FOCK] (DET) {Local determinant bundles};

    \def\aboveshift#1{\raisebox{1ex}}
    \def\belowshift#1{\raisebox{-2.25ex}}

    \draw[<-,color=white,bend left=25,postaction={decorate,decoration={text along path,text align=center,text={|\scriptsize\aboveshift|Mickelsson-Faddeev}}}] (ST) to (LIECOH);
\draw[<-,bend left=25,postaction={decorate,decoration={text along path,text align=center,text={|\scriptsize\belowshift|2-cocycle}}}] (ST) to (LIECOH);

    \draw[<-,bend right=25,postaction={decorate,decoration={text along path,text align=center,text={|\scriptsize\aboveshift|Transgression}}}] (LIECOH) to (COH);

    \draw[->,color=white,bend left=15,postaction={decorate,decoration={text along path,text align=center,text={|\scriptsize\aboveshift|Dixmier-Douady}}}]  (GERBE) to (COH);
    \draw[->,bend left=15,postaction={decorate,decoration={text along path,text align=center,text={|\scriptsize\belowshift|class}}}]  (GERBE) to (COH);

    \draw[->,bend left=15,postaction={decorate,decoration={text along path,text align=center,text={|\scriptsize\aboveshift|Triviality}}}] (LIFT) to (GERBE);

    \draw[<-,bend left=25,postaction={decorate,decoration={text along path,text align=center,text={|\scriptsize\aboveshift|Bundle lift}}}] (EXT) to (LIFT);

    \draw[->,color=white,bend left=15,postaction={decorate,decoration={text along path,text align=center,text={|\scriptsize\aboveshift|Quantisation}}}] (FOCK) to node[nuolet, above] (QUANT) {} (EXT);
    \draw[->,bend left=15,postaction={decorate,decoration={text along path,text align=center,text={|\scriptsize\belowshift|symmetry}}}] (FOCK) to (EXT);

    \draw[<->,bend right=15,postaction={decorate,decoration={text along path,text align=center,text={|\scriptsize\belowshift|Definition}}}] (FOCK) to (DET);

    \draw[->,bend right=15,postaction={decorate,decoration={text along path,text align=center,text={|\scriptsize\belowshift|Curvature}}}] (DET) to (LIECOH);

    \draw[<-,bend left=35,postaction={decorate,decoration={text along path,text align=center,text={|\scriptsize\aboveshift|Lie functor}}}] (EXT) to (LIECOH);

    \draw[-,bend left=5] (CENTER) to node[nuolet,above] {Physics} (ST);
    \draw[-,bend right=5] (CENTER) to node[nuolet,above] {Physics} ( $ (CENTER)!25mm!(QUANT) $ );

\end{tikzpicture}
  \caption{The mathematical structure of the Hamiltonian anomaly: an informal diagram.}
  \label{fig:hanomaly}
\end{figure}

To this picture we could also add a more recent development, that of anomalous categories~\cite{FREED2014,MONNIER2015}. This approach will probably prove to be fruitful for those wishing to tackle anomalies in higher gauge theories. The general framework for the anomalous categories is \emph{topological quantum field theory}, in which the central idea is to form a precise connection between the category of Hilbert spaces and the category of differentiable manifolds. Thus one unifies these two different structures in physics in such a way that a geometric construction of the quantum state spaces can be achieved.

Current groups and algebras as defined in Chapter~\ref{chap:gauge} have proven to be apt tools for modelling the intricacies of gauge symmetry. Yet these structures are still in many ways \emph{classical}, and there is much to be said on what emerges if we quantised these symmetries as well. This is not a new idea as such: while we have only fleetingly touched the topic here,%
  \footnote{There are nontrivial connections to $C^*$-algebras and Kac-Moody algebras, for instance.}
the study of quantum groups is alive and active. There are also examples of higher quantum groups, and it remains to be seen if the categorification of currents groups will have a role in this endeavour.

Indeed the question is: where to go from here? Many parts in this thesis are in some sense old news and rely on already established mathematical notions. Yet the prominence of bundle gerbes already shows the usefulness of rising to a higher level of abstraction. The study of higher structures in quantum physics and categorification is one of the road signs potentially pointing to a deeper understanding of not only mathematical objects but also physics -- perhaps, at the end of the path, it will even offer true glimpses of \emph{new} physics about to emerge. %
So we come back to the Great Question alluded in Chapter~\ref{chap:higher}: not only should we strive to understand \emph{how} to quantise, but also \emph{what} are we really quantising in the first place.

This quest is not exclusively field theoretic; much could be said for instance on the physics of condensed matter in some higher framework.
Perhaps there will never be a single theory of everything, but at least history speaks volumes of the usefulness of stronger mathematical tools. In this light it is easy to conjecture that these higher and more refined structures will make a difference not only in the abstract but also for many concrete applications.

\appendix
\chapter{Cohomology theories}\label{app:cohomology}

We present a quick review of various cohomology theories used in the main text. To understand the Dixmier-Douady class of bundle gerbes one needs to have some grip on sheaf cohomology -- we do this by introducing \v{C}ech cohomology; and for the applications of current groups one needs to understand the basics of Lie group and Lie algebra cohomology, including the infinite-dimensional setting which often requires a more refined approach.

We tackle the group cohomology first, since it is simpler and provides necessary background material for the \v{C}ech cohomology.

\section{Group cohomology} %
  \idxalku{Lie group!cohomology of}

Let us recall the basics of group cohomology and illustrate the ideas with a couple of examples. For more details on the extensions and cohomology theory, see \cite[Sec.~18.1--18.3]{HN2012} and \cite[Ch.~4]{MICKELSSON1989} for gauge-theoretic applications; a good source for many physics-driven considerations is \cite{AI1995}.

Let $A$ be an Abelian group and $G$ a group acting via automorphisms $\rho: G \to \aut(A)$ on $A$ -- we say that $A$ is a \term{$G$-module}. For $p\in \N_0$, define an \term{$p$-cochain} for this action as a function $c^p : G^p \to A$ with the condition of being zero if at least one of the arguments is the identity of $G$. The set of $p$-cochains $\coch^p(G,A)$ has a natural group structure by point-wise addition. Consider then a \term{coboundary map}
\[
  \cob_G : \coch^p(G,A) \to \coch^{p+1}(G,A)
\]
defined by the formula\footnote{With respect to the action of $G$ on $A$ which could be either left or right -- we assume that it is left.}
\begin{align*}
  (\cob_G f)(g_1,\dots,g_p,g_{p+1}) \defeq&~ g_1 f(g_2,\dots,g_{p+1}) \\
    &~+ \sum_{i=1}^{p} (-1)^i f(g_1,\dots,g_{i-1},g_ig_{i+1},g_{i+2},\dots,g_{p+1}) \\
    &~+ (-1)^{p+1}f(g_1,\dots,g_p) .
\end{align*}
Define as a subgroup of $p$-cochains the group of \term{$p$-cocycles} by the kernel $\coc^p(G,A) \defeq \ker \cob_G|_{\coch^p(G,A)}$. It is easy to establish that $\cob_G f$ vanishes if any of the arguments is the identity, and that $\cob_G^2 = 0$ when combined as a mapping $\coch^p(G,A) \to \coch^{p+2}(G,A)$. From this we see that the group of $p$-cocycles has a subgroup $\cobo^p(G,A) \defeq \cob_G(\coch^{p-1}(G,A))$ called the \term{$p$-coboundaries}.
\begin{definition}
The quotient group
\[
  \coh^p(G,A) \defeq \coc^p(G,A) / \cobo^p(G,A)
\]
with respect to the homomorphism $\rho:G \to \aut(A)$ is the \termd{$p$th cohomology group of $G$ with values in the $G$-module $A$}.
\end{definition}

An \term{extension of the group $G$ by the group $N$} is a short exact sequence
\[
  \exact{N}{\gext{G}}{G}
\]
of group morphisms $\iota:N \to \gext{G}$ and $g:\gext{G} \to G$ such that $\iota: N \to \gext{G}|_{\ker{g}}$ is an isomorphism. Two extensions $\gext{G}_1$ and $\gext{G}_2$ of $G$ by $N$ are equivalent if there exists a group morphism $\phi:\gext{G}_1 \to \gext{G}_2$ such that the following diagram commutes.
\[
  \begin{tikzcd}[column sep=huge]
    N \arrow[r,"\iota_1"] \arrow[d,"\id_N"] & \gext{G}_1 \arrow[r,"g_1"] \arrow[d,"\phi"] & G \arrow[d,"\id_G"] \\
    N \arrow[r,"\iota_2"] & \gext{G}_2 \arrow[r,"g_2"] & G
  \end{tikzcd}
\]
The morphism $\phi$ must then be an isomorphisms of groups. Let us denote the set of equivalence classes of group extensions by $\extensions(G,N)$.

\begin{theorem}\label{theorem:group_ext_coh}
Let $A$ be an Abelian group. Then there is an isomorphism
\[
  \coh^2(G,A) \isom \extensions(G,A) .
\]
\end{theorem}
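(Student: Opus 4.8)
The plan is to construct the isomorphism $\coh^2(G,A) \isom \extensions(G,A)$ explicitly in both directions and then verify that the constructions are mutually inverse and respect the equivalence relations on each side. This is the classical Schreier theory of group extensions; the proof is a bookkeeping exercise once the right maps are written down, but the bookkeeping is nontrivial and the main danger is conflating the coboundary relation with the equivalence-of-extensions relation.

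First I would fix an extension $\exact{A}{\gext{G}}{G}$ with the given action $\rho$, and choose a set-theoretic section $s: G \to \gext{G}$ of the projection $g \mapsto \gext{G}$, normalised so that $s(e) = e$. Since the two elements $s(g_1)s(g_2)$ and $s(g_1g_2)$ have the same image in $G$, they differ by an element of $\ker g = \iota(A)$; this defines a function $f: G\times G \to A$ by $s(g_1)s(g_2) = \iota\bigl(f(g_1,g_2)\bigr)\,s(g_1g_2)$. Associativity of multiplication in $\gext{G}$, expanded on $s(g_1)s(g_2)s(g_3)$ in two ways, yields exactly the $2$-cocycle identity $\cob_G f = 0$ (using that $\rho$ encodes conjugation of $A$ by $s(g)$ and that $A$ is Abelian so $\iota(A)$ is central-up-to-action). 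The normalisation of $s$ gives the cochain-vanishing condition. I would then check that changing the section $s$ to another normalised section $s'$ alters $f$ precisely by a coboundary $\cob_G b$, where $b(g) \in A$ measures $s'(g) = \iota(b(g))s(g)$; hence the class $[f] \in \coh^2(G,A)$ is well defined independently of the section. Conversely, equivalent extensions produce cohomologous cocycles, so we obtain a well-defined map $\extensions(G,A) \to \coh^2(G,A)$.

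For the reverse direction, given a normalised $2$-cocycle $f$, I would define $\gext{G} \defeq A \times G$ as a set with the twisted multiplication
\[
  (a_1,g_1)\cdot(a_2,g_2) = \bigl(a_1 + \rho(g_1)a_2 + f(g_1,g_2),\, g_1 g_2\bigr),
\]
and verify that the cocycle condition is exactly what makes this operation associative, that $(0,e)$ is the identity (using normalisation of $f$), and that inverses exist. The inclusion $a \mapsto (a,e)$ and the projection $(a,g)\mapsto g$ then form a short exact sequence realising the given action, giving a map $\coh^2(G,A) \to \extensions(G,A)$; cohomologous cocycles yield equivalent extensions via the isomorphism $(a,g) \mapsto (a + b(g), g)$.

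Finally I would check the two composites are the identity: starting from an extension, building its cocycle, and reconstructing $A\times G$ recovers an equivalent extension (the equivalence being $(a,g)\mapsto \iota(a)s(g)$); and starting from a cocycle, the canonical section of $A\times G$ returns the same cocycle. The main obstacle is purely organisational rather than conceptual: keeping the action $\rho$ consistently on the correct side throughout (the footnote in the cohomology appendix fixes a left action, so every conjugation $s(g)\,\iota(a)\,s(g)^{-1} = \iota(\rho(g)a)$ must be tracked carefully), and distinguishing cleanly between the freedom of choosing a section (which produces coboundaries) and the equivalence of extensions (which produces the commuting diagram $\phi$). I expect the associativity verification of the twisted product and the well-definedness under change of section to be the steps where sign and side errors are most likely, so I would carry those out with explicit care while treating the remaining identities as routine.
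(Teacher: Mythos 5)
Your proposal is correct and follows essentially the same route as the paper: the paper's proof sketches exactly this Schreier-theoretic construction, associating to a cocycle $f$ the twisted product $A \times_f G$ with multiplication $(a,g)(a',g') = (a + g.a' + f(g,g'),\, gg')$, and defers the remaining verifications (section-to-cocycle direction, well-definedness, mutual inverses) to the cited reference. You have simply carried out in full the bookkeeping that the paper outsources.
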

\begin{proof}
The bijection is constructed by associating to a cocycle $f \in \coc^2(G,A)$ the Abelian extension $A\times_f G$ as a set $A\times G$ with the product
\[
  (a,g)(a',g') = (a + g.a' + f(g,g'),gg') .
\]
Any Abelian extension of $G$ by $A$ is of this form, and they are equivalent precisely when the defining cocycles $f$ coincide. For details, see \cite[Sec.~18.3]{HN2012}.
\end{proof}
More generally, one can use the cohomology group $\coh^2(G,Z(N))$ -- where $Z(N)$ is the centre of $N$ -- to classify any group extension of $G$ by $N$.

\idxalku{Lie algebra!cohomology of}
There is a corresponding cohomology theory for Lie algebras~\cite[Sec.~7.5--7.6]{HN2012}. An important result analogous to Theorem~\ref{theorem:group_ext_coh} is the following isomorphism:
\[
  \coh^2(\lie{g},\lie{a}) \isom \extensions(\lie{g},\lie{a}) ,
\]
for extensions of a Lie algebra $\lie{g}$ by an Abelian Lie algebra $\lie{a}$ and with respect to a given representation of these Lie algebras on some vector space. The exact relationship between the second cohomology groups of Lie groups and algebras depends on the setting. In particular, we have the following theorem:
\begin{theorem}\label{theorem:group_algebra_coh_isom}
Let $G$ and $A$ be connected Lie groups, $G$ simply connected and $A$ Abelian. Then
\[
  \coh^2(G,A) \isom \coh^2(\lie{g},\lie{a}) ,
\]
where $\lie{g}$ is the Lie algebra of $G$, and $\lie{a}$ is the Lie algebra of $A$.
\end{theorem}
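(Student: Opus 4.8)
The plan is to route everything through the two extension-classification isomorphisms already in hand. By Theorem~\ref{theorem:group_ext_coh} we have $\coh^2(G,A) \isom \extensions(G,A)$, and by the stated Lie algebra analogue $\coh^2(\lie{g},\lie{a}) \isom \extensions(\lie{g},\lie{a})$; so it suffices to produce a bijection $\extensions(G,A) \isom \extensions(\lie{g},\lie{a})$ that respects equivalence classes and is compatible with these identifications. The forward direction is the Lie functor: differentiating an Abelian extension $\exact{A}{\gext{G}}{G}$ yields $\exact{\lie{a}}{\Lie(\gext{G})}{\lie{g}}$, the induced module structure $\lie{g} \to \End(\lie{a})$ being the differential of $\rho$. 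This is manifestly well defined on equivalence classes, and on the cohomology side it is exactly the derivation map of Theorem~\ref{theorem:cohom_hom}, which under our hypotheses ($G$ simply connected and $A$ connected Abelian, hence $A \isom \lie{a}/\Gamma$ for a discrete subgroup $\Gamma \subset \lie{a}$) is already known to be a monomorphism. Thus the whole problem reduces to \emph{surjectivity}, i.e.\ to integrating an abstract Lie algebra extension.

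First I would integrate the extension abstractly. Given $\exact{\lie{a}}{\widehat{\lie{g}}}{\lie{g}}$, Lie's third theorem produces a unique connected, simply connected Lie group $\widehat{G}$ with $\Lie(\widehat{G}) = \widehat{\lie{g}}$. Because $\widehat{G}$ is simply connected, the Lie algebra projection $\widehat{\lie{g}} \to \lie{g}$ integrates to a unique homomorphism $P: \widehat{G} \to G$, and since both groups are connected and the differential is onto, $P$ is surjective. I would then identify the kernel $N \defeq \ker P$: it is normal with $\Lie(N) = \lie{a}$, and the homotopy long exact sequence of the fibration $N \to \widehat{G} \to G$, together with $\pi_1(\widehat{G}) = \pi_1(G) = 0$ and the vanishing $\pi_2(G) = 0$ (valid for every finite-dimensional Lie group), forces $N$ to be connected and simply connected. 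Hence $N$ is the simply connected Abelian group $\lie{a} \isom \R^{\dim \lie{a}}$, giving a simply connected extension $\exact{\lie{a}}{\widehat{G}}{G}$.

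The last step is to descend from the simply connected fibre $\lie{a}$ to the prescribed $A = \lie{a}/\Gamma$. Writing $\iota: \lie{a} \isom N \hookrightarrow \widehat{G}$, I would check that $\iota(\Gamma)$ is normal in $\widehat{G}$: the conjugation action of $\widehat{G}$ on $N$ integrates the module action $\lie{g} \to \End(\lie{a})$, which by hypothesis is $d\rho$, and since $A$ is a $G$-module the lattice $\Gamma$ is preserved by $\rho(G)$ and hence by the integrated action of the connected group. Setting $\gext{G} \defeq \widehat{G}/\iota(\Gamma)$ then produces an extension $\exact{A}{\gext{G}}{G}$ whose differential recovers the extension we started from, establishing surjectivity; combined with the injectivity of Theorem~\ref{theorem:cohom_hom} this yields the isomorphism.

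I expect the main obstacle to be precisely this final descent: one must keep the $G$-module structures on the group and Lie algebra sides aligned so that $\Gamma$ is genuinely invariant (and so that $\widehat{G}/\iota(\Gamma)$ carries the correct conjugation action), and verify that differentiating $\widehat{G}/\iota(\Gamma)$ returns the \emph{original} class rather than merely a cohomologous representative. The homotopy-theoretic input $\pi_2(G) = 0$ is what makes the fibre come out simply connected and hence the construction canonical; in the infinite-dimensional setting (relevant to the current groups of the main text) Lie's third theorem and this $\pi_2$-argument must be replaced by the appropriate integrability criteria from the cited references, which is where any genuine difficulty would concentrate.
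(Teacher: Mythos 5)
Your proposal is correct in the setting that the paper's citation actually covers (finite-dimensional Lie groups), but there is nothing in the paper to compare it against step by step: the paper gives no proof of Theorem~\ref{theorem:group_algebra_coh_isom} at all, only the pointer to \cite[Thm.~6.9.1]{AI1995}. What you have written is essentially the standard argument underlying that citation, and it holds together: reduce both sides to extension classes via Theorem~\ref{theorem:group_ext_coh} and its Lie algebra analogue; get injectivity of the derivation map from Theorem~\ref{theorem:cohom_hom} with $\pi_1(G)=0$; and prove surjectivity by integration -- Lie's third theorem gives the simply connected $\widehat{G}$, the projection integrates and is surjective because $G$ is connected, the homotopy exact sequence with $\pi_2(G)=0$ and $\pi_1(\widehat{G})=\pi_1(G)=0$ forces the kernel to be the vector group $(\lie{a},+)$, and the descent $\gext{G} \defeq \widehat{G}/\iota(\Gamma)$ works because the conjugation action of $\widehat{G}$ on the kernel and the lift of $\rho$ to $\lie{a}$ are homomorphisms of a \emph{connected} group with equal differentials, hence equal, so $\Gamma$ is conjugation-invariant. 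Two points deserve flagging. First, you assert rather than verify that the Lie functor on extensions intertwines the two classification isomorphisms with the derivation map of Theorem~\ref{theorem:cohom_hom}; this is standard but is a genuine lemma in a complete write-up. Second, and more importantly, both $\pi_2(G)=0$ and Lie's third theorem are finite-dimensional facts, whereas the paper invokes this theorem for infinite-dimensional current groups $\map{M,G}$; there surjectivity genuinely fails in general and must be replaced by Neeb's integrability criteria (the period homomorphism on $\pi_2(G)$ taking values in $\Gamma_A$, vanishing flux), see \cite{NEEB2004}. Your closing caveat identifies exactly this, and it should be read not as a removable technicality but as the reason the theorem, as applied later in the text, needs the more refined hypotheses discussed around Theorem~\ref{theorem:cohom_hom}.
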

\begin{proof}
See for instance~\cite[Thm.~6.9.1]{AI1995}.
\end{proof}

\subsection{Infinite-dimensional groups and their algebras}

We now turn to the Abelian extensions of infinite-dimensional Lie groups and algebras, such as is the case with current groups and algebras. Our canonical source is \cite{NEEB2004}; see also the extensive review \cite{NEEB2006}. Certain results needed in Chapter~\ref{chap:higher} come from \cite{MW2016}.

Let $H$ be a Lie group. Recall that an Abelian Lie group $A$ is a smooth $H$-module, if it is a $H$-module and the action map $H\times A \to A$ is smooth. Assuming that $N$ is a normal subgroup of $H$, the group of $N$-invariant elements of $A$ is denoted by
\[
	A^N = \{ a \in A : (\forall n \in N) ~ n.a = a \} ,
\]
which is a $H$-submodule of $A$.

We can then define more refined cohomology following~\cite[Appendix D]{NEEB2004}. Let us denote \term{smooth group cohomology} by $\coh^p_s(N,A)$ and \term{continuous Lie algebra cohomology} by $\coh^p_c(\lie{n},\lie{a})$, and similarly for all the groups involved (cochains, cocycles, coboundaries). The gist is that the maps $N^p \to A$ (resp. $\lie{n}^p \to \lie{a}$) are smooth (resp. continuous). The smoothness and continuity are defined \emph{locally} in a neighbourhood of the identity.

\begin{definition}
A cocycle $f \in \coc^p_s(N,A)$ is \termd{smoothly cohomologically invariant with respect to $H$} if there is a map
\[
	\phi: H \to \coch_s^{p-1}(N,A) \quad \text{such that} \quad  \ext(\phi(h)) = h.f - f \quad \forall h \in H ,
\]
and the map
\[
	H \times N^{p-1} \to A : (h,n_1,\dots,n_{p-1}) \mapsto \phi(h)(n_1,\dots,n_{p-1})
\]
is smooth in an identity neighbourhood of $H \times N^{p-1}$. This gives us \termd{smoothly invariant cohomology classes of $N$ with values in $A$}.
\end{definition}

\begin{remark}
  In the main text we implicitly assume the smoothness (resp. continuity) for the cohomology groups where applicable, and do not distinguish these groups in notation.
\end{remark}

A connection between the smooth group cohomology and continuous algebra cohomology characterising extensions can be made as follows.
\begin{theorem}[Cohomology homomorphism~{\cite[Thm. VII.2]{NEEB2004}}]\label{theorem:cohom_hom}
Let $N$ be a connected Lie group and $A \isom \lie{a}/\Gamma_A$ a smooth $N$-module, where $\Gamma_A \subset \lie{a}$ is a discrete subgroup of the sequentially complete locally convex space $\lie{a}$. Then there is an exact sequence
\[
	\hom(\pi_1(N),A^N) \to \coh^2_s(N,A) \to \coh^2_c(\lie{n},\lie{a}) .
\]
\end{theorem}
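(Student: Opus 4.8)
Although this result is imported wholesale from~\cite{NEEB2004}, the natural self-contained route is through the \emph{derivation} (van~Est) homomorphism $D\colon \coh^2_s(N,A) \to \coh^2_c(\lie{n},\lie{a})$, whose kernel is then computed. First I would construct $D$. Given a normalised smooth $2$-cocycle $f\colon N\times N \to A$, the quotient map $q\colon \lie{a}\to A\isom\lie{a}/\Gamma_A$ is a local diffeomorphism at the origin, so on an identity neighbourhood $f$ lifts to a smooth $\lie{a}$-valued map $\tilde f$. Differentiating along one-parameter subgroups I set
\[
  \omega_f(X,Y) \defeq \left.\frac{\partial^2}{\partial s\,\partial t}\right|_{0} \bigl( \tilde f(\exp sX,\exp tY) - \tilde f(\exp tY,\exp sX) \bigr)
\]
for $X,Y\in\lie{n}$; sequential completeness of $\lie{a}$ guarantees that these derivatives exist and are continuous. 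Differentiating the group cocycle identity for $f$ twice then yields the Lie algebra cocycle identity for $\omega_f$, so that $\omega_f\in\coc^2_c(\lie{n},\lie{a})$.

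Second, I would check that $D([f])\defeq[\omega_f]$ is well defined on cohomology. A smooth $1$-cochain $h\colon N\to A$ differentiates to a linear map $\lie{n}\to\lie{a}$ whose Lie algebra coboundary is exactly the derivative of the group coboundary $\cob h$; hence coboundaries map to coboundaries and $D$ descends to a homomorphism of cohomology groups.

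Third, and this is the crux, I would identify $\ker D$. If $[\omega_f]=0$, the Lie algebra extension $\lie{a}\to\gext{\lie{n}}_f\to\lie{n}$ splits, so the associated group extension $A\to\gext{N}_f\to N$ admits a \emph{local} homomorphic section $\sigma$ over an identity neighbourhood $U\subset N$ (here sequential completeness of $\lie{a}$ is what lets one integrate the split infinitesimal section). Since $N$ is connected, one continues $\sigma$ along paths; the failure of this continuation to be single-valued around a loop produces an element of $A$, and $N$-invariance of the discrepancy forces it to lie in $A^N$. This assignment depends only on the homotopy class of the loop, giving a homomorphism $\pi_1(N)\to A^N$, and one verifies that $[f]$ is trivial in $\coh^2_s(N,A)$ precisely when this monodromy homomorphism vanishes. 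Reading the construction backwards defines the map $\hom(\pi_1(N),A^N)\to\coh^2_s(N,A)$ and shows its image equals $\ker D$, which is exactly exactness at $\coh^2_s(N,A)$; that the composite of the two maps is zero is immediate, since a section built from pure monodromy has vanishing infinitesimal cocycle.

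The main obstacle is the integration step in the third paragraph: in finite dimensions the passage from a split Lie algebra extension to a local group homomorphism is Lie's second theorem, but for the infinite-dimensional (current and mapping) groups at hand one must invoke the regularity and integrability hypotheses under which $\lie{a}/\Gamma_A$ integrates to the smooth module $A$, and control the smoothness of $\sigma$ and of the monodromy map in an identity neighbourhood. Making the monodromy well defined and $A^N$-valued --- rather than merely $A$-valued --- is the delicate point, and it is precisely where connectedness of $N$ and the invariance built into $A^N$ are used.
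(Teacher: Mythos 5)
A preliminary remark: the thesis itself contains no proof of this statement --- it is imported verbatim as background, and the citation \cite[Thm.~VII.2]{NEEB2004} stands in for the argument --- so your sketch can only be measured against Neeb's proof. Its overall architecture (derivation map, flat extensions attached to $\pi_1(N)$, monodromy of homomorphic sections) does agree with his, but two of your steps have genuine problems.

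First, the step \enquote{the Lie algebra extension splits, hence the group extension admits a local homomorphic section, which I then continue along paths} is not a deferrable technicality: it is the entire analytic content of the theorem, and the hypotheses you propose to invoke for it (Lie's second theorem, regularity/integrability of the groups involved) are \emph{not} among the hypotheses of the statement, which assumes only that $N$ is connected and $\lie{a}$ is sequentially complete. If the proof genuinely required regularity of $N$, you would have proved a strictly weaker theorem. Neeb's argument avoids Lie~II for $N$ altogether: one pulls the cocycle back to the universal cover $\tilde{N}$, where the assertion reduces to injectivity of the derivation map on $\coh^2_s(\tilde{N},A)$, and proves that by integrating $\lie{a}$-valued data along paths and simplices --- an \emph{abelian} integration problem, which is exactly what sequential completeness of $\lie{a}$ licenses, with path-independence supplied by simple connectedness. (Relatedly, your formula for $\omega_f$ presupposes an exponential map through $\exp sX$, which a general locally convex Lie group need not possess; the derivation map is defined via second-order derivatives of the lifted cocycle at $(e,e)$.)

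Second, the biconditional \enquote{$[f]$ is trivial precisely when the monodromy homomorphism vanishes} is false, and it is what you lean on to conclude that the image equals $\ker D$. The monodromy depends on the chosen Lie algebra splitting and local section, not on $[f]$ alone: already for the trivial extension of $N=S^1$ by $A=\R/\Z$ (trivial action), the local homomorphic sections $t \mapsto (\beta t \bmod \Z,\, t)$ have arbitrary monodromy $\beta \bmod \Z$ while $[f]=0$. (This is consistent with Neeb's longer exact sequence, in which the kernel of $\hom(\pi_1(N),A^N)\to\coh^2_s(N,A)$ is the image of a flux map from $\coh^1_c(\lie{n},\lie{a})$, in general nonzero.) What exactness actually requires, and what your construction can deliver, is the identity $[f]=P(\gamma)$, where $P\colon \hom(\pi_1(N),A^N)\to\coh^2_s(N,A)$ is the pushout map and $\gamma$ is the monodromy of your chosen section; this must be proved directly, e.g.\ by exhibiting an equivalence of extensions $(\tilde{N}\ltimes A)/\{(d,\gamma(d)^{-1}) : d \in \pi_1(N)\} \isom A\times_f N$ over $N$, rather than routed through the faulty \enquote{iff}.
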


In Chapter~\ref{chap:higher} we discuss the lifting of a group action to an Abelian extension. Consider an Abelian Lie group extension
\[
  \exact{A}{\gext{N}}{N} ,
\]
where $N \subset H$ is a normal split subgroup of some Lie group $H$. Let us denote by $\aut(\gext{N},A)$ the group formed from automorphisms of the extension $\gext{N}$ that preserve the split Lie subgroup $A$. To this extension we want to lift the action given by
\[
  \psi: H \to \aut(A) \times \aut(N) .
\]
Define a map $\ext_{\psi}$ as the coboundary operator in the cochain complex of maps $f : H^p \to \coch_s^1(N,A)$, in relation to the action of the group $H$ on $\coch_s^1(N,A)$ by $h.f = \psi(h).f$. We then have the following result.

\begin{proposition}[Lifting homomorphism~{\cite[Prop. 3.8]{MW2016}}]\label{proposition:aut_lift}
Let $H$ be a Lie group, $N$ a connected normal Lie subgroup of $H$, $\theta \in \coc^2_s(N,A)$ a smooth $2$-cocycle and $\gext{N}$ the corresponding Lie group extension by an Abelian group $A$. Then the smooth group homomorphism $\psi: H \to \aut(A) \times \aut(N)$ lifts to a smooth homomorphism $\gext{\psi}:H\to \aut(\gext{N},A)$ if and only if
\begin{enumerate}
	\item $\theta$ is smoothly cohomologically invariant with respect to $H$, and
	\item the corresponding cohomology class $[\ext_{\psi}\phi] \in \coh^2_s(H,\coc^1_s(N,A))$ is trivial, where the $1$-cocycle $\phi$ is defined via $\ext_N (\phi(h)) = h.\theta - \theta$ for any $h \in H$.
\end{enumerate}
\end{proposition}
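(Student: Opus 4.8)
The plan is to realise automorphisms of the extension $\gext{N}$ explicitly through cochains and then to run a two-stage obstruction argument, mirroring the classical way a second cohomology class obstructs rectifying a ``homomorphism up to cocycles'' into a genuine one. Throughout I write $\psi(h) = (\psi_A(h),\psi_N(h))$ for the two components of the action and $n.a$, $h.\theta$ for the relevant module actions.

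First I would fix the concrete model $\gext{N} = A \times_{\theta} N$ supplied by Theorem~\ref{theorem:group_ext_coh}, with product $(a,n)(a',n') = (a + n.a' + \theta(n,n'),\, nn')$, and determine the general shape of an automorphism of $\gext{N}$ that preserves the split subgroup $A$ and covers a prescribed pair $\psi(h)$. A short direct computation shows that every such automorphism has the form
\[
  \Phi_h(a,n) = \big(\psi_A(h)(a) + \beta_h(n),\, \psi_N(h)(n)\big)
\]
for a smooth $1$-cochain $\beta_h \in \coch^1_s(N,A)$, and that $\Phi_h$ respects the group law precisely when $\ext_N \beta_h = h.\theta - \theta$. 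Hence a single smooth automorphism covering $\psi(h)$ exists exactly when $h.\theta - \theta$ is a smooth $N$-coboundary; demanding this coherently and smoothly in $h$ is exactly condition~(1), the smooth cohomological invariance of $\theta$ in the sense of the Definition preceding Theorem~\ref{theorem:cohom_hom}. This also identifies the family $\phi(h) \defeq \beta_h$ named in the statement, determined through $\ext_N(\phi(h)) = h.\theta - \theta$.

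Next I would measure the failure of $h \mapsto \Phi_h$ to be a homomorphism. Since $\Phi_{h_1}\Phi_{h_2}$ and $\Phi_{h_1 h_2}$ both cover $\psi(h_1 h_2)$, their comparison $\Phi_{h_1}\Phi_{h_2}\inv{\Phi_{h_1 h_2}}$ covers the identity on both $N$ and $A$ and is therefore encoded by an element $c(h_1,h_2)\in \coc^1_s(N,A)$. One checks that $c$ is a smooth $2$-cocycle on $H$ with values in the $H$-module $\coc^1_s(N,A)$ and that $c = \ext_{\psi}\phi$, where $\ext_{\psi}$ is the coboundary of the complex $\coch^{\bullet}_s(H,\coc^1_s(N,A))$ built from the $\psi$-action; replacing the $\beta_h$ by other trivialisations alters $\ext_{\psi}\phi$ only by an $\ext_{\psi}$-coboundary, so the class $[\ext_{\psi}\phi]\in \coh^2_s(H,\coc^1_s(N,A))$ is well defined. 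If a smooth lift $\gext{\psi}$ already exists, reading off its defining cochains produces both a smooth trivialising family $\phi$ (so (1) holds) and the identity $\ext_{\psi}\phi = 0$ (so (2) holds), giving necessity. Conversely, assuming (1) and (2), choose $\lambda\colon H \to \coc^1_s(N,A)$ with $\ext_{\psi}\lambda = \ext_{\psi}\phi$ and replace $\beta_h$ by $\beta_h - \lambda(h)$; since each $\lambda(h)$ is an $N$-cocycle, the relation $\ext_N \beta_h = h.\theta - \theta$ is preserved, while the corrected family now composes strictly and yields the desired smooth homomorphism $\gext{\psi}\colon H \to \aut(\gext{N},A)$.

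The hard part will not be the algebra of this obstruction sequence, which is formally the Eilenberg--MacLane rectification argument, but the smoothness bookkeeping proper to the infinite-dimensional, locally convex setting of \cite{NEEB2004}. One must verify that the trivialising cochains $\phi(h)$ can be selected to depend smoothly on $h$ in an identity neighbourhood, that $(h_1,h_2)\mapsto c(h_1,h_2)$ is smooth as a map into $\coc^1_s(N,A)$, and that the correction $\lambda$ produced by the vanishing class is itself smooth, so that the resulting $\gext{\psi}$ genuinely lies in $\aut(\gext{N},A)$ as a smooth group homomorphism. Controlling these \emph{local} smoothness conditions, rather than any new cohomological idea, is where the real effort resides, and it is precisely the content carried by \cite{NEEB2004} and \cite{MW2016}.
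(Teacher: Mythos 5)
The paper offers no proof of this proposition: it is imported directly from \cite[Prop.~3.8]{MW2016} (with only a remark translating notation), so there is no internal argument to compare yours against. Your reconstruction is the standard Eilenberg--MacLane-type obstruction argument that the cited source itself runs: model $\gext{N}$ as $A \times_{\theta} N$, realise automorphisms covering $\psi(h)$ via trivialising cochains for $h.\theta - \theta$ (condition~(1)), identify the multiplicativity defect as an element of $\coc^1_s(N,A)$ equal to $\ext_{\psi}\phi$, and correct the family by a cochain $\lambda$ when the class $[\ext_{\psi}\phi]$ vanishes (condition~(2)); the verification that the defect is an $N$-cocycle and that the class is independent of the choice of trivialisation are both present and correct. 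The only substantive matter you defer --- the local smoothness bookkeeping in the locally convex setting, i.e.\ that $\phi$, the defect $2$-cocycle, and the correcting $\lambda$ can all be chosen smooth near the identity so that the lift lands in $\aut(\gext{N},A)$ as a smooth homomorphism --- is exactly the content carried by \cite{MW2016} and \cite{NEEB2004}, and flagging rather than reproving it is reasonable for a sketch at this level.
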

In \cite{MW2016}, the corresponding notation is $\ext_{S_{\psi}}$, where $S_{\psi}$ denotes the action.

\idxloppu{Lie algebra!cohomology of}
\idxloppu{Lie group!cohomology of}

\section{Sheaves of groups and \v{C}ech cohomology} %
\idxalku{Ce@\v{C}ech cohomology}

The cohomology classification of (bundle) gerbes rests on defining cocycles via local transition functions on a covering of the base manifold $X$: this leads to cohomology theories with respect to sheaves of Abelian groups. We mostly rely on \cite{BRYLINSKI1993} as a source on sheaf and \v{C}ech cohomology with focus on the gerbe-theoretic applications.

Let $X$ be a topological space. For any open set $\U$ of $X$, define a set $\F(\U)$ (with $\F(\emptyset) = \{0\}$ defined as a one-element set). If then $\U_j \subset \U_i$ is an inclusion of open sets, we further define a \term{restriction map}%
  \footnote{In the index $\rho_{\U_j,\U_i}$ we will abbreviate $(\U_j,\U_i)$ by $(j,i)$ if possible without creating undue confusion. We will also abbreviate $\U_i\cap\U_j$ by $\U_{i,j}$.}
\[
  \rho_{j,i}: \F(\U_i) \to \F(\U_j) .
\]
Now, if for open sets $\U_k \subset \U_j \subset \U_i$ the restriction map satisfies the \emph{transitivity condition}
\[
  \rho_{k,i} = \rho_{k,j}\rho_{j,i} \quad \text{and} \quad \rho_{i,i} = 1 ,
\]
the collection $\F$ of sets is called a \term{presheaf of sets} over $X$. A set $\F(\U)$ is called a \term{section of $\F$} over $\U$.


Let $\U$ be an open set of $X$, and let $\{\U_i\}$ be its open covering. Consider a family $\{\sigma_i\}$ of elements $\sigma_i \in \F(\U_i)$ satisfying the \emph{glueing condition}
\[
  \rho_{\U_{i,j},\U_i}(\sigma_i) = \rho_{\U_{i,j}, \U_j}(\sigma_j) .
\]
If for all such families there exists a unique $\sigma \in \F(\U)$ such that $\rho_{\U_i,\U} = \sigma_i$, the presheaf $\F$ is called a \term{sheaf} over $X$. %
\idx{sheaf}

Let then $Y$ be a topological space. Any presheaf $\F$ of continuous maps from open sets of $X$ to $Y$ is also a sheaf (of maps). For example, when $X$ is a smooth manifold, we have the sheaf of complex-valued functions $\sheaf{\C}$ on $X$. In particular $Y$ can be a topological group, and then the sheaf is called a sheaf of groups.
In the following we assume that $X$ is a smooth manifold and $Y=A$ is an Abelian group.

There exists a cohomology theory defined as a functor from the category of sheaves of Abelian groups to the category of Abelian groups. However, we will here opt to introduce \v{C}ech cohomology instead. Its construction is more explicit, and the theory provides useful computational tools for sheaf cohomology. For details about the construction of the sheaf cohomology we refer to \cite[Sec.~1.1]{BRYLINSKI1993}.

Consider a presheaf $\sheaf{A}$ of Abelian groups with respect to an open covering $\U = \{\U_i\}_{i\in\J}$. Let us write $\U_{i_0,\dots,i_p}$ for an finite intersection of open sets $\U_{i_0}\cap \U_{i_1} \cap \cdots \cap \U_{i_p}$. Given $p\in \N_0$, we define the group of \term{$p$-cochains} as
\[
  \coch^p(\U,\sheaf{A}) = \prod_{i_0,\dots,i_p} \sheaf{A}(\U_{i_0,\dots,i_p}) .
\]
Thus a $p$-cochain $\sheaf{\alpha}$ is a family of maps $\alpha_{i_0,\dots,i_p} \in \sheaf{A}(\U_{i_0,\dots,i_p})$. A \term{coboundary map} $\coch^p(\U,\sheaf{A})  \to \coch^{p+1}(\U,\sheaf{A})$ can then be defined as
\[
  \cob(\sheaf{\alpha})_{i_0,\dots,i_{p+1}} = \sum_{k=0}^{p+1} (-1)^k (\alpha_{i_0,\dots,i_{j-1}, i_{j+1},\dots,i_{p+1}})_{|\U_{i_0,\dots,i_{p+1}}} ,
\]
where the sum is taken on a restriction to the intersection $\U_{i_0,\dots,i_{p+1}}$. As usual, the $(p+1)$-cochain $\cob(\sheaf{\alpha})$ is called the \term{coboundary} of $\sheaf{\alpha}$, and if $\cob(\sheaf{\alpha}) = 0$, the $p$-cochain $\sheaf{\alpha}$ is called a \term{cocycle}.

It is not difficult to see that $\cob^2 = 0$~\cite[Prop.~1.3.1, p.~25]{BRYLINSKI1993}, and we can define the cohomology with respect to the covering:
\begin{definition}
  The complex of groups
  \[
    \begin{tikzcd}
      \dots \arrow[r,"\delta"] & \coch^p(\U,\sheaf{A}) \arrow[r,"\delta"]  & \coch^{p+1}(\U,\sheaf{A}) \arrow[r,"\delta"] & \dots
    \end{tikzcd}
  \]
  defines the \termd{\v{C}ech cohomology groups of the covering $\U$} with coefficients in the (pre)sheaf $\sheaf{A}$, denoted by $\ccoh^p(\U,\sheaf{A})$.
\end{definition}
This definition can be extended from a given covering to the whole space $X$. We define:
\begin{definition}
  The degree $p$ \term{\v{C}ech cohomology group $\ccoh^p(X,\sheaf{A})$} is the direct limit
  \[
    \ccoh^p(X,\sheaf{A}) = \lim_{\U} \ccoh^p(\U,\sheaf{A})
  \]
  taken over the ordered set of open coverings $\U$ of $X$.
\end{definition}
What we mean by the ordered set of coverings is the following. An open covering $\V = \V_{i\in\I}$ is defined to be \term{finer} than $\U$ if there is a map $\phi:\I \to \J$ such that $\V_i \subset \U_{\phi(i)}$ for all $i\in\I$: this defines an order relation $\U \prec \V$ in the set of all open coverings of $X$. This ordering then induces a morphism of \v{C}ech complexes
\[
  \phi_* : \coch^{}(\U,\sheaf{A}) \to \coch^{}(\V,\sheaf{A})
\]
satisfying, for any given $p$-cochain $\sheaf{\alpha}$ for the covering $\U$, the following condition:
\[
  \phi_*(\sheaf{\alpha})_{i_0,\dots,i_p} = (\alpha_{\phi(i_0),\dots,\phi(i_p)})_{|\V_{i_0,\dots,i_p}} .
\]

Finally, and while we have not explicitly defined the sheaf cohomology groups $\coh^{p}(X,\sheaf{A})$, we can note the following useful results connecting \v{C}ech cohomology and sheaf cohomology.~\cite[Prop.~1.3.4 and Thm.~1.3.13]{BRYLINSKI1993}
\begin{theorem}
  There is a canonical group homomorphism
  \[
    \ccoh^{p}(\U,\sheaf{A}) \to \coh^{p}(X,\sheaf{A})
  \]
  If $X$ is paracompact, this map is an isomorphism.
\end{theorem}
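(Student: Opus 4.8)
The plan is to realise the comparison map as the edge homomorphism of a double complex built from a flasque resolution of $\sheaf{A}$, to check that it is compatible with refinement so that it descends to the direct limit, and finally to force the associated spectral sequence to degenerate when $X$ is paracompact.

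First I would fix a flasque resolution $0 \to \sheaf{A} \to \mathcal{R}^{\bullet}$, so that the sheaf cohomology is computed as $\coh^t(X,\sheaf{A}) = \coh^t\bigl(\Gamma(X,\mathcal{R}^{\bullet})\bigr)$. For a fixed cover $\U$ I assemble the double complex $K^{s,t} = \coch^s(\U,\mathcal{R}^t)$, carrying the \v{C}ech differential $\cob$ in the $s$-direction and the resolution differential in the $t$-direction. Taking cohomology in the \v{C}ech direction first, flasqueness (which makes each $\mathcal{R}^t$ acyclic for \v{C}ech cohomology on any cover) kills every column except $s=0$, and the surviving column computes $\coh^t\bigl(\Gamma(X,\mathcal{R}^{\bullet})\bigr) = \coh^t(X,\sheaf{A})$; hence the total cohomology of $K^{\bullet,\bullet}$ is $\coh^{\bullet}(X,\sheaf{A})$. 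Taking cohomology in the resolution direction first instead yields the second page $E_2^{s,t} = \ccoh^s(\U,\mathcal{H}^t)$, where $\mathcal{H}^t$ is the presheaf $\V \mapsto \coh^t(\V,\sheaf{A})$. The edge homomorphism $E_2^{s,0} = \ccoh^s(\U,\sheaf{A}) \to \coh^s(X,\sheaf{A})$ is precisely the asserted canonical map. Since the whole construction is natural with respect to the refinement morphisms $\phi_*$ introduced above, these maps are compatible with the direct system and descend to $\ccoh^s(X,\sheaf{A}) \to \coh^s(X,\sheaf{A})$, with no hypothesis on $X$.

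For the isomorphism I would show that, after passing to the limit over refinements, the spectral sequence collapses onto the column $t=0$. Concretely it suffices to prove $\lim_{\U}\ccoh^s(\U,\mathcal{H}^t)=0$ for every $t \geq 1$, since then $E_2^{s,t}$ survives only for $t=0$ and the edge map is forced to be an isomorphism. The key structural fact is that each $\mathcal{H}^t$ with $t \geq 1$ is a \emph{locally zero} presheaf: any positive-degree sheaf-cohomology class restricts to zero over sufficiently small open sets, because sheaf cohomology is locally trivial. Paracompactness then converts this local vanishing into the global statement, using that every open cover of a paracompact space admits a locally finite refinement that can be shrunk.

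The principal obstacle is exactly this vanishing lemma. Given a \v{C}ech cocycle valued in a locally zero presheaf, one must exhibit a single refinement on which it becomes a coboundary simultaneously over all the finite intersections indexing the nerve. The local finiteness supplied by paracompactness is what permits such a uniform choice, since each point then meets only finitely many members of the cover and the local trivialisations can be glued coherently. Once this lemma is in hand the degeneration of the spectral sequence, and therefore the isomorphism, is formal.
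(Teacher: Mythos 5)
Your proposal is correct, but there is nothing in the paper to compare it against line by line: the theorem is stated there without proof, being quoted directly from \cite[Prop.~1.3.4 and Thm.~1.3.13]{BRYLINSKI1993}. What you give is the standard Cartan--Leray/Godement argument underlying that citation: the double complex $\coch^s(\U,\mathcal{R}^t)$ of a flasque resolution, whose two filtrations yield on one side $\coh^{\bullet}(X,\sheaf{A})$ (using that flasque sheaves are \v{C}ech-acyclic on an arbitrary cover) and on the other $E_2^{s,t} = \ccoh^s(\U,\mathcal{H}^t)$, with the comparison map appearing as the edge homomorphism; the isomorphism then rests on exactly the paracompactness lemma you isolate, that a presheaf with vanishing sheafification has vanishing \v{C}ech cohomology in the limit over covers. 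Two remarks. First, your phrase ``sheaf cohomology is locally trivial'' is true but deserves its one-line justification, since as written it sounds circular: the stalk of $\mathcal{H}^t$ at $x$ is $\varinjlim_{\V\ni x}\coh^t\bigl(\Gamma(\V,\mathcal{R}^{\bullet})\bigr)$, which by exactness of filtered colimits equals $\coh^t(\mathcal{R}^{\bullet}_x)$, and this vanishes for $t\geq 1$ because a resolution of sheaves is exact on stalks; so $\mathcal{H}^t$ has zero sheafification with no hypothesis on $X$. Second, you read the statement the right way: for a \emph{fixed} cover the map $\ccoh^p(\U,\sheaf{A})\to\coh^p(X,\sheaf{A})$ is generally not an isomorphism even on paracompact spaces (one needs a Leray-type cover), so the isomorphism claim must concern the direct limit $\ccoh^p(X,\sheaf{A})$ --- which is how Brylinski states it and exactly what your argument proves.
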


\idxloppu{Ce@\v{C}ech cohomology}
%



\cleardoublepage
\phantomsection
\addcontentsline{toc}{chapter}{\bibname}
\printbibliography


\end{document}